\definecolor{newcolor}{rgb}{.8,.349,.1}
\newtheorem{theorem}{Theorem}[section]
\newtheorem{corollary}{Corollary}[theorem]
\newtheorem{lemma}[theorem]{Lemma}
\theoremstyle{definition}
\newtheorem{definition}{Definition}[section]
\newcommand{\dt}{\Delta t}
\newcommand{\dx}{\Delta x}
\newcommand{\dy}{\Delta y}
\newcommand{\dz}{\Delta z}
\newcommand{\diagon}[1]{\mathbf \Lambda_{#1}}
\newcommand{\volSec}{\diagon{V}''}
\newcommand{\areaSec}{\diagon{S}''}
\newcommand{\lenPrim}{\diagon{l}'}
\newcommand{\Lbot}{\mathbf L}
\newcommand{\ndot}{\diagon{(\hat n\cdot)}}
\newcommand{\areaSecBndry}{\diagon{S,b}''}
\newcommand{\HH}{\mathbf H}
\newcommand{\PSI}{\bm {\uppsi}}
\newcommand{\D}{\mathbf D}
\newcommand{\I}{\mathbf I}
\newcommand{\W}{\mathbf W}
\newcommand{\J}{\mathbf J}
\newcommand{\Pbig}{\bm{\mathcal P}}
\newcommand{\PSIbig}{\bm{\psi}}
\begin{document}

\title{Conservation properties of a leapfrog finite-difference time-domain method for the Schr\"odinger equation}

\author[1]{Fadime Bekmambetova}
\author[1]{Piero Triverio\footnote{E-mail: piero.triverio@utoronto.ca}}
%
\affil[1]{The Edward S. Rogers Sr. Department
	of Electrical \& Computer Engineering, University of Toronto}

\maketitle

\begin{abstract}
We study the probability and energy conservation properties of a leap-frog finite-difference time-domain (FDTD) method for solving the Schr\"odinger equation. We propose expressions for the total numerical probability and energy contained in a region, and for the flux of probability current and power through its boundary. We show that the proposed expressions satisfy the conservation of probability and energy under suitable conditions. We demonstrate their connection to the Courant-Friedrichs-Lewy condition for stability. We argue that these findings can be used for developing a modular framework for stability analysis in advanced algorithms based on FDTD for solving the Schr\"odinger equation.

\end{abstract}

{
	
	\vspace{0.5cm}
\raggedright
\textbf{Keywords:} FDTD, Schr\"odinger equation, probability conservation, energy conservation, stability

}

\section{Introduction} 

The finite-difference time-domain (FDTD) algorithm is a popular numerical method for solving Maxwell's equations~\cite{taflove-computational-3rd-2005}. The leap-frog FDTD approach has also been proposed for solving the Schr\"odinger equation~\cite{visscher-1991-cp, sullivan-2000-book-em-sim-fdtd-for-other-types-sim,soriano-analysis-2004}. 
This scheme, which we will refer to as quantum FDTD (FDTD-Q), has since been used in various applications. For example, in \cite{sullivan-qd-2001} it was applied to model two electrons in a quantum dot,
in \cite{sullivan-2005-jap-efncs-nanoscale} it formed the core of the method for determining the eigenstates of arbitrary nanoscale structures, in \cite{zhao-2017-sr-antireflect} it was used for studying anti-reflective coating models,
and in \cite{castellanos-jaramillo-2018-ejp-diffraction} for simulating an electron diffraction through a double slit. Different properties of the FDTD-Q method have been studied, such as accuracy and stability~\cite{visscher-1991-cp,soriano-analysis-2004,dai-stability-2005,nagel-2009-aces}. Other time-domain techniques for the Schr\"odinger equation that are based on finite differences include non-leap-frog implicit~\cite{visscher-1991-cp, subasi-2002-nmpde} and explicit~\cite{zhidong-2009-js-abc,ryu-fdtd-quantum-paper-2016} approaches, as well as higher order methods~\cite{decleer-2021-jcam,shen-2013-cpc-ho-sfdtd}.

In the continuous domain, solutions of the Schr\"odinger equation respect the principle of conservation for both probability and energy. When one considers the entire space, the total probability of finding a particle must be constant and, with proper normalization, equal to one~\cite{miller-2008}. Similarly, the amount of energy (or more precisely the expectation value of energy) associated with the particle in a time-invariant potential must stay constant~\cite{miller-2008,chaus-1992-umj-energy-flux-quantum,hong-2006-anm}. The probability and energy would also stay constant when one considers a region that is isolated from the surrounding space, for example an infinite potential well~\cite{miller-2008,chaus-1992-umj-energy-flux-quantum}.
In general, when the Schr\"odinger equation is discretized, the conservation properties are not guaranteed to be preserved. Proving that a discretization method conserves probability and energy can be done by finding discrete counterparts of these quantities and demonstrating that they remain unchanged from one time step to the next~\cite{
	kosloff-1983-jcp,
	visscher-1991-cp, 
	fei-1995-amc-conservative,
	huang-1999-jcm,
	hong-2006-anm,
	zhu-2011-anm-symplectic,
	moxley-2013-cpc,
	wang-2015-jcp,
	ertug-2016-amcs,
	feng-2019-ccp-me-conserv,
	feng-2021-siam-na-high-order}.
This task is not trivial for the case of FDTD-Q due to the staggered sampling of the real and imaginary parts of the wavefunction.
In relation to the conservation of probability, in \cite{visscher-1991-cp} two approximations were proposed for the probability density in one-dimensional FDTD-Q, which  were argued, though without detailed proof, to conserve the principle of probability conservation. Regarding energy conservation, we are not aware of any work that studies this property specifically in the context of leap-frog FDTD-Q. However, many works have investigated the conservation of energy for other time-domain methods for the linear~\cite{kosloff-1983-jcp,leforestier-1991-jcp,hong-2006-anm,zhu-2011-anm-symplectic} and nonlinear~\cite{fei-1995-amc-conservative, huang-1999-jcm,hong-2006-anm,ertug-2016-amcs,barletti-2018-amc-energy-cons-nl,feng-2019-ccp-me-conserv,feng-2021-siam-na-high-order,zhu-2011-anm-symplectic} Schrödinger equation. 
Approaches based on symplectic integrators~\cite{yoshida-1993-cmda-progress-symplectic,kong-2007-amc-symplectic} have been proposed for solving the Schr\"odinger equation~\cite{gray-1996-jcp-symplectic-schrodinger,blanes-2006-jcp-symplectic,shen-2013-cpc-ho-sfdtd,huang-2015-cma-schrodinger-symplectic,zhu-2011-anm-symplectic}.
Symplectic algorithms can be constructed to conserve energy~\cite{yoshida-1993-cmda-progress-symplectic,zhu-2011-anm-symplectic} by preserving the symplectic structure of the continuous equations. Symplectic integrators give rise to a wide class of methods with different temporal discretization, including the leap-frog approach~\cite{yoshida-1993-cmda-progress-symplectic,gray-1996-jcp-symplectic-schrodinger,blanes-2006-jcp-symplectic}. However, to the best of our knowledge, they have not been applied to analyze the conservation properties of the FDTD-Q scheme considered in this paper.

The works in the literature on conservation properties of numerical methods for the Schr\"odinger equation typically assume either zero~\cite{visscher-1991-cp,wang-2015-jcp,feng-2019-ccp-me-conserv,feng-2021-siam-na-high-order} or periodic boundary conditions~\cite{kosloff-1983-jcp,visscher-1991-cp,huang-1999-jcm,zhu-2011-anm-symplectic,ertug-2016-amcs,barletti-2018-amc-energy-cons-nl,feng-2019-ccp-me-conserv}. 
Such boundary conditions imply that the energy and probability contained in the region stay constant with time. However, there is a motivation for studying the conservation properties for the general case where probability and energy can flow from the region into the surrounding space and vice versa. For example, some simulation scenarios~\cite{sullivan-2005-jap-efncs-nanoscale,nagel-2009-aces,zhidong-2009-js-abc,decleer-2021-jcam} involve absorbing boundary conditions meant to model unbounded domains. One may also be interested in quantifying the energy and probability in a sub-region of a larger system. Examples include studies of tunneling phenomena in hydrogen transfer reactions~\cite{cabonell-kostin-1973-ijqc-tunneling} or in quantum dot potential wells~\cite{holovatsky-2016-phys-e}. Consistency with physical laws is an important accuracy criterion for numerical methods. Hence, expressions approximating the probability and energy in a sub-region should, ideally, obey the discrete counterparts of the principle of probability and energy conservation, in addition to being close in values to the analytical solutions. 

A mechanism for the numerical probability and energy to leave or enter the region introduces new challenges in the analysis of the conservation properties. In particular, one needs to (i)~quantify the rate at which the exchange of probability and energy occurs with the surrounding space and (ii)~show that this rate balances with the rate of change of probability and energy stored in the region. Moreover, one needs to (iii)~ensure that the region is unable to provide indefinite amounts of energy and probability to the surrounding space. In this study, these three challenges are systematically addressed. The work by Fei, Pérez-García, and Vázquez~\cite{fei-1995-amc-conservative} should be mentioned, as it provides an investigation of the form of the nonlinear Schrödinger equation that allows the total charge to vary with time. In~\cite{fei-1995-amc-conservative}, questions similar to (i) and (ii) are addressed for a scheme that involves a variation of the Crank-Nicholson discretization in time and centered finite-difference discretization in space.

Moreover, in view of scenarios where a region is part of a larger setup, one may wish to study the conservation properties of a numerical method without knowing a priori what that region is connected to. This approach has proven useful in facilitating stability analysis and enforcement in FDTD for the Maxwell equations~\cite{jnl-2018-fdtd-3d-dissipative,jnl-tap-2018-fdtd-mor}. The basis for using conservation arguments for stability analysis lies in the fact that even a small violation of energy or probability conservation provides the growth mechanism that can lead to numerical solutions that grow indefinitely, which constitutes instability. This connection between the conservation properties and stability has been recognized for the case of FDTD for Maxwell's equations~\cite{kung-2003-pier,edelvik2004general,jnl-2018-fdtd-3d-dissipative} but not for the case of the leap-frog FDTD-Q scheme for the Schrödinger equation. By ensuring that an FDTD region respects the conservation properties, one guarantees that this region would not contribute to instability when integrated in a larger setup. An advantage of the conservation approach is that this guarantee can be made without having knowledge of the surrounding space, which could involve a grid of different resolution~\cite{salehi-2020-jce-polar,okoniewski-3d-subgridding-1997,jnl-2018-fdtd-3d-dissipative}, a reduced order model~\cite{kulas-macromodels-2004,jnl-tap-2018-fdtd-mor}, a representation of an open boundary~\cite{sullivan-2005-jap-efncs-nanoscale,nagel-2009-aces,zhidong-2009-js-abc,decleer-2021-jcam}, or another model.

Derivations of stability conditions for FDTD-Q have been performed using approaches related to the von Neumann analysis, which involve the investigation of temporal growth of plane wave solutions~\cite{visscher-1991-cp,soriano-analysis-2004,nagel-2009-aces}. These methods are meant for simple scenarios involving constant potential and uniform discretization, where the plane wave functions are valid solutions to the discretized equations. In \cite{dai-stability-2005}, stability conditions were derived by studying the time evolution of the norm of the error between two solutions. Both non-uniform and time-varying potentials were considered, making the proofs more general than those obtained using von Neumann-type analyses. However, the derivations in \cite{dai-stability-2005} are not applicable to schemes where an FDTD-Q region is finite and is coupled to models other than a restricted set of boundary conditions. Another method involves analyzing the eigenvalues of the so-called iteration matrix or system amplification matrix. The iteration matrix method has been used in~\cite{decleer-2021-jcam} for deriving stability limits of schemes closely related to FDTD-Q. The approach could be used to analyze scenarios where an FDTD-Q region is part of a larger setup consisting of multiple parts. However, in general, the eigenvalues would need to be studied for the matrix corresponding to the entire coupled scheme~\cite{wang-2007-aces}, which can make the analysis challenging. The conservation approach to stability analysis can circumvent this issue. 
Lastly, it should be noted that the conservation argument for stability has appeared in the literature on other time-domain numerical methods for the linear and nonlinear Schrödinger equations~\cite{fei-1995-amc-conservative,hong-2006-anm,wang-2015-jcp}.

This work presents a systematic study of probability and energy conservation in FDTD-Q for an open region, extending the energy conservation and dissipativity approaches developed previously for electromagnetics~\cite{kung-2003-pier,kung-2005-mtt, edelvik2004general,jnl-2018-fdtd-3d-dissipative,lett-2022-fdtd-potentials-awpl}. The concepts in this work take root in the theory of dissipative systems~\cite{willems1972dissipative,byrnes1994losslessness}. We formulate the FDTD-Q equations for a region, introducing unknowns on the boundary~\cite{venkatarayalu2007stable,jnl-2017-fdtd-dissipative} that allow quantifying the energy and probability exchange with the space outside the boundary. We propose expressions for discrete probability and energy, as well as particle current and supplied power. 
Using these expressions, we derive the conditions for the conservation of probability and energy and reveal that they are related to the Courant-Friedrichs-Lewy (CFL) condition that is traditionally understood as a condition for ensuring stability of an isolated system~\cite{soriano-analysis-2004}. For the case of the basic FDTD-Q scheme in an isolated region, our approach can serve as an alternative derivation of the CFL limit, which we illustrate in the paper. Moreover, in contrast to the traditional approaches of stability analysis~\cite{visscher-1991-cp,soriano-analysis-2004,nagel-2009-aces,dai-stability-2005, decleer-2021-jcam}, the conservation approach allows making conclusions on whether the region is capable of destabilizing a simulation, prior to having any knowledge of how the region is terminated or what model is used to describe the space outside the region's boundary~\cite{jnl-2018-fdtd-3d-dissipative}. 
Lastly, we verify that the discrete expressions serve as an accurate approximation of their continuous counterparts, with the conservation properties being an obvious advantage over other possible expressions. 

This paper is organized as follows. Section~\ref{sec:background} provides background information on the leap-frog FDTD-Q method in the literature. Section~\ref{sec:equations-rgn} describes the equations for the region, with modifications on the boundary to allow the probability and energy to travel through the boundary. Sections~\ref{sec:probability-conserv} and \ref{sec:energy-conserv}, respectively, analyze the discrete conservation of probability and energy for the region. Section~\ref{sec:stability} discusses how the proposed theory could be used for stability analysis and enforcement. Section~\ref{sec:ne} provides numerical examples and Section~\ref{sec:conslusions} concludes the paper.

\section{Background}
\label{sec:background}

This section describes the FDTD-Q method~\cite{visscher-1991-cp,sullivan-2000-book-em-sim-fdtd-for-other-types-sim,soriano-analysis-2004}, which is taken as the starting point in this work.
The method solves the Schr\"odinger equation, which reads
\begin{subequations}
	\begin{equation}
		\label{eq:cont-a}
		\hbar \frac{\partial \psi_R}{\partial t} = -\frac{\hbar^2}{2m} \nabla^2 \psi_I 
		+ U\psi_I \,, 
	\end{equation}
	\begin{equation}
		\label{eq:cont-b}
		\hbar \frac{\partial \psi_I}{\partial t} = \frac{\hbar^2}{2m} \nabla^2 \psi_R
		- U \psi_R \,,
	\end{equation}
\end{subequations}
where $\psi_R(x,y,z,t)$ and $\psi_I(x,y,z,t)$ are the real and imaginary parts of the wavefunction, respectively, $m$ is the mass of the particle, $\hbar$ is the reduced Planck's constant, and $U(x,y,z)$ is the potential energy profile. 

A rectangular region divided into $n_x\times n_y \times n_z$ primary cells\footnote{The concept of primary and secondary grids comes from FDTD in electromagnetics~\cite{gedney-2011}.} with dimensions $\dx \times \dy \times \dz$, as shown in Fig.~\ref{fig:primary-secondary}. The edges of the primary cells are called primary edges, which are oriented in the $+x$, $+y$, and $+z$ directions. The nodes at the corners of the primary cells are referred to as primary nodes. The primary nodes are indexed as $(i,j,k)$ from $(1,1,1)$ to $(n_x+1, n_y+1, n_z+1)$, where $(i,j,k)$ corresponds to coordinates $x=(i-1)\Delta x$, $y=(j-1)\Delta y$, $z=(k-1) \Delta z$. The time is divided into $n_t$ uniform time steps of size $\Delta t$, with temporal index $n$ denoting $t = n\Delta t$.
Both $\psi_R$ and $\psi_I$ are sampled at the primary nodes. The real part of the wavefunction $\psi_R$ is sampled at the integer time steps $n$ in $\{0, 1, \hdots, n_t\}$ and the imaginary part $\psi_I$ is sampled at the time instances shifted by half a time step, namely $\{-0.5, 0.5, \hdots, n_t-0.5\}$. Using the centered differences to discretize the time derivatives and Laplacian operators in \eqref{eq:cont-a}--\eqref{eq:cont-b}, one obtains~\cite{visscher-1991-cp,sullivan-2000-book-em-sim-fdtd-for-other-types-sim,soriano-analysis-2004}
\begin{subequations}
	\begin{multline}
		\label{eq:fdtd-a}
		\hbar \dfrac{\psi_R|_{i,j,k}^{n+1} - \psi_R|_{i,j,k}^{n}}{\dt} 
		= -\frac{\hbar^2}{2m} 
		\Bigg[
		\dfrac{ 
			\psi_I|_{i+1,j,k}^{n+\frac12} 
			- 2 \psi_I|_{i,j,k}^{n+\frac12} 
			+ \psi_I|_{i-1,j,k}^{n+\frac12}
		}{(\Delta x)^2}\\
		+ 
		\dfrac{ 
			\psi_I|_{i,j+1,k}^{n+\frac12} 
			- 2 \psi_I|_{i,j,k}^{n+\frac12} 
			+ \psi_I|_{i,j-1,k}^{n+\frac12}
		}{(\Delta y)^2}
		+ 
		\dfrac{ 
			\psi_I|_{i,j,k+1}^{n+\frac12} 
			- 2 \psi_I|_{i,j,k}^{n+\frac12} 
			+ \psi_I|_{i,j,k-1}^{n+\frac12}
		}{(\Delta z)^2}
		\Bigg]
		+ U|_{i,j,k} \psi_I|_{i,j,k}^{n+\frac12}
	\end{multline}
	\begin{multline}
		\label{eq:fdtd-b}
		\hbar \dfrac{\psi_I|_{i,j,k}^{n+\frac12}-\psi_I|_{i,j,k}^{n-\frac12}}{\dt} = \frac{\hbar^2}{2m} 
		\Bigg[
		\dfrac{
			\psi_R|_{i+1,j,k}^n 
			- 2\psi_R|_{i,j,k}^n 
			+ \psi_R|_{i-1,j,k}^n
		}{(\dx)^2}\\
		+ 
		\dfrac{
			\psi_R|_{i,j+1,k}^n 
			- 2\psi_R|_{i,j,k}^n 
			+ \psi_R|_{i,j-1,k}^n
		}{(\dy)^2}
		+ 
		\dfrac{
			\psi_R|_{i,j,k+1}^n 
			- 2\psi_R|_{i,j,k}^n 
			+ \psi_R|_{i,j,k-1}^n
		}{(\dz)^2}
		\Bigg]\nabla^2 \psi_R
		- U|_{i,j,k} \psi_R|_{i,j,k}^n
	\end{multline}
\end{subequations}
for the nodes strictly inside the region. The superscripts in \eqref{eq:fdtd-a}--\eqref{eq:fdtd-b} represent the time instances when the quantities are sampled and the subscripts represent the indices of the primary nodes. The samples involved in~\eqref{eq:fdtd-a} are shown in Fig.~\ref{fig:hanging-vars}(a). 
The numerical solution is obtained starting from initial conditions $\psi_I|^{-0.5}$ and $\psi_R|^0$ and computing $\psi_I|^{n+0.5}$ and $\psi_R|^{n+1}$ from \eqref{eq:fdtd-b} and \eqref{eq:fdtd-a} in a leap-frog manner. In order to ensure that the scheme is stable, the time step needs to be taken below the CFL limit~\cite{dai-stability-2005}
\begin{equation}
	\label{eq:cfl}
	\dt < \dt_{\text{CFL}} = \dfrac{2}{\dfrac{2\hbar}{m}\left(
		\dfrac{1}{(\dx)^2} + \dfrac{1}{(\dy)^2} + \dfrac{1}{(\dz)^2}
		\right) + \dfrac{\max_{i,j,k}\big|U|_{i,j,k}\big|}{\hbar}} \,.
\end{equation}

The update procedure for a wavefunction at a particular node requires knowing the previous time step values corresponding to the six surrounding nodes, which are only available when performing the updates at the nodes strictly inside the region. Hence, boundary conditions need to be assumed, such as zero Dirichlet or periodic boundary conditions. The six faces of the boundary are referred to as west~(W), east~(E), south~(S), north~(N), bottom~(B), and top (T), corresponding to $i=1$, $i=n_x+1$, $j=1$, $j=n_y+1$, $k=1$, and $k=n_z+1$, respectively.

\begin{figure}
	\centering
	\includegraphics[scale=0.5]{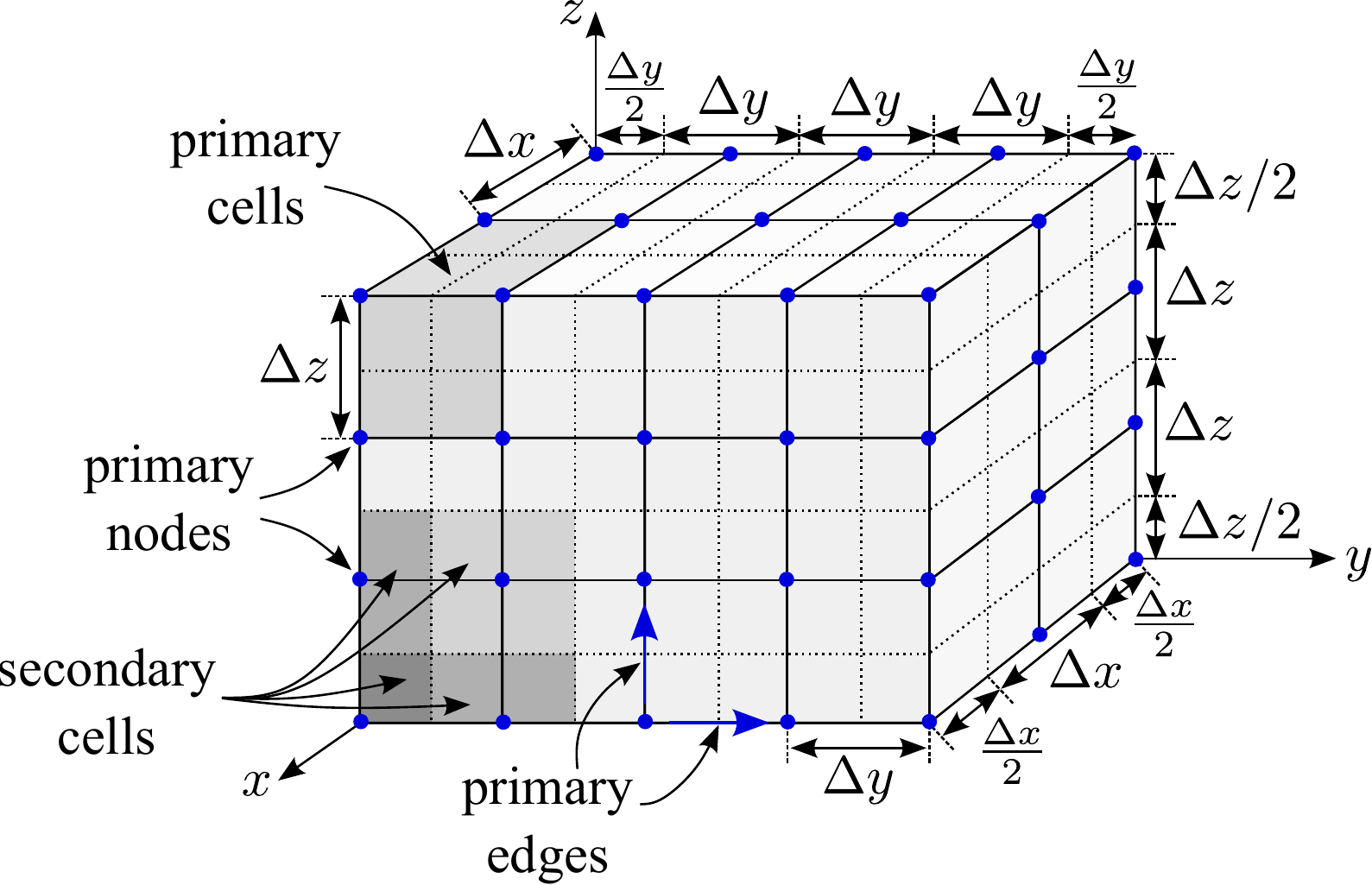}	
	\caption{Illustration of the geometrical quantities associated with the discretized region. In this example ${n_x = 2}$, ${n_y = 4}$, ${n_z = 3}$. All primary cells have dimensions ${\dx \times \dy \times \dz}$. The secondary cells strictly inside the region have dimensions ${\dx \times \dy \times \dz}$. The secondary cells adjacent to one face of the boundary are halved in the dimension normal to the face of the boundary. Similarly, the secondary cells adjacent to two or three faces of the boundary are halved in two or three dimensions, respectively.}
	\label{fig:primary-secondary}
\end{figure}
\section{Equations for the region}
\label{sec:equations-rgn}

This section presents the proposed discretization of \eqref{eq:cont-a}--\eqref{eq:cont-b}, which facilitates the investigation of probability and energy conservation in Sections \ref{sec:probability-conserv} and \ref{sec:energy-conserv}. As a starting point, we take the FDTD-Q method outlined in Section~\ref{sec:background}. The proposed equations consider a general scenario where the FDTD-Q region could either be terminated with boundary conditions or constitute a portion of a larger domain. In this scenario, the staggered nature of spatial sampling in FDTD-Q makes it difficult to precisely define the region's boundary. As a result, quantification of the probability and energy pertaining to the region becomes non-trivial. In this section, we propose equations on the boundary involving the so-called hanging variables~\cite{venkatarayalu2007stable,jnl-2018-fdtd-3d-dissipative}. These equations allow for an unambiguous separation between the region and the space outside the region's boundary. The concept of hanging variables is related to the mortar methods~\cite{maday-1989-siam}. 

\subsection{Equations at each node}
\label{sec:scalar-equations}

The discussion below shows a detailed treatment of the discrete equations corresponding to \eqref{eq:cont-a}. Equation \eqref{eq:cont-b} is treated analogously.
For the samples of $\psi_R$ strictly inside the region, we take \eqref{eq:fdtd-a}, multiplied on both sides by the factor $\dx \dy \dz$:
\begin{multline}
	\label{eq:scalar-r-internal}
	\hbar \ \dx\dy \dz \dfrac{\psi_R|_{i,j,k}^{n+1} - \psi_R|_{i,j,k}^{n}}{\dt} 
	= -\frac{\hbar^2}{2m} 
	\Bigg[
	\dy \dz\dfrac{ 
		\psi_I|_{i+1,j,k}^{n+\frac12} 
		- \psi_I|_{i,j,k}^{n+\frac12}
	}{\Delta x}
	- 
	\dy \dz\dfrac{ 
		\psi_I|_{i,j,k}^{n+\frac12} 
		- \psi_I|_{i-1,j,k}^{n+\frac12}
	}{\Delta x}		\\
	+ 
	\dx \dz\dfrac{ 
		\psi_I|_{i,j+1,k}^{n+\frac12} 
		- \psi_I|_{i,j,k}^{n+\frac12} 
	}{\Delta y}
	- 
	\dx \dz\dfrac{ 
		\psi_I|_{i,j,k}^{n+\frac12} 
		- \psi_I|_{i,j-1,k}^{n+\frac12}
	}{\Delta y}
	+ 
	\dx\dy\dfrac{ 
		\psi_I|_{i,j,k+1}^{n+\frac12} 
		- \psi_I|_{i,j,k}^{n+\frac12} 
	}{\Delta z}
	-
	\dx\dy\dfrac{ 
		\psi_I|_{i,j,k}^{n+\frac12} 
		- \psi_I|_{i,j,k-1}^{n+\frac12}
	}{\Delta z}
	\Bigg]\\
	+ \dx\dy \dz \ U|_{i,j,k} \psi_I|_{i,j,k}^{n+\frac12} \,,
\end{multline}
where the samples involved in the equation are shown in Fig.~\ref{fig:hanging-vars}(a).
The factor of $\dx \dy \dz$ is introduced for convenience, as will become clear in the subsequent derivations. This factor corresponds to the volume of the $\dx \times \dy \times \dz$ cell centered on the node $(i,j,k)$. Such a cell is referred to as a ``secondary cell''. The subdivision of the region into the secondary cells is illustrated in Fig.~\ref{fig:primary-secondary} using the dotted lines. Equation~\eqref{eq:scalar-r-internal} can be interpreted as a discretization of the integral form of~\eqref{eq:cont-a}, which reads\footnote{The concept of discretizing partial differential equations via their integral form is related to finite volume methods~\cite{moukalled-2016-book-fin-vol}.}
\begin{equation}
	\label{eq:cont-a-integral}
	\hbar \iiint_{\Delta V''} \dfrac{\partial \psi_R}{\partial t} dV
	= 
	- \dfrac{\hbar^2}{2m} \oiint_{\partial \Delta V''} \nabla \psi_I \cdot \vec{dS}
	+ 
	\iiint_{\Delta V''} U \psi_I\ dV \,,
\end{equation}
where the integrals are taken over the volume of the corresponding secondary cell $\Delta V''$ and over its boundary $\partial \Delta V''$. The first term on the right hand side of \eqref{eq:cont-a-integral} involves the flux of $\nabla \psi_I$ through the boundary of the secondary cell, and is the continuous counterpart of the term in the square brackets in \eqref{eq:scalar-r-internal}. 

\begin{figure}
	\centering
	\includegraphics[scale=1]{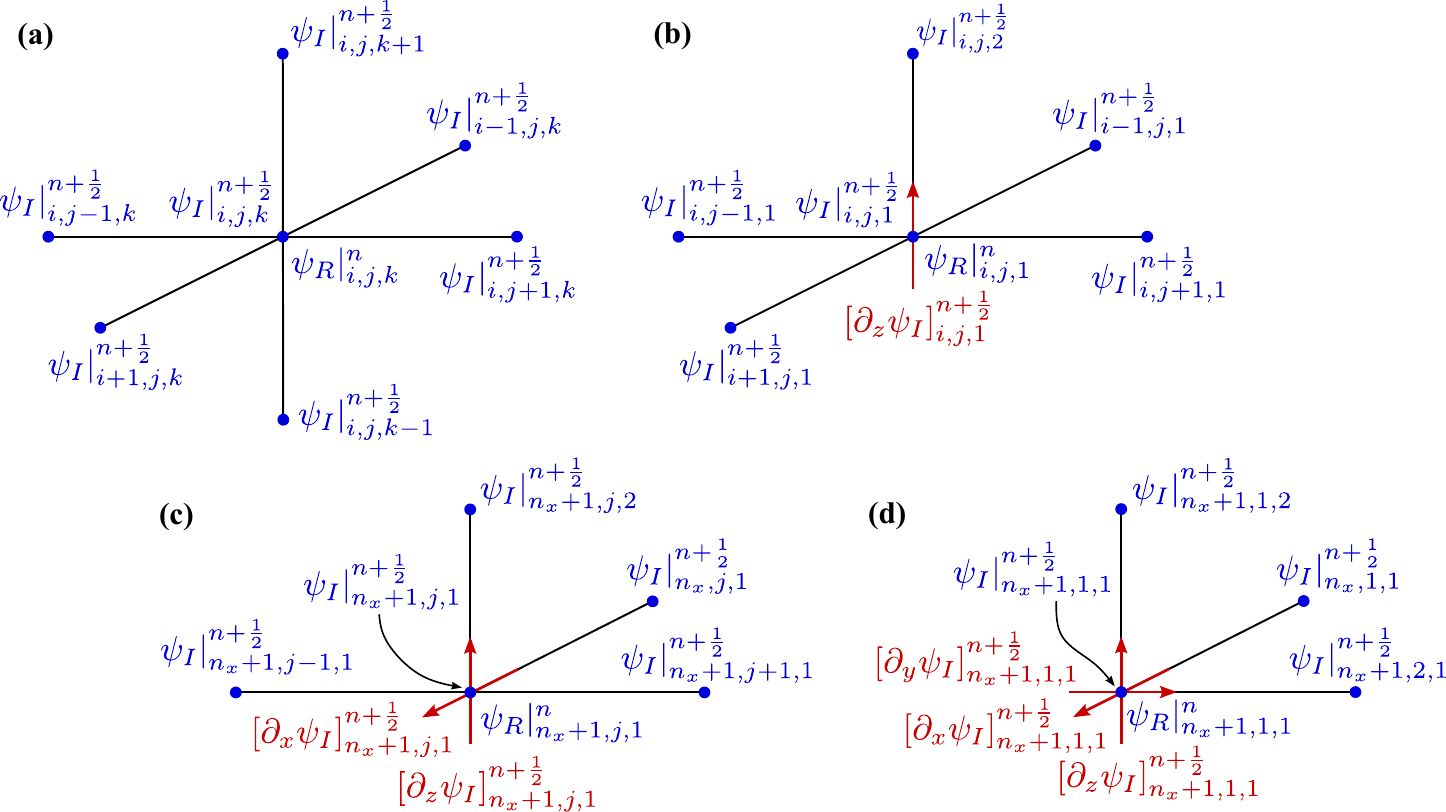}
	\caption{Samples involved in the discrete equations updating the real part of the wavefunction. The hanging variables are shown in red. (a)~Samples in \eqref{eq:scalar-r-internal} for an internal node. (b)~Samples in \eqref{eq:scalar-r-bottom} for a node on the bottom face of the boundary. (c)~Samples in \eqref{eq:scalar-r-bottom-east} for a node on the edge shared between the bottom and east faces of the boundary. (d)~Samples in \eqref{eq:scalar-r-bottom-south-east} for the node on the corner formed by the bottom, south, and east faces of the boundary.}
	\label{fig:hanging-vars}
\end{figure} 

For nodes $(i,j,1)$ on the bottom boundary of the region, equation~\eqref{eq:scalar-r-internal} would involve samples of the wavefunction that are outside the boundary, namely $\psi_I|_{i,j,k-1}$. The involvement of such samples is undesirable for two reasons. First, it would make it difficult to distinguish the energy and probability corresponding to the region from the energy and probability that should be attributed to the space outside the region. Second, the samples $\psi_I|_{i,j,k-1}$, in general, may not be available. For example, the space outside the boundary of the region may involve a grid of different resolution or an entirely different model that does not involve FDTD-Q samples. Hence, we place a hanging variable~\cite{venkatarayalu2007stable} representing the $z$-component of the gradient of $\psi_I$ on the boundary of the region. The hanging variable is shown in red in Fig.~\ref{fig:hanging-vars}(b). Using this variable, we write the discretization of \eqref{eq:cont-a-integral} over the corresponding secondary cell as
\begin{multline}
	\label{eq:scalar-r-bottom}
	\hbar \ \dx\dy \dfrac{\dz}2 \dfrac{\psi_R|_{i,j,1}^{n+1} - \psi_R|_{i,j,1}^{n}}{\dt} 
	= -\frac{\hbar^2}{2m} 
	\Bigg[
	\dy \dfrac{\dz}2\dfrac{ 
		\psi_I|_{i+1,j,1}^{n+\frac12} 
		- \psi_I|_{i,j,1}^{n+\frac12}
	}{\Delta x}
	- 
	\dy \dfrac{\dz}2\dfrac{ 	
		\psi_I|_{i,j,1}^{n+\frac12} 
		- \psi_I|_{i-1,j,1}^{n+\frac12}
	}{\Delta x}		\\
	+ 
	\dx \dfrac{\dz}2\dfrac{ 
		\psi_I|_{i,j+1,1}^{n+\frac12} 
		- \psi_I|_{i,j,1}^{n+\frac12} 
	}{\Delta y}
	- 
	\dx \dfrac{\dz}2\dfrac{ 
		\psi_I|_{i,j,1}^{n+\frac12} 
		- \psi_I|_{i,j-1,1}^{n+\frac12}
	}{\Delta y}
	+ 
	\dx\dy\dfrac{ 
		\psi_I|_{i,j,2}^{n+\frac12} 
		- \psi_I|_{i,j,1}^{n+\frac12} 
	}{\Delta z}
	-
	\dx\dy[\partial_z \psi_I]_{i,j,1}^{n+\frac12}
	\Bigg]\\
	+ \dx\dy \dfrac{\dz}2 \ U|_{i,j,1} \psi_I|_{i,j,1}^{n+\frac12} \,.
\end{multline}
The differences with \eqref{eq:scalar-r-internal} are the dimensions of the secondary cell involved (which are $\dx \times \dy \times \dz/2$ for the cells adjacent to the bottom boundary), and the introduction of the hanging variable $[\partial_z \psi_I]_{i,j,1}$. The notation $\partial_z$ indicates that the variable represents a partial derivative with respect to $z$ and the square brackets are used to distinguish the hanging variables from the finite-difference approximation of the gradient of $\psi_I$. The hanging variables can be used to couple the region with the model describing the space beyond the boundary of the region~\cite{jnl-2018-fdtd-3d-dissipative,venkatarayalu2007stable}. As will become clear in the subsequent discussion, the hanging variables will help quantify the rate at which the region exchanges the probability and energy with the surrounding space~\cite{jnl-2018-fdtd-3d-dissipative}.

For nodes on the edges of the boundary, the equations involve two hanging variables and are written over secondary cells of volume $\dx \dy \dz/4$. For example, for the nodes on the bottom-east edge of the boundary, such as the node shown in Fig.~\ref{fig:hanging-vars}(c), the proposed equation involves hanging variables $[\partial_x \psi_I]$ and $[\partial_z \psi_I]$
\begin{multline}
	\label{eq:scalar-r-bottom-east}
	\hbar \ \dfrac{\dx}2 \dy \dfrac{\dz}2 \dfrac{\psi_R|_{n_x+1,j,1}^{n+1} - \psi_R|_{n_x+1,j,1}^{n}}{\dt} 
	= -\frac{\hbar^2}{2m} 
	\Bigg[
	\dy \dfrac{\dz}2 [\partial_x \psi_I]_{n_x+1,j,1}^{n+\frac12}
	- 
	\dy \dfrac{\dz}2\dfrac{ 
		\psi_I|_{n_x+1,j,1}^{n+\frac12} 
		- \psi_I|_{n_x,j,1}^{n+\frac12}
	}{\Delta x}	\\
	+ 
	\dfrac{\dx}2\dfrac{\dz}2\dfrac{ 
		\psi_I|_{n_x+1,j+1,1}^{n+\frac12} 
		- \psi_I|_{n_x+1,j,1}^{n+\frac12} 
	}{\Delta y}
	- 
	\dfrac{\dx}2\dfrac{\dz}2\dfrac{ 
		\psi_I|_{n_x+1,j,1}^{n+\frac12} 
		- \psi_I|_{n_x+1,j-1,1}^{n+\frac12}
	}{\Delta y}\\
	+ 
	\dfrac{\dx}2\dy\dfrac{ 
		\psi_I|_{n_x+1,j,2}^{n+\frac12} 
		- \psi_I|_{n_x+1,j,1}^{n+\frac12} 
	}{\Delta z}
	-
	\dfrac{\dx}2\dy[\partial_z \psi_I]_{n_x+1,j,1}^{n+\frac12}
	\Bigg]
	+ \dfrac{\dx}2\dy \dfrac{\dz}2 \ U|_{n_x+1,j,1} \psi_I|_{n_x+1,j,1}^{n+\frac12} \,.
\end{multline}

For nodes on the region's corners, the proposed equations involve three hanging variables ($[\partial_x \psi_I]$, $[\partial_y \psi_I]$, and $[\partial_z \psi_I]$) and are written over secondary cells with dimensions $\dx/2 \times \dy/2 \times \dz / 2$. For example, on the bottom-south-east corner illustrated in Fig.~\ref{fig:hanging-vars}(d), the equation reads
\begin{multline}
	\label{eq:scalar-r-bottom-south-east}
	\hbar \ \dfrac{\dx}2 \dfrac{\dy}2 \dfrac{\dz}2 \dfrac{\psi_R|_{n_x+1,1,1}^{n+1} - \psi_R|_{n_x+1,1,1}^{n}}{\dt} 
	= -\frac{\hbar^2}{2m} 
	\Bigg[
	\dfrac{\dy}2 \dfrac{\dz}2 [\partial_x \psi_I]_{n_x+1,1,1}^{n+\frac12}
	- 
	\dfrac{\dy}2 \dfrac{\dz}2\dfrac{ 
		\psi_I|_{n_x+1,1,1}^{n+\frac12} 
		- \psi_I|_{n_x,1,1}^{n+\frac12}
	}{\Delta x}	\\
	+ 
	\dfrac{\dx}2\dfrac{\dz}2\dfrac{ 
		\psi_I|_{n_x+1,2,1}^{n+\frac12} 
		- \psi_I|_{n_x+1,1,1}^{n+\frac12} 
	}{\Delta y}
	- 
	\dfrac{\dx}2\dfrac{\dz}2
	[\partial_y \psi_I]_{n_x+1,1,1}^{n+\frac12}\\
	+ 
	\dfrac{\dx}2\dfrac{\dy}2\dfrac{ 
		\psi_I|_{n_x+1,1,2}^{n+\frac12} 
		- \psi_I|_{n_x+1,1,1}^{n+\frac12} 
	}{\Delta z}
	-
	\dfrac{\dx}2\dfrac{\dy}2[\partial_z \psi_I]_{n_x+1,1,1}^{n+\frac12}
	\Bigg]
	+ \dfrac{\dx}2\dfrac{\dy}2 \dfrac{\dz}2 \ U|_{n_x+1,1,1} \psi_I|_{n_x+1,1,1}^{n+\frac12}\,.
\end{multline}
The discretization of \eqref{eq:cont-b} is performed analogously, resulting in equations similar to \eqref{eq:scalar-r-internal}, \eqref{eq:scalar-r-bottom}, \eqref{eq:scalar-r-bottom-east}, and \eqref{eq:scalar-r-bottom-south-east}.

\subsection{Compact matrix form}
\label{sec:cmf}

In order to facilitate the subsequent derivations, we write the equations described in Section~\ref{sec:scalar-equations} in a compact matrix form. Equations corresponding to \eqref{eq:cont-a}, such as \eqref{eq:scalar-r-internal}, \eqref{eq:scalar-r-bottom}, \eqref{eq:scalar-r-bottom-east}, and \eqref{eq:scalar-r-bottom-south-east}, can be written as
\begin{equation}
	\label{eq:cmf-pre-a}
	\hbar \volSec \dfrac{\PSI_R^{n+1}- \PSI_R^n}{\dt} 
	= 
	-\dfrac{\hbar^2}{2m} \left(-\D \areaSec (\lenPrim)^{-1} \D^T\PSI_I^{n+\frac12}
	+ \Lbot \ndot \areaSecBndry [\nabla \PSI_I]_{\bot}^{n+\frac12}\right)
	+ \volSec \diagon{U}\PSI_I^{n+\frac12} 
\end{equation}
and an analogous matrix form can be written for the discrete equations corresponding to \eqref{eq:cont-b}.
In \eqref{eq:cmf-pre-a}, vectors $\PSI_R$ and $\PSI_I$ contain samples of $\psi_R$ and $\psi_I$ at the primary nodes and vector $[\nabla \PSI_I]_{\bot}$ collects the hanging variables on the boundary of the region. Matrix $\volSec$ is a diagonal matrix containing the volumes of secondary cells depicted in Fig.~\ref{fig:primary-secondary}. Diagonal matrix $\diagon{U}$ contains the values of the potential $U$ at primary nodes.
The two terms in the brackets on the right hand side of \eqref{eq:cmf-pre-a} correspond to the discrete outward flux of $\nabla \psi_I$ through the boundary of each of the secondary cells. The first term is the contribution due to the finite-difference approximation of $\nabla \psi_I$ on the primary edges. The second term is the contribution due to the hanging variables. 
The rows of matrix $\D$ correspond to the primary nodes\footnote{Equivalently, we can say that the rows of $\D$ correspond to the secondary cells.} and its columns correspond to the primary edges. For each primary edge, the respective column of $\D$ contains a $+1$ in the row corresponding to the primary node at the tail of the primary edge and a $-1$ in the row corresponding to the head of the primary edge.
Diagonal matrices $\lenPrim$ and $\areaSec$ contain, respectively, the length of the primary edges and the area of the secondary cell faces pierced by these edges. With these definitions, 
\begin{equation}\label{eq:gradPsiI}
	\nabla \PSI_I^{n+\frac12} = -(\lenPrim)^{-1}\D^T\PSI_I^{n+\frac12}
\end{equation}
is a vector containing the finite difference approximations of $\nabla \psi_I$ on each of the primary edges, and the left multiplication of this vector by $\D \areaSec$ in \eqref{eq:cmf-pre-a} computes their contribution to the outward flux values for each secondary cell. The columns of the matrix $\Lbot$ correspond to the additional boundary edges where the hanging variables are sampled. For each such column, $\Lbot$ contains a $+1$ in the row corresponding to the node collocated with the additional boundary edge. Diagonal matrix $\ndot$ has the diagonal elements equal to $-1$ for the hanging variables on the west, south, and bottom boundaries and to $+1$ for the variables on the east, north, and top boundaries. Matrix $\areaSecBndry$ contains on the diagonal the areas of the secondary cell faces pierced by the edges where the hanging variables are sampled. Similarly to $\nabla \PSI_I^{n+0.5}$ in \eqref{eq:gradPsiI}, we also define a compact notation for the vector containing the finite difference approximations of $\nabla \psi_R$, which is given by
\begin{equation}\label{eq:gradPsiR}
	\nabla \PSI_R^n = -(\lenPrim)^{-1}\D^T\PSI_R^n
\end{equation} 
and will be useful in the subsequent discussion.
The indexing convention and the corresponding matrix expressions are detailed in Appendix~\ref{appendix:indexing-and-cmf-expressions}.

Equation \eqref{eq:cmf-pre-a} and its counterpart for the imaginary part of the wavefunction can be written more compactly as
\begin{subequations}
	\begin{equation}
		\label{eq:cmf-a}
		\hbar \volSec \dfrac{\PSI_R^{n+1}- \PSI_R^n}{\dt} 
		= 
		\HH \PSI_I^{n+\frac12}
		- \HH_{\bot} [\nabla \PSI_I]_{\bot}^{n+\frac12}\,, 
		\quad \forall n = 0, 1, \hdots, n_t-1 \,,
	\end{equation}
	\begin{equation}
		\label{eq:cmf-b}
		\hbar \volSec \dfrac{\PSI_I^{n+\frac12}-\PSI_I^{n-\frac12}}{\dt} 
		= - \HH \PSI_R^n
		+ \HH_{\bot} [\nabla \PSI_R]_{\bot}^n
		\,, 
		\quad \forall n = 0, 1, \hdots, n_t-1 \,,
	\end{equation}
\end{subequations}
where
\begin{equation}\label{eq:hh}
	\HH = \dfrac{\hbar^2}{2m} \D \areaSec (\lenPrim)^{-1} \D^T+ \volSec \diagon{U} \,,
\end{equation}
\begin{equation}
	\label{eq:mtx-Hbot}
	\HH_{\bot} = 
	\frac{\hbar^2}{2m} \Lbot \ndot \areaSecBndry \,.
\end{equation}
Equations \eqref{eq:cmf-a}--\eqref{eq:cmf-b} can also be written as a single matrix equation,
\begin{equation}
	\label{eq:cmf-single}
	\hbar \Pbig
	\dfrac{\PSIbig^{n+1}- \PSIbig^n}{\dt} = 
	\left(\J_1 \otimes \HH\right)
	\dfrac{\PSIbig^{n+1} + \PSIbig^n}2 
	- (\J_1 \otimes \HH_{\bot})
	[\nabla \PSIbig]_{\bot}^{n+\frac12}\,,
	\quad
	n = 0, 1, \hdots, n_t-1 \,,
\end{equation}
where 
\begin{equation}
	\label{eq:mtx-P}
	\bm{\mathcal P} = 
	\I_2 \otimes \volSec
	- \dfrac{\dt}{2\hbar} |\J_1|\otimes \mathbf H
	= 
	\begin{bmatrix}
		\volSec & -\frac{\dt}{2\hbar} \mathbf H\\
		-\frac{\dt}{2\hbar} \mathbf H & \volSec
	\end{bmatrix} \,,
\end{equation}
\begin{equation}
	\label{eq:mtx-psibig}
	\PSIbig^n = \begin{bmatrix}
		\PSI_R^n\\ \PSI_I^{n-\frac12}
	\end{bmatrix}\,,
	\quad 
	n = 0, 1, \hdots, n_t\,,
\end{equation}
\begin{equation}
	\label{eq:mtx-nabla-psi-big}
	[\nabla \PSIbig]_{\bot}^{n+\frac12}
	= 
	\begin{bmatrix}
		[\nabla \PSI_R]_{\bot}^{n} \\ 
		[\nabla \PSI_I]_{\bot}^{n+\frac12}
	\end{bmatrix}\,,
	\quad n = 0, 1, \hdots, n_t-1\,,
\end{equation}
where $\J_m$ is a $2m \times 2m$ matrix of the form
\begin{equation}
	\label{eq:mtx-J}
	\J_m = \begin{bmatrix}
		\mathbf 0 & \I_m \\
		-\I_m & \mathbf 0
	\end{bmatrix}\,,
\end{equation}
and matrix $\I_m$ is an $m\times m$ identity matrix. 
Brackets ``$|\cdot|$'' denote an element-wise absolute value operation and ``$\otimes$'' is the Kronecker product~\cite{golub-matrix-comput-4ed}. Matrix $\Pbig$ in \eqref{eq:mtx-P} will lead to an expression for the probability of finding the particle in the region. Equation \eqref{eq:cmf-single} can be also seen as a discrete-time dynamical system~\cite{haddad-2011-stability-and-ctrl} that approximates the solution of the Schrödinger equation describing the evolution in time of the real and imaginary parts of the wavefunction. The evolution of the system depends on the values of the hanging variables on the boundary $[\nabla \PSIbig]_{\bot}^{n+0.5}$, which act as an excitation to \eqref{eq:cmf-single}. This excitation is referred to as the input of the dynamical system~\cite{haddad-2011-stability-and-ctrl}.
\section{Probability conservation}
\label{sec:probability-conserv}
In this section we propose expressions for the total probability in the region and for the probability current leaving the region through the boundary. We show that these expressions satisfy probability conservation under a condition on $\dt$, which is recognized to be a generalized CFL limit. Furthermore, we prove that the conventional CFL limit~\eqref{eq:cfl} is a sufficient condition for the probability conservation.

\subsection{Total probability and probability current}
In the continuous domain, the probability of finding a particle in a volume $V$ is given by~\cite{miller-2008}
\begin{equation}
	\label{eq:probability-continuous}
	\mathcal P(t) = \iiint_V |\psi|^2\ dV = \iiint_V (\psi_R^2 + \psi_I^2)\ dV \,.
\end{equation}
The probability can leave the region through the boundary $\partial V$ at the rate dictated by the outward probability current
\begin{equation}
	\label{eq:current-continuous}
	\mathcal I_P(t)
	\approx
	\oiint_{\partial V} \vec J_P(x,y,z,t) \cdot \hat n \ dS	\,,
\end{equation}
where $\hat n$ is the outward normal vector and $\vec J_P(x,y,z,t)$ is the probability current density, also known in the literature as the particle current density~\cite{miller-2008}
\begin{equation}
	\label{eq:current-density-approximation}
	\vec J_P 
	=
	\dfrac{i\hbar}{2m} (\psi \nabla \psi^* - \psi^*\nabla \psi)
	=
	\frac{\hbar}{m} 
	\left(	 	 
	\psi_R \nabla \psi_I
	-
	\psi_I	\nabla \psi_R
	\right)\,.
\end{equation}
In order to analyze the probability conservation properties of FDTD-Q, we define discrete expressions that approximate \eqref{eq:probability-continuous} and \eqref{eq:current-continuous}.

The most obvious approach to defining the probability associated with $\PSI^n$ in \eqref{eq:mtx-psibig} would be to directly discretize the integral in \eqref{eq:probability-continuous} with the use of $\PSI_R^n$ and $\PSI_I^{n-0.5}$, obtaining
\begin{equation}
	\label{eq:Pnaive}
	\mathcal P_{\text{simple}}^n 
	= 
	\sum_{i=1}^{n_x+1} 
	\sum_{j=1}^{n_y+1} 
	\sum_{k=1}^{n_z+1} 
	\Delta V''|_{i,j,k} \left(
	(\psi_R|_{i,j,k}^{n})^2
	+ (\psi_I|_{i,j,k}^{n-\frac12} )^2
	\right)
	=
	 (\PSI_R^{n})^T \volSec \PSI_R^n 
	+ (\PSI_I^{n-\frac12})^T \volSec \PSI_I^{n-\frac12}	\,,
\end{equation}
where $\Delta V''|_{i,j,k}$ is the volume of the secondary cell associated with the node $(i,j,k)$. 
However, as we will demonstrate in Section~\ref{sec:ne}, this expression does not respect the principle of probability conservation even for an isolated region. Instead, we propose an expression for the total probability that involves the matrix $\Pbig$, which appears in equation \eqref{eq:cmf-single} describing the time evolution of the wavefunction.

\begin{definition}[Total probability]
	\label{def:probability}
	The total probability of finding the particle in a region described by FDTD-Q equations \eqref{eq:cmf-a}--\eqref{eq:cmf-b} is given by
	\begin{equation}
		\label{eq:probability}
		\mathcal P^n = (\PSIbig^n)^T \Pbig \PSIbig^n
		\,, \quad 
		n = 0, 1, \hdots, n_t \,.
	\end{equation}
\end{definition}

To see the connection between \eqref{eq:probability} and \eqref{eq:probability-continuous}, we expand \eqref{eq:probability} using the definitions of $\Pbig$ and $\PSIbig^n$ in \eqref{eq:mtx-P} and \eqref{eq:mtx-psibig}, respectively
\begin{multline}
	\mathcal P^n 
	= (\PSI_R^{n})^T \volSec \PSI_R^n 
	+ (\PSI_I^{n-\frac12})^T \volSec \PSI_I^{n-\frac12}
	- \dfrac{\dt}{\hbar} (\PSI_I^{n-\frac12})^T \HH \PSI_R^{n}\\
	= (\PSI_R^{n})^T \volSec \PSI_R^n 
	+(\PSI_I^{n-\frac12})^T \left( \volSec \PSI_I^{n-\frac12} 
	- \dfrac{\dt}{\hbar} \HH \PSI_R^{n}\right) 
	\,, 
	\quad 
	n = 0, 1, \hdots, n_t \,,
\end{multline}
and with the use of \eqref{eq:cmf-b} obtain
\begin{equation}\label{eq:probability-rewritten-for-n}
	\mathcal P^n 
	= (\PSI_R^{n})^T \volSec \PSI_R^n 
	+ (\PSI_I^{n-\frac12})^T \volSec \PSI_I^{n+\frac12}
	- \dfrac{\dt}{\hbar} (\PSI_I^{n-\frac12})^T \HH_{\bot} [\nabla\PSI_R]_{\bot}^{n}\,, \quad 
	\forall n = 0, 1, \hdots, n_t -1\,.
\end{equation}
When $\dt$ is small, the last term in \eqref{eq:probability-rewritten-for-n} can be neglected and $\mathcal P^n$ can be approximated as
\begin{equation}
	\label{eq:probability-approximate-visscher-n}
	\mathcal P^n 
	\approx (\PSI_R^{n})^T \volSec \PSI_R^n 
	+ (\PSI_I^{n-\frac12})^T \volSec \PSI_I^{n+\frac12} \,,
\end{equation}
which involves the samples of the real part of the wavefunction at time $t = n\dt$ and the product of the samples of the imaginary part at $(n-0.5)\dt$ and $(n+0.5)\dt$.
Hence, $\mathcal P^n$ can be seen as an approximation of $\mathcal P(t)$ in \eqref{eq:probability-continuous} at\footnote{Alternatively, $\mathcal P^n$ can be interpreted as an approximation of $\mathcal P(t)$ performed at $t = (n-0.5)\Delta t$. To see this, we can use \eqref{eq:cmf-a} instead of \eqref{eq:cmf-b} to write $\mathcal P^n$ as 
\begin{equation}\label{eq:probability-rewritten-for-n-0.5}
	\mathcal P^n = 
	(\PSI_R^{n})^T \volSec \PSI_R^{n-1} 
	+ (\PSI_I^{n-\frac12})^T \volSec \PSI_I^{n-\frac12}
	- \dfrac{\dt}{\hbar} (\PSI_R^n)^T \HH_{\bot} [\nabla\PSI_I]_{\bot}^{n-\frac12}
	\,, 
	\quad 
	\forall n = 1, 2, \hdots, n_t \,,
\end{equation}
to obtain
\begin{equation}
	\label{eq:probability-approximate-visscher-n-0.5}
	\mathcal P^n 
	\approx 
	(\PSI_R^{n})^T \volSec \PSI_R^{n-1} 
	+ (\PSI_I^{n-\frac12})^T \volSec \PSI_I^{n-\frac12} \,.
\end{equation}} $t = n \Delta t$.

The combination of staggered samples in \eqref{eq:probability-approximate-visscher-n} (and \eqref{eq:probability-approximate-visscher-n-0.5} in the footnote) has been used by Visscher~\cite{visscher-1991-cp} to propose expressions for the probability density in 1D leap-frog FDTD-Q, where periodic or zero Dirichlet boundary conditions were assumed. Visscher argued that those expressions ensure that the total probability stays constant with time. This combination of samples has also been shown effective in defining energy in FDTD for electromagnetics~\cite{edelvik2004general}.
For the case of zero Dirichlet or zero Neumann boundary conditions, the last term in \eqref{eq:probability-rewritten-for-n} becomes zero and 
$\mathcal P^n$ reduces to \eqref{eq:probability-approximate-visscher-n}, becoming analogous to the expressions proposed in~\cite{visscher-1991-cp}. Hence, Definition~\ref{def:probability} can be thought of as a generalization of \eqref{eq:probability-approximate-visscher-n} for a 3D region that could be either isolated or open to the flow of probability through the boundary. The proposed Definition~\ref{def:probability} has an advantage of being a quadratic form, which will be important for the analysis of the conservation properties~\cite{edelvik2004general,jnl-2018-fdtd-3d-dissipative}.

Next, we propose an expression approximating \eqref{eq:current-continuous} in FDTD-Q. This expression quantifies the probability current through the boundary of the region and, as will be shown in Section~\ref{sec:subsec-prob-conserv}, respects the principle of probability conservation.
\begin{definition}[Probability current]
	The probability current flowing out of a region described by FDTD-Q equations \eqref{eq:cmf-a}--\eqref{eq:cmf-b} is given by
	\begin{equation}
		\label{eq:current}
		\mathcal I_P^{n+\frac12} 
		= 
		\dfrac2{\hbar} \left(\dfrac{\PSIbig^{n+1}+\PSIbig^n}{2}\right)^T 
		(\J_1 \otimes \HH_{\bot}) [\nabla \PSIbig]_{\bot}^{n+\frac12}\,,
		\quad
		\forall n = 0, 1, \hdots, n_t-1\,.
	\end{equation}
\end{definition}

In order to recognize that \eqref{eq:current} approximates \eqref{eq:current-continuous}, we expand \eqref{eq:current} using \eqref{eq:mtx-Hbot}, \eqref{eq:mtx-psibig}, \eqref{eq:mtx-nabla-psi-big}, and \eqref{eq:mtx-J} to obtain
\begin{multline}
	\label{eq:current-expanded-0}
	\mathcal I_P^{n+\frac12} 
	= 
	\dfrac2{\hbar} \left(\dfrac{\PSI_R^{n+1}+\PSI_R^n}{2}\right)^T 
	\HH_{\bot} [\nabla \PSI_I]_{\bot}^{n+\frac12}
	-
	\dfrac2{\hbar} \left(\dfrac{\PSI_I^{n+\frac12}+\PSI_I^{n-\frac12}}{2}\right)^T \HH_{\bot} [\nabla \PSI_R]_{\bot}^{n}
	\\ 
	= 
	\frac{\hbar}{m} \left(\Lbot^T \dfrac{\PSI_R^{n+1}+\PSI_R^n}{2}\right)^T 
	\ndot \areaSecBndry [\nabla \PSI_I]_{\bot}^{n+\frac12}
	-
	\frac{\hbar}{m} \left(\Lbot^T \dfrac{\PSI_I^{n+\frac12}+\PSI_I^{n-\frac12}}{2}\right)^T 
	\ndot \areaSecBndry [\nabla \PSI_R]_{\bot}^{n}\,.
\end{multline} 
In \eqref{eq:current-expanded-0}, each row of $\Lbot^T$ selects from $\PSI_R$ or $\PSI_I$ the sample on the boundary node collocated with the corresponding hanging variable $[\nabla \psi_I]_{\bot}$ or $[\nabla \psi_R]_{\bot}$. This way, expressions of the form $\left(\Lbot^T \PSI_R\right)^T 
\ndot \areaSecBndry [\nabla \PSI_I]_{\bot}$ or $\left(\Lbot^T \PSI_I\right)^T \times
\ndot \areaSecBndry [\nabla \PSI_R]_{\bot}$ are summations of terms representing the contribution of each secondary cell face on the boundary to the flux of $\psi_R \nabla \psi_I$ or $\psi_I \nabla \psi_R$, respectively. With that, we can explicitly write the different terms in $\mathcal I_P^{n+0.5}$ by first splitting \eqref{eq:current-expanded-0} into the contributions from the six faces of the region's boundary
\begin{equation}\label{eq:current-expanded-sides}
	\mathcal I_P^{n+\frac12} 
	= 
	\mathcal I_{P,\text{W}}^{n+\frac12}
	+ \mathcal I_{P,\text{E}}^{n+\frac12}
	+ \mathcal I_{P,\text{S}}^{n+\frac12}
	+ \mathcal I_{P,\text{N}}^{n+\frac12}
	+ \mathcal I_{P,\text{B}}^{n+\frac12}
	+ \mathcal I_{P,\text{T}}^{n+\frac12}\,,
\end{equation}
The contribution from the west face is 
\begin{equation}\label{eq:current-expanded-w}
	\mathcal I_{P,\text{W}}^{n+\frac12}
	= 
	\sum_{j=1}^{n_y+1} \sum_{k=1}^{n_z+1} 
	\frac{\hbar}{m} [\hat n_{\text{W}} \cdot \hat x]
	\Delta S_x''|_{1,j,k} 
	\left(	 	 
	\dfrac{\psi_R|_{1,j,k}^{n+1}+\psi_R|_{1,j,k}^n}{2} [\partial_x \psi_I]_{1,j,k}^{n+\frac12}
	-
	\dfrac{\psi_I|_{1,j,k}^{n+\frac12}+\psi_I|_{1,j,k}^{n-\frac12}}{2}
	[\partial_x \psi_R]_{1,j,k}^{n}
	\right)	\,,
\end{equation}
where $[\hat n_{\text{W}} \cdot \hat x] = -1$, with $\hat n_\text{W}$ representing the outward normal vector on the west face of the boundary. The contributions from the other five faces have analogous expressions. From \eqref{eq:current-expanded-sides} and \eqref{eq:current-expanded-w}, one can see that the first term in \eqref{eq:current-expanded-0} approximates the flux of $(\hbar/m) \psi_R \nabla \psi_I$ at $t = (n+0.5)\dt$ and the second term approximates the flux of $-(\hbar/m) \psi_I \nabla \psi_R$ at $t = n\dt$, which are the fluxes involved in \eqref{eq:current-continuous}.

\subsection{Probability conservation}
\label{sec:subsec-prob-conserv}
In order to ensure that expressions in \eqref{eq:probability} and \eqref{eq:current} respect the principle of probability conservation in the discrete domain, we need to satisfy the following conditions:
\begin{enumerate}
	\item The total probability \eqref{eq:probability} should be non-negative.
	\item The rate of change of the total probability should equal the rate at which the probability is absorbed through the boundary via the probability current \eqref{eq:current}.
\end{enumerate}
These conditions are analogous to the conditions for a lossless system in the context of dissipative systems theory~\cite{byrnes1994losslessness,willems1972dissipative}.
The importance of a condition such as Condition~1 for defining lossless systems has been discussed in \cite{willems1972dissipative}. This condition has the same significance for studying the probability conservation. If Condition~2 holds without imposing Condition~1, the probability contained in the region is allowed to become infinitely negative. If that occurs, the region will supply an infinite amount of probability to the surrounding space, akin to a bottomless well. This would clearly indicate a violation in the principle of probability conservation. The following theorem provides a restriction on the time step in FDTD-Q which ensures that the proposed expressions for the total probability~\eqref{eq:probability} and probability current~\eqref{eq:current} satisfy the two conditions above.

\begin{theorem}
	\label{thm:probability-conserv}
	
	Consider a region described by FDTD-Q equations \eqref{eq:cmf-a}--\eqref{eq:cmf-b} with the time step taken below the following generalized CFL limit
	\begin{equation}
		\label{eq:cfl-gen}
		\dt < \dt_{\text{CFL,gen}} = \dfrac{2}{\rho\left(
			\dfrac{1}{\hbar} (\volSec)^{-\frac12} \HH (\volSec)^{-\frac12}
			\right)}\,,
	\end{equation}
	where $\rho(.)$ is the spectral radius of a matrix and $(.)^{-\frac12}$ denotes the inverse of the principal square root\footnote{
	For $\volSec$, $(\volSec)^{-\frac12}$ is simply a diagonal matrix containing the reciprocals of square roots of the diagonal elements of $\volSec$ \cite{golub-matrix-comput-4ed}.}~\cite{golub-matrix-comput-4ed}.
	For this region, the total probability \eqref{eq:probability} is bounded below by zero
	\begin{equation}
		\label{eq:bound-on-P}
		\mathcal P^n \ge 0, \quad \forall n = 0, \hdots, n_t \,.
	\end{equation}
	Moreover, the total probability \eqref{eq:probability} and the probability current \eqref{eq:current} satisfy the following relation:
	\begin{equation}
		\label{eq:probability-balance}
		\dfrac{
			\mathcal P^{n+1} - \mathcal P^n
		}{\dt} = -\mathcal I_P^{n+\frac12}
		\,, \quad 
		\forall n = 0, 1, \hdots, n_t-1\,.
	\end{equation}
\end{theorem}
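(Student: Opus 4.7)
The plan is to handle the two claims separately: the non-negativity bound \eqref{eq:bound-on-P} reduces to showing that $\Pbig$ is positive semi-definite under \eqref{eq:cfl-gen}, while the balance \eqref{eq:probability-balance} follows from a ``telescoping'' identity combined with the compact update \eqref{eq:cmf-single}.

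For the non-negativity, I first note that $\volSec$ is diagonal with strictly positive entries (secondary-cell volumes) and that $\HH$ is symmetric, being the sum of $\tfrac{\hbar^2}{2m}\D\areaSec(\lenPrim)^{-1}\D^T$ (a product of the form $\D A \D^T$ with $A$ diagonal) and the diagonal matrix $\volSec \diagon{U}$. Consequently $\Pbig$ in \eqref{eq:mtx-P} is a symmetric block matrix with equal diagonal blocks $\volSec$ and symmetric off-diagonal block $-\tfrac{\dt}{2\hbar}\HH$. Applying the Schur complement with the positive-definite pivot $\volSec$, the condition $\Pbig\succeq 0$ reduces to
\begin{equation*}
  \volSec - \tfrac{\dt^2}{4\hbar^2}\,\HH(\volSec)^{-1}\HH \;\succeq\; 0.
\end{equation*}
Conjugating by $(\volSec)^{-1/2}$ and letting $M = \tfrac{1}{\hbar}(\volSec)^{-1/2}\HH(\volSec)^{-1/2}$ (symmetric), this is equivalent to $\I - \tfrac{\dt^2}{4} M^2 \succeq 0$, i.e.\ $\dt\,\rho(M) \le 2$. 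The hypothesis \eqref{eq:cfl-gen} is exactly this bound with strict inequality, giving $\Pbig\succ 0$ and hence $\mathcal P^n = (\PSIbig^n)^T\Pbig\,\PSIbig^n \ge 0$.

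For the balance equation, I use the standard quadratic-form telescoping identity, valid because $\Pbig$ is symmetric:
\begin{equation*}
  \mathcal P^{n+1} - \mathcal P^n
  \;=\; (\PSIbig^{n+1}+\PSIbig^n)^T\,\Pbig\,(\PSIbig^{n+1}-\PSIbig^n).
\end{equation*}
Substituting the right-hand side of \eqref{eq:cmf-single} for $\Pbig(\PSIbig^{n+1}-\PSIbig^n)$ yields two contributions. The interior contribution is proportional to
\begin{equation*}
  (\PSIbig^{n+1}+\PSIbig^n)^T\,(\J_1\otimes \HH)\,(\PSIbig^{n+1}+\PSIbig^n),
\end{equation*}
and this vanishes identically because $\J_1$ is skew-symmetric and $\HH$ is symmetric, so $\J_1\otimes \HH$ is skew-symmetric and every quadratic form in a skew-symmetric matrix is zero. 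The boundary contribution is $-\tfrac{\dt}{\hbar}(\PSIbig^{n+1}+\PSIbig^n)^T(\J_1\otimes \HH_\bot)[\nabla\PSIbig]_\bot^{n+\frac12}$, which is precisely $-\dt\,\mathcal I_P^{n+\frac12}$ by the definition \eqref{eq:current}. Dividing by $\dt$ yields \eqref{eq:probability-balance}.

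The main obstacle I anticipate is the positivity step rather than the telescoping step. The telescoping is essentially a bookkeeping exercise once the skew-symmetry of $\J_1\otimes\HH$ is noted. In contrast, identifying \eqref{eq:cfl-gen} as precisely the Schur-complement condition for $\Pbig\succeq 0$ requires the right change of variables by $(\volSec)^{-1/2}$, and some care must be taken to verify that $\HH$ is symmetric (so that $M$ is symmetric and its spectral radius equals its operator-2 norm). Assuming the CFL condition is stated with strict inequality, non-negativity holds globally in $n$ and does not require any induction—the bound $\mathcal P^n\ge 0$ is a pointwise property of the quadratic form.
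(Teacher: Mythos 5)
Your proposal is correct and follows essentially the same route as the paper: the non-negativity claim is established via the Schur complement of $\Pbig$, conjugation by $(\volSec)^{-1/2}$, and the spectral-radius characterization for the symmetric matrix $\tfrac{1}{\hbar}(\volSec)^{-1/2}\HH(\volSec)^{-1/2}$ (the paper isolates this as Lemma~\ref{lemma:Ppd<=>dt<CFLgen}), while the balance equation is obtained by the same telescoping identity, substitution of \eqref{eq:cmf-single}, and the skew-symmetry of $\J_1\otimes\HH$. No gaps.
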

Prior to showing the proof of the theorem, we elaborate on its meaning. 
When the generalized CFL condition~\eqref{eq:cfl-gen} holds, the largest amount of probability that the region can supply to the surrounding space via $\mathcal I_P^{n+0.5}$ over the course of the simulation is equal to the probability stored in the region at the beginning of the simulation~\cite{willems1972dissipative}. In contrast, when the time step exceeds the generalized CFL limit, there is no bound on how much probability can leave the region. Hence, violation of the generalized CFL condition allows the region to provide an infinite amount of spurious probability to the surrounding space. This behavior would be unphysical and would distort calculations of any quantity one wishes to obtain from the simulation.

The matrices involved in the expression for the generalized CFL limit in \eqref{eq:cfl-gen} depend only on the cell dimensions and on the potential profile, similarly to the conventional CFL limit~\eqref{eq:cfl}. Moreover, the following relation holds between the conventional and generalized CFL limits \eqref{eq:cfl} and \eqref{eq:cfl-gen}. 
\begin{theorem}
	\label{thm:cfl<=cflgen}
	Consider a region described by \eqref{eq:cmf-a}--\eqref{eq:cmf-b}. Let $\dt_{\text{CFL}}$ be the CFL limit in \eqref{eq:cfl} and let $\dt_{\text{CFL,gen}}$ be the generalized CFL limit in \eqref{eq:cfl-gen}. Then,
	\begin{equation}
		\dt_{\text{CFL}} \le \dt_{\text{CFL,gen}}\,.
	\end{equation}
\end{theorem}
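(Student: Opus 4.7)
The plan is to reduce the inequality to bounding the spectral radius in the denominator of $\dt_{\text{CFL,gen}}$. Concretely, I aim to show
\begin{equation*}
\rho\!\left(\tfrac{1}{\hbar}(\volSec)^{-\frac12}\HH(\volSec)^{-\frac12}\right) \;\le\; \tfrac{2\hbar}{m}\!\left(\tfrac{1}{(\dx)^2}+\tfrac{1}{(\dy)^2}+\tfrac{1}{(\dz)^2}\right) + \tfrac{\max_{i,j,k}|U|_{i,j,k}|}{\hbar},
\end{equation*}
since taking reciprocals of both sides then gives $\dt_{\text{CFL}} \le \dt_{\text{CFL,gen}}$. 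My first step is to split $\HH$ as in \eqref{eq:hh} into its kinetic part $\HH_K = \tfrac{\hbar^2}{2m}\D\areaSec(\lenPrim)^{-1}\D^T$ and its potential part $\volSec \diagon{U}$. Since $\volSec$ is diagonal, the similarity transform by $(\volSec)^{-\frac12}$ preserves symmetry of each summand, so I can apply the triangle inequality $\rho(A+B)\le \|A\|_2+\|B\|_2 = \rho(A)+\rho(B)$ valid for symmetric $A,B$, and bound each piece separately.

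The potential contribution is immediate: because all diagonal matrices commute, $(\volSec)^{-\frac12}\volSec\diagon{U}(\volSec)^{-\frac12} = \diagon{U}$, whose spectral radius divided by $\hbar$ gives exactly $\max_{i,j,k}|U|_{i,j,k}|/\hbar$. For the kinetic contribution, which is symmetric positive semi-definite, I would invoke the Rayleigh quotient and make the substitution $\mathbf{x}=(\volSec)^{-\frac12}\mathbf{y}$ to rewrite
\begin{equation*}
\rho\!\left(\tfrac{\hbar}{2m}(\volSec)^{-\frac12}\D\areaSec(\lenPrim)^{-1}\D^T(\volSec)^{-\frac12}\right)
= \tfrac{\hbar}{2m}\max_{\mathbf{x}\ne 0} \frac{\mathbf{x}^T\D\areaSec(\lenPrim)^{-1}\D^T\mathbf{x}}{\mathbf{x}^T\volSec\mathbf{x}}.
\end{equation*}
The numerator expands as a sum over primary edges of $(\Delta S''_e/\Delta l'_e)(x_{\text{tail}(e)}-x_{\text{head}(e)})^2$, which I would split into $x$-, $y$-, and $z$-oriented parts and attack each with $(a-b)^2\le 2(a^2+b^2)$.

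The main obstacle, and the step requiring the most care, is the boundary bookkeeping needed to conclude that each directional part is bounded by $4/(\dx)^2$, $4/(\dy)^2$, $4/(\dz)^2$ times $\mathbf{x}^T\volSec\mathbf{x}$ respectively. The key observation is a weight-matching lemma: for an $x$-oriented primary edge, the pierced secondary face has area $w_j w_k\,\dy\,\dz$ with $w_j, w_k \in \{\tfrac12,1\}$ encoding proximity to the $y$- and $z$-boundaries (cf.\ Fig.~\ref{fig:primary-secondary}), while the number of $x$-edges incident to a node of $x$-index $i$ is $1$ if $i\in\{1,n_x+1\}$ and $2$ otherwise, i.e.\ exactly $2w_i$. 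After applying the quadratic inequality and regrouping the $x$-term by node, these factors combine to give
\begin{equation*}
\sum_{x\text{-edges}} \tfrac{w_j w_k\,\dy\,\dz}{\dx}\,(x_{i+1,j,k}-x_{i,j,k})^2 \;\le\; \tfrac{4}{(\dx)^2}\sum_{i,j,k} w_iw_jw_k\,\dx\,\dy\,\dz\,x_{i,j,k}^2 \;=\; \tfrac{4}{(\dx)^2}\,\mathbf{x}^T\volSec\mathbf{x},
\end{equation*}
and analogously in $y$ and $z$. Summing the three directional bounds, multiplying by $\hbar/(2m)$, and adding the potential bound furnishes the claimed upper estimate on the spectral radius and hence the theorem.
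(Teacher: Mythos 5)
Your proposal is correct, but it takes a genuinely different route from the paper. The paper proves the theorem by localization: it first shows (Lemma~\ref{lemma:probability-sum-over-cells}) that the quadratic form $\PSIbig^T\Pbig\PSIbig$ decomposes as a sum of single-cell contributions, deduces that $\min_{i,j,k}\dt_{\text{CFL,gen}}^{(i,j,k)}\le\dt_{\text{CFL,gen}}$, and then proves $\dt_{\text{CFL}}^{(i,j,k)}\le\dt_{\text{CFL,gen}}^{(i,j,k)}$ for each single-cell region by exhibiting an explicit orthogonal eigenbasis (built from $\W_1^T$ and $|\W_1^T|$) that diagonalizes the $8\times 8$ kinetic matrix exactly, before applying the same symmetric triangle inequality $\rho(A+B)\le\|A\|_2+\|B\|_2$ that you use. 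You instead bound the spectral radius of the full-region matrix directly: same kinetic/potential splitting and norm subadditivity, but the kinetic part is handled by a Rayleigh quotient, the elementary inequality $(a-b)^2\le 2(a^2+b^2)$, and the edge-to-node weight matching $2w_i$ versus the halved secondary-face areas at the boundary. Your weight-matching step is the crux and it checks out: the factor of $2$ from the quadratic inequality times the incidence count $2w_i$ combines with $w_jw_k\,\dy\dz/\dx$ to give exactly $4/(\dx)^2$ times the $\volSec$-weighted node sum, and the three directional bounds plus $\rho(\diagon{U})/\hbar=\max_{i,j,k}|U|_{i,j,k}|/\hbar$ reproduce the denominator of \eqref{eq:cfl}. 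What the paper's route buys is the per-cell decomposition of the probability, which is reused conceptually elsewhere (the text remarks that $\dt<\dt_{\text{CFL}}$ ensures conservation cell by cell) and an exact spectrum for the constant-potential single cell; what your route buys is a shorter, self-contained global argument that avoids both the unproved-in-detail decomposition lemma and the explicit eigenvector construction. Both are rigorous; yours is arguably the more economical proof of this particular inequality.
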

\begin{proof}	
	See Appendix~\ref{appendix:proof-CFL<=CFLgen}.
\end{proof}
Hence, the CFL limit \eqref{eq:cfl} can be used in place of \eqref{eq:cfl-gen} in Theorem~\ref{thm:probability-conserv} as a sufficient condition to ensure probability conservation. 
The proof of Theorem~\ref{thm:cfl<=cflgen} is based on showing that the total probability can be written as the sum of probabilities associated with each cell and proving the statement of Theorem~\ref{thm:cfl<=cflgen} for each single-cell region. A similar approach has been used in \cite{edelvik2004general} in the context of electromagnetic energy. In essence, the condition $\dt < \dt_{\text{CFL}}$ ensures that probability is conserved in each primary cell and, consequently, in any region composed by multiple primary cells.
Theorems~\ref{thm:probability-conserv} and \ref{thm:cfl<=cflgen} give a new meaning to the CFL condition for stability~\eqref{eq:cfl}. Specifically, we recognize that the same condition can be used to also ensure the conservation of probability of a general open region in FDTD-Q.
Next, we provide a proof of Theorem~\ref{thm:probability-conserv}, starting with the following lemma.

\begin{lemma}
	\label{lemma:Ppd<=>dt<CFLgen}
	Matrix $\Pbig$ in \eqref{eq:mtx-P} has the following property:
	\begin{equation}
		\label{eq:Ppd<=>dt<CFLgen}
		\Pbig \succ 0 \quad \iff \quad \dt < \dt_{\text{CFL,gen}} \,,
	\end{equation}
	where ``$\succ 0$'' denotes a positive definite matrix.
\end{lemma}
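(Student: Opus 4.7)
The plan is to reduce $\Pbig$ to a diagonal form by a symmetric congruence and then read off the positivity condition directly from the resulting spectrum. First, I would note two structural facts: (i)~$\volSec$ is diagonal with strictly positive entries (the secondary cell volumes), so its principal square root $(\volSec)^{1/2}$ is well defined and invertible; and (ii)~$\HH$ in \eqref{eq:hh} is symmetric, since $\D \areaSec (\lenPrim)^{-1} \D^T$ has the form $\D \mathbf{A} \D^T$ with $\mathbf{A}$ diagonal, and $\volSec \diagon{U}$ is a product of two diagonal matrices. Hence $\mathbf{M} := (\volSec)^{-1/2} \HH (\volSec)^{-1/2}$ is real symmetric and admits an orthogonal diagonalization $\mathbf{M} = \mathbf{Q}\, \bm{\Lambda}\, \mathbf{Q}^T$ with real eigenvalues $\lambda_1,\dots,\lambda_N$.

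Next, I would conjugate $\Pbig$ by the symmetric invertible matrix $\mathbf{S} = \I_2 \otimes (\volSec)^{-1/2}$. By Sylvester's law of inertia, $\Pbig \succ 0$ if and only if $\mathbf{S}\,\Pbig\,\mathbf{S} \succ 0$, and a direct block computation using \eqref{eq:mtx-P} yields
\begin{equation*}
	\mathbf{S}\,\Pbig\,\mathbf{S} \;=\; \begin{bmatrix} \mathbf{I} & -\tfrac{\dt}{2\hbar}\,\mathbf{M} \\ -\tfrac{\dt}{2\hbar}\,\mathbf{M} & \mathbf{I} \end{bmatrix}.
\end{equation*}
A further orthogonal transformation by $\I_2 \otimes \mathbf{Q}$ replaces $\mathbf{M}$ with $\bm{\Lambda}$, and a symmetric permutation that pairs the $i$th coordinate of the top block with the $i$th coordinate of the bottom block decomposes the result into $2 \times 2$ diagonal blocks $\bigl[\begin{smallmatrix} 1 & -\tfrac{\dt}{2\hbar}\lambda_i \\ -\tfrac{\dt}{2\hbar}\lambda_i & 1 \end{smallmatrix}\bigr]$, one per eigenvalue of $\mathbf{M}$.

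Finally, each such $2 \times 2$ block has trace $2 > 0$, so it is positive definite if and only if its determinant $1 - (\dt/(2\hbar))^2 \lambda_i^2$ is positive, i.e., $(\dt/(2\hbar))|\lambda_i| < 1$. Requiring this for every $i$ is equivalent to $(\dt/(2\hbar))\,\rho(\mathbf{M}) < 1$, since for a symmetric matrix $\rho(\mathbf{M}) = \max_i|\lambda_i|$; rearranging gives exactly $\dt < 2/\rho\bigl((1/\hbar)\mathbf{M}\bigr) = \dt_{\text{CFL,gen}}$, which is the claim.

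I do not anticipate a serious obstacle. The only points that require any care are verifying that $\HH$ is symmetric (immediate from \eqref{eq:hh}) and matching $\rho(\mathbf{M})$ with the denominator in \eqref{eq:cfl-gen}; the rest is a routine block-diagonalization argument via congruence and orthogonal similarity.
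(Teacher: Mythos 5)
Your proof is correct, and it arrives at exactly the same scalar conditions as the paper's proof by a slightly different decomposition. The paper first invokes the Schur-complement criterion on the $2\times 2$ block structure of $\Pbig$, reducing positive definiteness to the single $N\times N$ condition $\volSec - \left(\tfrac{\dt}{2\hbar}\right)^2\HH(\volSec)^{-1}\HH \succ 0$, and only afterwards applies the congruence by $(\volSec)^{-1/2}$ and the spectral decomposition of $\mathbf{\Sigma} = \tfrac{1}{\hbar}(\volSec)^{-1/2}\HH(\volSec)^{-1/2}$. You instead apply the congruence $\I_2\otimes(\volSec)^{-1/2}$ to the full block matrix at the outset and then split it, via an orthogonal similarity and a symmetric permutation, into $2\times 2$ blocks indexed by the eigenvalues $\lambda_i$ of $\mathbf M = (\volSec)^{-1/2}\HH(\volSec)^{-1/2}$; the block determinants $1-\left(\tfrac{\dt}{2\hbar}\right)^2\lambda_i^2$ are precisely the eigenvalues of the paper's matrix $\I-\left(\tfrac{\dt}{2}\right)^2\mathbf{\Sigma}^2$, so the two arguments coincide at the level of the final inequalities. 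Your route dispenses with the Schur-complement lemma and has the minor advantage of exhibiting the full spectrum of the congruence-transformed $\Pbig$ explicitly as $\{1\pm\tfrac{\dt}{2\hbar}\lambda_i\}$, which makes the failure mode at $\dt = \dt_{\text{CFL,gen}}$ (a zero eigenvalue appearing in one of the $2\times 2$ blocks) completely transparent; the paper's route is marginally shorter because it avoids the permutation bookkeeping. Both hinge on the same two facts you verify at the start: $\volSec$ is diagonal with positive entries, and $\HH$ is symmetric.
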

\begin{proof}
	Consider matrix $\Pbig$ defined by \eqref{eq:mtx-P}. Using the properties of the Schur complement~\cite{boyd-convex-2004}, the condition $\Pbig \succ 0$ holds if and only if
	\begin{equation}
		\label{eq:P-Schur}
		\begin{cases}
		\volSec \succ 0 \\
		\volSec - \left(\dfrac{\dt}{2\hbar}\right)^2 \HH (\volSec)^{-1} \HH \succ 0
		\end{cases}. 
	\end{equation}
	The first condition in \eqref{eq:P-Schur} holds for any time step. The second condition can be simplified by writing the equivalent~\cite{golub-matrix-comput-4ed} condition
	\begin{equation}
		\label{eq:proofP>0eqCFLgenIntermid}
		(\volSec)^{-\frac12}\left(\volSec - \left(\dfrac{\dt}{2\hbar}\right)^2 \HH (\volSec)^{-1} \HH \right) (\volSec)^{-\frac12} 
		\succ 0\,,
	\end{equation}
	which reduces to
	\begin{equation}
		\I - \left(\frac{\dt}{2}\right)^2\mathbf{\Sigma}^2 \succ 0\,,
	\end{equation}
	where
	\begin{equation}
		\mathbf{\Sigma} = \dfrac{1}{\hbar} (\volSec)^{-\frac12}
		\HH (\volSec)^{-\frac12}\,.
	\end{equation}
	Let
	\begin{equation}
		\label{eq:P>0<=>dt<CFLgen-mtx-decompose}
		\mathbf{\Sigma}
		=
		\mathbf Q \mathbf{\Lambda} \mathbf Q^H
	\end{equation}
	be a Schur decomposition of the symmetric real (hence normal) matrix $\mathbf{\Sigma}$~\cite{golub-matrix-comput-4ed}, where $\mathbf Q$ is a square unitary matrix, $(.)^H$ denotes a conjugate transpose, and $\mathbf{\Lambda}$ is a diagonal matrix containing the real eigenvalues of $\mathbf{\Sigma}$. Then~\cite{golub-matrix-comput-4ed}, 
	\begin{equation}
		\I - \left(\frac{\dt}{2}\right)^2\mathbf{\Sigma}^2 \succ 0 
		\quad \iff \quad 
		\I - \left(\frac{\dt}{2}\right)^2 \mathbf{\Lambda}^2\succ 0  
		\quad \iff \quad
		\dt < \dfrac{2}{\rho(\mathbf{\Sigma)}}\,,
	\end{equation}
	proving~\eqref{eq:Ppd<=>dt<CFLgen}.
\end{proof}
\begin{proof}[Proof of Theorem~\ref{thm:probability-conserv}]
	Assume the time step is taken below the generalized CFL limit~\eqref{eq:cfl-gen}. From Lemma~\ref{lemma:Ppd<=>dt<CFLgen}, this implies that $\Pbig$ is positive definite. With this, \eqref{eq:bound-on-P} follows directly from Definition~\ref{def:probability}.
	
	The relation \eqref{eq:probability-balance} can be shown by expanding the left hand side using Definition~\ref{def:probability}
	\begin{equation}
		\dfrac{
			\mathcal P^{n+1} - \mathcal P^n
		}{\dt}
		=
		\dfrac{
			(\PSIbig^{n+1})^T \Pbig \PSIbig^n - (\PSIbig^{n})^T \Pbig \PSIbig^n
		}{\dt}
		=
		2 \dfrac{(\PSIbig^{n+1}+\PSIbig^n)^T}{2} \Pbig \dfrac{\PSIbig^{n+1}-\PSIbig^n}{\dt}\,.
	\end{equation}
	Using \eqref{eq:cmf-single},
	\begin{equation}
		\dfrac{
			\mathcal P^{n+1} - \mathcal P^n
		}{\dt}
		=		
		\dfrac{2}{\hbar} \dfrac{(\PSIbig^{n+1}+\PSIbig^n)^T}{2} \left(\J_1 \otimes \HH\right)
		\dfrac{\PSIbig^{n+1} + \PSIbig^n}2 
		- \dfrac{2}{\hbar} \dfrac{(\PSIbig^{n+1}+\PSIbig^n)^T}{2} (\J_1 \otimes \HH_{\bot})
		[\nabla \PSIbig]_{\bot}^{n+\frac12}\,,
	\end{equation}
	and using the fact that $\J_1 \otimes \HH$ is a skew-symmetric matrix,
	\begin{equation}
		\dfrac{
			\mathcal P^{n+1} - \mathcal P^n
		}{\dt}
		=	
		- \dfrac{2}{\hbar} \dfrac{(\PSIbig^{n+1}+\PSIbig^n)^T}{2} (\J_1 \otimes \HH_{\bot})
		[\nabla \PSIbig]_{\bot}^{n+\frac12} 
		= 
		-\mathcal I_P^{n+\frac12}\,,
	\end{equation}
	which proves \eqref{eq:probability-balance}.
\end{proof}

\section{Energy conservation}
\label{sec:energy-conserv}

In this section, we propose expressions for the total energy in the region and the power supplied through its boundary and study conditions under which these expressions satisfy the principle of energy conservation. We find that energy conservation can be demonstrated under the generalized CFL limit \eqref{eq:cfl-gen} if the total probability~\eqref{eq:probability} is bounded from above. We further argue that the existence of the upper bound on probability is guaranteed as long as the model of the space outside the region conserves probability as well.

\subsection{Total energy and supplied power}

The following expression can be used to describe the energy associated with the region
\begin{equation}\label{eq:hcont}
	\mathcal H(t) = \iiint_V W(x,y,z,t) \ dV\,,
\end{equation}
where $W$ is the energy density\footnote{
	Different expressions can be chosen to represent the kinetic energy contribution $W_{\text{kin}}$ to the energy density \eqref{eq:energy-den-cont}. In this work we choose $W_{\text{kin}}^{(1)} = \frac{\hbar^2}{2m} \nabla\psi^{*}\cdot \nabla \psi$, which appears in \cite{cohen-1979-jcp,chaus-1992-umj-energy-flux-quantum,hong-2006-anm}. Expression $W_{\text{kin}}^{(2)} = - \frac{\hbar^2}{2m} \psi^{*}\nabla^2 \psi$ from~\cite{miller-2008, cohen-1979-jcp} is one possible alternative. 
	When considering the entire space in \eqref{eq:hcont}, the two expressions $W_{\text{kin}}^{(1)}$ and $W_{\text{kin}}^{(2)}$ can be shown to give the same value of $\mathcal H(t)$~\cite{cohen-1979-jcp}. However, this is not the case when a finite region $V$ is considered. 
	Various expressions for $W_{\text{kin}}$, including $W_{\text{kin}}^{(1)}$ and $W_{\text{kin}}^{(2)}$, have been studied in \cite{cohen-1979-jcp}.
}~\cite{chaus-1992-umj-energy-flux-quantum}
\begin{equation}\label{eq:energy-den-cont}
	W(x,y,z,t) = \frac{\hbar^2}{2m} \nabla \psi^* \cdot \nabla \psi 
	+ U \psi^* \psi
	=\frac{\hbar^2}{2m} \nabla \psi_R \cdot \nabla \psi_R 
	+ \frac{\hbar^2}{2m} \nabla \psi_I \cdot \nabla \psi_I 
	+ U \psi_R^2 + U \psi_I^2   \,.
\end{equation}
The corresponding power entering the region through the boundary is given by
\begin{equation}
	\label{eq:power-cont}
	s(t)
	=
	\oiint_S \vec S(x,y,z,t) \cdot (-\hat n) \ dS	\,,
\end{equation}
where $\vec S$ is the energy flux density given by~\cite{chaus-1992-umj-energy-flux-quantum}
\begin{equation}
	\vec S(x,y,z,t) =
	\dfrac{i \hbar}{2m}\left( 
	\left(-\frac{\hbar^2}{2m}\nabla^2 \psi + U \psi\right)\nabla \psi^* - \left(-\frac{\hbar^2}{2m}\nabla^2 \psi + U \psi\right)^* \nabla \psi
	\right)
	=
	- \frac{\hbar^2}{m} \dfrac{\partial \psi_R}{\partial t}	\nabla \psi_R
	- \frac{\hbar^2}{m} \dfrac{\partial \psi_I}{\partial t} \nabla \psi_I\,.
\end{equation} 
The proposed definitions of the total energy and energy flux in an FDTD-Q region serve as discrete counterparts of \eqref{eq:hcont} and \eqref{eq:power-cont}.

Similarly to the case of probability, one could define the total energy in FDTD-Q by directly discretizing the volume integration and the gradient of the wavefunction in \eqref{eq:hcont}, arriving at
\begin{equation}
	\label{eq:H-naive-expanded}
	\mathcal H_{\text{simple}}^n 
	= \dfrac{\hbar^2}{2m} (\nabla \PSI_R^{n})^T\areaSec\lenPrim (\nabla \PSI_R^n) 
	+ \dfrac{\hbar^2}{2m} (\nabla\PSI_I^{n-\frac12})^T \areaSec \lenPrim (\nabla\PSI_I^{n-\frac12})
	+ (\PSI_R^{n})^T\volSec \diagon{U}\PSI_R^n
	+ (\PSI_I^{n-\frac12})^T\volSec \diagon{U}\PSI_I^{n-\frac12}\,,
\end{equation}
where $\nabla \PSI_R^n$ and $\nabla\PSI_I^{n-0.5}$ are defined in \eqref{eq:gradPsiR} and \eqref{eq:gradPsiI}, respectively. Using \eqref{eq:hh}, \eqref{eq:H-naive-expanded} can be written more compactly as
\begin{equation}
	\label{eq:H-naive}
	\mathcal H_{\text{simple}}^n 
	= (\PSI_R^n)^T \HH \PSI_R^n + (\PSI_I^{n-\frac12})^T \HH \PSI_I^{n-\frac12}\,.
\end{equation}
Similarly to $\mathcal P_{\text{simple}}^n$, $\mathcal H_{\text{simple}}^n$ does not respect the energy conservation principle, as demonstrated in Section~\ref{sec:ne}. Instead, we propose expressions for the total energy and the corresponding supplied power for which the energy conservation can be shown.

\begin{definition}[Total energy]
	The total energy stored in a region described by FDTD-Q equations \eqref{eq:cmf-a}--\eqref{eq:cmf-b} is given by
	\begin{equation}
		\label{eq:energy}
		\mathcal H^{n} = (\PSI_R^n)^T \HH \PSI_R^n + (\PSI_I^{n-\frac12})^T \HH \PSI_I^{n-\frac12} + \dt \dfrac{(\PSI_R^n-\PSI_R^{n-1})^T}{\dt} (\hbar \volSec) \dfrac{\PSI_I^{n+\frac12}-\PSI_I^{n-\frac12}}{\dt} \,, 
		\quad n = 1, 2, \hdots, n_t-1\,.
	\end{equation}
\end{definition}
The expression for $\mathcal H^n$ consists of $\mathcal H_{\text{simple}}^n$ and a term proportional to $\dt$, which would vanish when the time step approaches zero. In that respect, the expression \eqref{eq:energy} somewhat resembles the expressions for energy in FDTD for Maxwell's equations~\cite{edelvik2004general}. The last term in \eqref{eq:energy} is needed to ensure that the principle of energy conservation is respected, as will be shown in the subsequent discussion.

In order to see why $\mathcal H^n$ approximates \eqref{eq:hcont}, we rewrite \eqref{eq:energy} using \eqref{eq:cmf-a} as
\begin{equation}
	\mathcal H^{n} = (\PSI_R^n)^T \HH \PSI_R^n + (\PSI_I^{n-\frac12})^T \HH \PSI_I^{n-\frac12} + \dt \left(\HH \PSI_I^{n-\frac12}
	- \HH_{\bot} [\nabla \PSI_I]_{\bot}^{n-\frac12}\right)^T \dfrac{\PSI_I^{n+\frac12}-\PSI_I^{n-\frac12}}{\dt}\,, 
	\quad n = 1, 2, \hdots, n_t-1\,,
\end{equation}
which simplifies to
\begin{equation}
	\mathcal H^{n} = (\PSI_R^n)^T \HH \PSI_R^n+ (\PSI_I^{n-\frac12})^T \HH \PSI_I^{n+\frac12}	
	- \dt 
	\left(\HH_{\bot}[\nabla \PSI_I]_{\bot}^{n-\frac12}\right)^T
	 \dfrac{\PSI_I^{n+\frac12}-\PSI_I^{n-\frac12}}{\dt}
	\,, 
	\quad n = 1, 2, \hdots, n_t-1\,.
\end{equation}
Assuming $\dt$ is small 
\begin{equation}\label{eq:happrox-n}
	\mathcal H^n 
	\approx (\PSI_R^n)^T \HH \PSI_R^n+ (\PSI_I^{n-\frac12})^T \HH \PSI_I^{n+\frac12}
\end{equation}
and using the definition of matrix $\HH$ in \eqref{eq:hh}, 
\begin{equation}\label{eq:happrox-n-expanded}
	\mathcal H^n 
	\approx 
	\dfrac{\hbar^2}{2m} (\nabla \PSI_R^n)^T \areaSec \lenPrim (\nabla \PSI_R^n)
	+ 
	\dfrac{\hbar^2}{2m} (\nabla\PSI_I^{n-\frac12})^T \areaSec \lenPrim (\nabla\PSI_I^{n+\frac12})
	+
	(\PSI_R^n)^T \volSec \diagon{U} \PSI_R^n
	+
	(\PSI_I^{n-\frac12})^T \volSec \diagon{U} \PSI_I^{n+\frac12}	\,.
\end{equation}
From \eqref{eq:happrox-n-expanded}, $\mathcal H^n$ can be seen as an approximation of~\eqref{eq:hcont} at\footnote{Energy $\mathcal H^n$ can also be interpreted as an approximation of \eqref{eq:hcont} at $t = (n-0.5)\Delta t$ by rewriting \eqref{eq:energy} using \eqref{eq:cmf-b} as 
	\begin{equation}\label{eq:h-before-approx-n}
		\mathcal H^{n} = 
		(\PSI_R^{n-1})^T\HH \PSI_R^{n}
		+(\PSI_I^{n-\frac12})^T \HH \PSI_I^{n-\frac12} 
		+ \dt \dfrac{(\PSI_R^n - \PSI_R^{n-1} )^T}{\dt}\HH_{\bot} [\nabla \PSI_R]_{\bot}^n,
		\quad n = 1, 2, \hdots, n_t-1\,
	\end{equation}
	and neglecting the last term. After substituting the definition of $\HH$ in \eqref{eq:hh}, one would obtain
	\begin{equation}\label{eq:happrox-n-0.5}
		\mathcal H^n \approx 
		\dfrac{\hbar^2}{2m} (\nabla \PSI_R^{n-1})^T \areaSec \lenPrim (\nabla \PSI_R^{n})
		+ 
		\dfrac{\hbar^2}{2m} (\nabla\PSI_I^{n-\frac12})^T \areaSec \lenPrim (\nabla\PSI_I^{n-\frac12})
		+
		(\PSI_R^{n-1})^T \volSec \diagon{U} \PSI_R^n
		+
		(\PSI_I^{n-\frac12})^T \volSec \diagon{U} \PSI_I^{n-\frac12} \,.
	\end{equation}}
 $t = n\dt$.

\begin{definition}[Supplied power]
	The power supplied through the boundary to a region described by \eqref{eq:cmf-a}--\eqref{eq:cmf-b} is given by
	\begin{multline}
		\label{eq:supply}
		s^{n+\frac12} = 
		2 \dfrac{(\PSI_R^{n+1}-\PSI_R^n)^T}{\dt} 
		\HH_{\bot} 
		\dfrac{[\nabla \PSI_R]_{\bot}^{n+1} + [\nabla \PSI_R]_{\bot}^{n}}2
		+ 
		2 \dfrac{(\PSI_I^{n+\frac12}-\PSI_I^{n-\frac12})^T}{\dt} 
		\HH_{\bot} 
		\dfrac{[\nabla \PSI_I]_{\bot}^{n+\frac12} + [\nabla \PSI_I]_{\bot}^{n-\frac12}}{2}\,,\\
		 n = 1, 2, \hdots, n_t-2\,.
	\end{multline}
\end{definition}
In order to reveal the similarity between \eqref{eq:supply} and \eqref{eq:power-cont}, we expand \eqref{eq:supply} using the definition of $\HH_{\bot}$ in \eqref{eq:mtx-Hbot} to obtain
\begin{multline}
	\label{eq:supply-expanded}
	s^{n+\frac12} = 
	\frac{\hbar^2}{m} \left(\Lbot^T \dfrac{\PSI_R^{n+1}-\PSI_R^n}{\dt} \right)^T  \ndot \areaSecBndry
	\dfrac{[\nabla \PSI_R]_{\bot}^{n+1} + [\nabla \PSI_R]_{\bot}^{n}}2\\
	+ 
	\frac{\hbar^2}{m} \left(\Lbot^T \dfrac{\PSI_I^{n+\frac12}-\PSI_I^{n-\frac12}}{\dt}\right)^T \ndot \areaSecBndry 
	\dfrac{[\nabla \PSI_I]_{\bot}^{n+\frac12} + [\nabla \PSI_I]_{\bot}^{n-\frac12}}{2}\,.
\end{multline}
The first term in \eqref{eq:supply-expanded} is an approximation of the part of \eqref{eq:power-cont} associated with $\psi_R$ at $t = (n+0.5)\dt$. The second term is an approximation of the part of \eqref{eq:power-cont} associated with $\psi_I$ at $t = n\dt$.

\subsection{Energy conservation}

The energy conservation properties of \eqref{eq:energy} and \eqref{eq:supply} are verified in a similar way to the probability conservation. In particular, using concepts from dissipative systems theory \cite{willems1972dissipative,byrnes1994losslessness}, we investigate the conditions for which the energy is bounded from below and show that the rate of change of energy equals the power supplied to the region through the boundary.
\begin{theorem}
	\label{thm:energy-conserv}
	
	Consider a region described by FDTD-Q equations \eqref{eq:cmf-a}--\eqref{eq:cmf-b}. Assume that $\dt < \dt_{\text{CFL,gen}}$ in \eqref{eq:cfl-gen} and that the total probability \eqref{eq:probability} is bounded from above by some finite value $\mathcal P_{\max}$
	\begin{equation}\label{eq:en-conserv-P-bound-condition}
		\mathcal P^n \le \mathcal P_{\max} \,, \quad n = 0,1,\hdots, n_t \,.
	\end{equation}
	For this region, total energy \eqref{eq:energy} is bounded from below as follows
	\begin{equation}
		\label{eq:H-lower-bound}
		\mathcal H^n \ge 
		\dx \dy \dz
		\dfrac{\mathcal P_{\max}}{\lambda_{\min}(\Pbig)}
		\left(
		\min
		\left(\min_{i,j,k} U|_{i,j,k}, 0\right) -\dfrac{4\hbar}{\dt} 
		\right)  \,, 
		\quad \forall n = 1, 2, \hdots, n_t-1\,,
	\end{equation}
	where $\lambda_{\min}$ denotes the smallest eigenvalue of a symmetric real matrix.
	Moreover, the total energy \eqref{eq:energy} and the supplied power \eqref{eq:supply} satisfy the following relation:
	\begin{equation}
		\label{eq:energy-balance}
		\dfrac{
			\mathcal H^{n+1} - \mathcal H^n
		}{\dt} = s^{n+\frac12}
		\,, \quad 
		\forall n = 1, 2, \hdots, n_t-2\,.
	\end{equation}
\end{theorem}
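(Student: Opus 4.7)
The two conclusions split naturally: \eqref{eq:energy-balance} is an algebraic identity that follows from the structure of $\mathcal{H}^n$ and the update equations, while \eqref{eq:H-lower-bound} is an analytic estimate relying on $\Pbig \succ 0$ from Lemma~\ref{lemma:Ppd<=>dt<CFLgen}.

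For the balance, I would decompose $\mathcal{H}^n = A_n + B_n + C_n$, where $A_n = (\PSI_R^n)^T \HH \PSI_R^n$, $B_n = (\PSI_I^{n-\frac12})^T \HH \PSI_I^{n-\frac12}$, and $C_n = \hbar (\PSI_R^n - \PSI_R^{n-1})^T \volSec (\PSI_I^{n+\frac12} - \PSI_I^{n-\frac12})/\dt$. Since $\HH$ is symmetric, the identity $x^T \HH x - y^T \HH y = (x+y)^T \HH (x-y)$ gives $A_{n+1} - A_n = (\PSI_R^{n+1}+\PSI_R^n)^T \HH (\PSI_R^{n+1}-\PSI_R^n)$, and analogously for $B_{n+1}-B_n$. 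I would then use \eqref{eq:cmf-b} at time indices $n$ and $n+1$ to replace $\HH(\PSI_R^{n+1}+\PSI_R^n)$ by $\HH_{\bot}([\nabla \PSI_R]_\bot^{n+1}+[\nabla \PSI_R]_\bot^n) - \hbar\volSec(\PSI_I^{n+\frac32}-\PSI_I^{n-\frac12})/\dt$, and use \eqref{eq:cmf-a} at $n$ and $n-1$ to replace $\HH(\PSI_I^{n+\frac12}+\PSI_I^{n-\frac12})$ by $\hbar\volSec(\PSI_R^{n+1}-\PSI_R^{n-1})/\dt + \HH_{\bot}([\nabla \PSI_I]_\bot^{n+\frac12}+[\nabla \PSI_I]_\bot^{n-\frac12})$. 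Expanding $C_{n+1}-C_n$ directly and using the symmetry $\volSec^T = \volSec$, all three groups of $\volSec$-bilinear contributions telescope and cancel exactly, leaving only boundary terms whose sum equals $\dt \cdot s^{n+\frac12}$. The role of the correction $C_n$ inside $\mathcal{H}^n$ thus becomes clear: it is engineered precisely to annihilate the internal $\volSec$-cross-products produced by the substitutions.

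For the lower bound, $\dt < \dt_{\text{CFL,gen}}$ combined with Lemma~\ref{lemma:Ppd<=>dt<CFLgen} gives $\Pbig \succ 0$, so $\|\PSIbig^n\|_2^2 \le \mathcal{P}^n/\lambda_{\min}(\Pbig) \le \mathcal{P}_{\max}/\lambda_{\min}(\Pbig)$. Because the largest diagonal entry of $\volSec$ equals $\dx \dy \dz$, this yields $(\PSI_R^n)^T \volSec \PSI_R^n + (\PSI_I^{n-\frac12})^T \volSec \PSI_I^{n-\frac12} \le \dx \dy \dz \cdot \mathcal{P}_{\max}/\lambda_{\min}(\Pbig)$. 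The kinetic block of $\HH$ in \eqref{eq:hh} is positive semi-definite, so $A_n + B_n \ge \min(\min_{i,j,k} U|_{i,j,k}, 0) \cdot [(\PSI_R^n)^T \volSec \PSI_R^n + (\PSI_I^{n-\frac12})^T \volSec \PSI_I^{n-\frac12}]$, which produces the first summand in \eqref{eq:H-lower-bound} after multiplying by the (non-positive) constant. For $C_n$, Young's inequality $|u^T \volSec v| \le \tfrac12 (u^T \volSec u + v^T \volSec v)$ combined with $(a-b)^T \volSec (a-b) \le 2(a^T \volSec a + b^T \volSec b)$ and the probability bound gives $|C_n| \le 4\hbar \dx \dy \dz \mathcal{P}_{\max}/(\dt \, \lambda_{\min}(\Pbig))$, producing the $-4\hbar/\dt$ term.

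The main obstacle is the bookkeeping in the balance step: one must track signs and shifted time indices across six substitutions (two uses of \eqref{eq:cmf-b}, two of \eqref{eq:cmf-a}, and the two pieces of $C_{n+1}-C_n$) and verify that every purely internal $\volSec$-bilinear survives only as a telescoping cancellation. A single sign slip would leave a spurious volumetric remainder and break the identity. Once the algebraic cancellation is done, both the lower bound and the relation \eqref{eq:energy-balance} follow from the routine estimates above.
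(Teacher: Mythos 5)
Your proposal is correct and follows essentially the same route as the paper's proof: the same decomposition of $\mathcal H^n$, the same substitutions from \eqref{eq:cmf-a}--\eqref{eq:cmf-b} at the shifted time indices leading to the telescoping cancellation of the $\volSec$-bilinear terms, and the same use of Lemma~\ref{lemma:Ppd<=>dt<CFLgen} plus the bound $\|\PSIbig^n\|_2^2 \le \mathcal P_{\max}/\lambda_{\min}(\Pbig)$ for the lower bound. The only cosmetic differences are your use of Young's inequality in place of the paper's Cauchy--Schwarz estimate with $\|\volSec\|_2$ for the correction term (both yield the constant $4\hbar/\dt$).
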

Before proving Theorem~\ref{thm:energy-conserv}, we argue that the condition \eqref{eq:en-conserv-P-bound-condition} can be assumed if the model of the space outside the region obeys the principle of probability conservation.
\begin{lemma}\label{lemma:Pmax-condition-validity}
	Consider the region in Fig.~\ref{fig:primary-secondary} described by \eqref{eq:cmf-a}--\eqref{eq:cmf-b}. Let the time step be taken below the generalized CFL limit~\eqref{eq:cfl-gen}. Let 
	\begin{equation}\label{eq:Poutside-nonnegative}
	\mathcal P^n_{\dagger} \ge 0	\,, \quad \forall n = 0, 1, \hdots, n_t
	\end{equation}
	be the probability associated with the space outside the region and let $\mathcal I_{P \dagger}^{n+0.5}$, the probability current leaving that space, satisfy
	\begin{equation}\label{eq:probability-balance-outside}
		\dfrac{\mathcal P_{\dagger}^{n+1} - \mathcal P_{\dagger}^n}{\dt} = -\mathcal I_{P\dagger}^{n+\frac12} \,, \quad \forall n = 0, 1, \hdots, n_t-1 \,.
	\end{equation}
	Furthermore, assume that the probability current leaving the region equals the current entering the space surrounding the region:
	\begin{equation}\label{eq:bc-region-and-outside}
		\mathcal I_{P}^{n+\frac12} = - \mathcal I_{P\dagger}^{n+\frac12} \,, \quad \forall n = 0, 1, \hdots, n_t-1\,.
	\end{equation}
	Then the probability associated with the region has a finite a priori upper bound
	\begin{equation}\label{eq:Pmax-in-Lemma-Pmax}
		\mathcal P^n \le \mathcal P_{\max} \,, \quad \forall n = 0, 1, \hdots n_t\,,
	\end{equation} 
	where $\mathcal P_{\max} = \mathcal P^0 + \mathcal P_{\dagger}^{0}$.
\end{lemma}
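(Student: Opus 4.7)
The plan is to combine the discrete probability balance for the FDTD-Q region, provided by Theorem~\ref{thm:probability-conserv}, with the analogous balance assumed for the exterior in~\eqref{eq:probability-balance-outside}, and then exploit the boundary matching condition~\eqref{eq:bc-region-and-outside} to show that the sum $\mathcal P^n + \mathcal P_\dagger^n$ is a conserved quantity of the coupled system. The desired upper bound on $\mathcal P^n$ then follows immediately from the non-negativity assumption~\eqref{eq:Poutside-nonnegative} on $\mathcal P_\dagger^n$.

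More concretely, first I would invoke Theorem~\ref{thm:probability-conserv}, which is applicable because $\dt < \dt_{\text{CFL,gen}}$, to write
\begin{equation*}
\frac{\mathcal P^{n+1} - \mathcal P^n}{\dt} = -\mathcal I_P^{n+\frac12}, \qquad n = 0,1,\ldots,n_t-1.
\end{equation*}
Adding this to~\eqref{eq:probability-balance-outside} and substituting~\eqref{eq:bc-region-and-outside} yields
\begin{equation*}
\frac{(\mathcal P^{n+1} + \mathcal P_\dagger^{n+1}) - (\mathcal P^n + \mathcal P_\dagger^n)}{\dt} = -\mathcal I_P^{n+\frac12} - \mathcal I_{P\dagger}^{n+\frac12} = 0,
\end{equation*}
so the total probability of the coupled system is preserved from one step to the next. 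A straightforward induction (or telescoping) on $n$ then gives $\mathcal P^n + \mathcal P_\dagger^n = \mathcal P^0 + \mathcal P_\dagger^0 = \mathcal P_{\max}$ for every $n$.

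Finally, using~\eqref{eq:Poutside-nonnegative} to drop the non-negative term $\mathcal P_\dagger^n$ gives $\mathcal P^n \le \mathcal P^n + \mathcal P_\dagger^n = \mathcal P_{\max}$, which is~\eqref{eq:Pmax-in-Lemma-Pmax}. The argument is essentially a discrete version of the continuity-equation bookkeeping used in the continuous setting, and no real obstacle is anticipated; the only subtlety worth checking is that Theorem~\ref{thm:probability-conserv}'s balance holds on exactly the same range of indices $n = 0,\ldots,n_t-1$ as the hypotheses~\eqref{eq:probability-balance-outside} and~\eqref{eq:bc-region-and-outside}, so that the telescoping sum is valid over the full range $0,\ldots,n_t$ stated in the conclusion. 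If one wishes to be completely explicit, the induction base case is $n=0$, where $\mathcal P^0 \le \mathcal P^0 + \mathcal P_\dagger^0 = \mathcal P_{\max}$ trivially by~\eqref{eq:Poutside-nonnegative}, and the inductive step is the one-line conservation identity above.
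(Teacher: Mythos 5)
Your proposal is correct and follows essentially the same route as the paper's own proof: add the region's balance relation \eqref{eq:probability-balance} from Theorem~\ref{thm:probability-conserv} to the exterior balance \eqref{eq:probability-balance-outside}, use \eqref{eq:bc-region-and-outside} to conclude $\mathcal P^n + \mathcal P_{\dagger}^n = \mathcal P^0 + \mathcal P_{\dagger}^0$, and drop the non-negative exterior term. No gaps.
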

\begin{proof}
	From Theorem~\ref{thm:probability-conserv}, $\mathcal P^n$ and $\mathcal I_P^{n+0.5}$ satisfy \eqref{eq:probability-balance}. Adding \eqref{eq:probability-balance} and \eqref{eq:probability-balance-outside} and using \eqref{eq:bc-region-and-outside},
	\begin{equation}\label{eq:change-tot-prob-lemma-Pmax-validity}
		\dfrac{(\mathcal P^{n+1}+\mathcal P_{\dagger}^{n+1}) - (\mathcal P^n+\mathcal P_{\dagger}^n)}{\dt} = -\mathcal I_{P}^{n+\frac12} - \mathcal I_{P\dagger}^{n+\frac12} = 0 \,, \quad \forall n = 0, 1, \hdots, n_t-1 \,.
	\end{equation}
	From \eqref{eq:change-tot-prob-lemma-Pmax-validity}, 
	\begin{equation}
		\mathcal P^n + \mathcal P_{\dagger}^n = \mathcal P^0 + \mathcal P_{\dagger}^0  \,, \quad \forall n = 0, 1, \hdots, n_t\,.
	\end{equation}
	Hence, using \eqref{eq:Poutside-nonnegative} we conclude
	\begin{equation}
		\mathcal P^n \le\mathcal P^0 + \mathcal P_{\dagger}^0  \,, \quad \forall n = 0, 1, \hdots, n_t\,.
	\end{equation}
	proving \eqref{eq:Pmax-in-Lemma-Pmax}.
\end{proof}
Typically, the simulation setup would be such that $\mathcal P_{\max}$ in Lemma~\ref{lemma:Pmax-condition-validity} equals to one. A region terminated in zero Dirichlet or zero Neumann boundary conditions would constitute a trivial case of the lemma, with $\mathcal P_{\dagger}^n = 0$ and $\mathcal I_{P\dagger}^{n+0.5} = 0$. Section~\ref{sec:stability-joint} considers in more detail an example of a setup in Lemma~\ref{lemma:Pmax-condition-validity} consisting of two connected regions, as well as the boundary conditions at their interface that ensure \eqref{eq:bc-region-and-outside}. Next, we prove Theorem~\ref{thm:energy-conserv}. 

\begin{proof}[Proof of Theorem~\ref{thm:energy-conserv}.]
	
	First, let us derive the bound \eqref{eq:H-lower-bound}. Under the generalized CFL limit \eqref{eq:cfl-gen}, all eigenvalues of $\Pbig$ are strictly positive (Lemma~\ref{lemma:Ppd<=>dt<CFLgen}). This allows us to derive an upper bound on $||\PSIbig^n||_2$, which will then be used to derive \eqref{eq:H-lower-bound}:
	\begin{equation}
		\lambda_{\min}(\Pbig)||\PSIbig^n||_2^2 \le (\PSIbig^n) \Pbig \PSIbig^n \le \mathcal P_{\max} \,, \quad n = 0, 1, \hdots, n_t\,,
	\end{equation}
	\begin{equation}
		\label{eq:bound-on-norm-psibig}
		||\PSIbig^n||_2 \le \sqrt{\dfrac{\mathcal P_{\max}}{\lambda_{\min}(\Pbig)}}\,, 
		\quad n = 0, 1, \hdots, n_t\,.
	\end{equation}
	
	The first two terms in \eqref{eq:energy} can be written using \eqref{eq:hh} and \eqref{eq:mtx-psibig} as
	\begin{equation}
		\label{eq:Hbound-proof-step}
		(\PSI_R^n)^T \HH \PSI_R^n + (\PSI_I^{n-\frac12})^T \HH \PSI_I^{n-\frac12}
		= 
		(\PSIbig^n)^T \left(\I_2 \otimes \dfrac{\hbar^2}{2m} \D \areaSec (\lenPrim)^{-1} \D^T\right) \PSIbig^n
		+
		(\PSIbig^n)^T (\I_2 \otimes \volSec \diagon{U}) \PSIbig^n\,.
	\end{equation}
	The first term on the right hand side of \eqref{eq:Hbound-proof-step} is nonnegative and the second term is bounded from below as follows
	\begin{equation}
		\begin{cases}
			(\PSIbig^n)^T (\I_2 \otimes \volSec \diagon{U}) \PSIbig^n 
			\ge 0\,, & \text{if~} \min_{i,j,k} U|_{i,j,k} \ge 0\\
			(\PSIbig^n)^T (\I_2 \otimes \volSec \diagon{U}) \PSIbig^n 
			\ge \dx \dy \dz \ \min_{i,j,k} U|_{i,j,k}\ ||\PSIbig^n||_2^2\,, & \text{if~} \min_{i,j,k} U|_{i,j,k} < 0
		\end{cases}\,.
	\end{equation}
	Thus, with the use of \eqref{eq:bound-on-norm-psibig}, the first two terms in \eqref{eq:energy} are bounded from below as
	\begin{multline}
		\label{eq:HboundT1T2}
		(\PSI_R^n)^T \HH \PSI_R^n + (\PSI_I^{n-\frac12})^T \HH \PSI_I^{n-\frac12}
		\ge \dx \dy \dz \ \min\left(\min_{i,j,k} U|_{i,j,k},0\right) ||\PSIbig^n||_2^2 \\
		\ge \dx \dy \dz \ \min\left(\min_{i,j,k} U|_{i,j,k},0\right)\!\dfrac{\mathcal P_{\max}}{\lambda_{\min}(\Pbig)}\,.
	\end{multline}
	The last term in \eqref{eq:energy} has a following lower bound:
	\begin{multline}
		\dt \dfrac{(\PSI_R^n-\PSI_R^{n-1})^T}{\dt} (\hbar \volSec) \dfrac{\PSI_I^{n+\frac12}-\PSI_I^{n-\frac12}}{\dt}
		\ge
		- \dfrac{\hbar}{\dt} \left|(\PSI_R^n-\PSI_R^{n-1})^T \volSec (\PSI_I^{n+\frac12}-\PSI_I^{n-\frac12})\right|\\
		\ge - \dfrac{\hbar}{\dt} ||\PSI_R^n-\PSI_R^{n-1}||_2 ||\volSec||_2 ||\PSI_I^{n+\frac12}-\PSI_I^{n-\frac12}||_2\\
		\ge - \dfrac{\hbar}{\dt} \dx \dy \dz (||\PSI_R^n||_2+||\PSI_R^{n-1}||_2) (||\PSI_I^{n+\frac12}||+||\PSI_I^{n-\frac12}||_2)\,.
	\end{multline}
	Using \eqref{eq:bound-on-norm-psibig},
	\begin{equation}
		\label{eq:HboundT3}
		\dt \dfrac{(\PSI_R^n-\PSI_R^{n-1})^T}{\dt} (\hbar \volSec) \dfrac{\PSI_I^{n+\frac12}-\PSI_I^{n-\frac12}}{\dt}\ge - \dx \dy \dz \dfrac{4\hbar}{\dt} \dfrac{\mathcal P_{\max}}{\lambda_{\min}(\Pbig)}\,.
	\end{equation}
	Finally, combining \eqref{eq:HboundT1T2} and \eqref{eq:HboundT3},
	\begin{equation}
		\mathcal H^{n} 
		\ge
		\dx \dy \dz \ \min\left(\min_{i,j,k} U|_{i,j,k},0\right)\dfrac{\mathcal P_{\max}}{\lambda_{\min}(\Pbig)}
		-\dx \dy \dz \dfrac{4 \hbar}{\dt} \dfrac{\mathcal P_{\max}}{\lambda_{\min}(\Pbig)}\,,
	\end{equation}
	which proves \eqref{eq:H-lower-bound}.
	
	To show \eqref{eq:energy-balance}, we use \eqref{eq:energy} to expand the left hand side of \eqref{eq:energy-balance} as	
	\begin{multline}\label{eq:energy-balance-proof-step-1}
		\dfrac{
			\mathcal H^{n+1} - \mathcal H^n
		}{\dt}
		=
		\dfrac{
			(\PSI_R^{n+1})^T \HH \PSI_R^{n+1} - (\PSI_R^{n})^T \HH \PSI_R^{n}
		}{\dt} 
		+ \dfrac{
			(\PSI_I^{n+\frac12})^T \HH \PSI_I^{n+\frac12} - (\PSI_I^{n-\frac12})^T \HH \PSI_I^{n-\frac12}
		}{\dt} \\
		+ \dfrac{(\PSI_R^{n+1}-\PSI_R^{n})^T}{\dt} (\hbar \volSec) \dfrac{\PSI_I^{n+\frac32}-\PSI_I^{n+\frac12}}{\dt}
		-  \dfrac{(\PSI_I^{n+\frac12}-\PSI_I^{n-\frac12})^T}{\dt} (\hbar \volSec)\dfrac{\PSI_R^n-\PSI_R^{n-1}}{\dt}\,,
	\end{multline}
	where the last term, being a scalar, has been written as its own transpose.
	The first two terms on the right hand side of \eqref{eq:energy-balance-proof-step-1} can be written as 
	\begin{equation}
		\dfrac{
			(\PSI_R^{n+1})^T \HH \PSI_R^{n+1} - (\PSI_R^{n})^T \HH \PSI_R^{n}
		}{\dt} = \dfrac{(\PSI_R^{n+1}-\PSI_R^n)^T}{\dt} (\HH\PSI_R^{n+1}+\HH\PSI_R^{n}) \,,
	\end{equation}
	\begin{equation}
		\dfrac{
			(\PSI_I^{n+\frac12})^T \HH \PSI_I^{n+\frac12} - (\PSI_I^{n-\frac12})^T \HH \PSI_I^{n-\frac12}
		}{\dt}
		=
		\dfrac{(\PSI_I^{n+\frac12}-\PSI_I^{n-\frac12})^T}{\dt} (\HH \PSI_I^{n+\frac12} + \HH \PSI_I^{n-\frac12})\,.
	\end{equation}
	With that, we can write \eqref{eq:energy-balance-proof-step-1} as
	\begin{multline}
		\dfrac{
			\mathcal H^{n+1} - \mathcal H^n
		}{\dt}
		=
		\dfrac{(\PSI_R^{n+1}-\PSI_R^n)^T}{\dt} \left(
		\HH \PSI_R^{n}+\HH \PSI_R^{n+1}
		+  \hbar \volSec \dfrac{\PSI_I^{n+\frac32}-\PSI_I^{n+\frac12}}{\dt}
		\right)
		\\	
		+ 
		\dfrac{(\PSI_I^{n+\frac12} - \PSI_I^{n-\frac12})^T}{\dt}
		\left(
		\HH \PSI_I^{n+\frac12} + \HH \PSI_I^{n-\frac12}
		- \hbar \volSec \dfrac{\PSI_R^n-\PSI_R^{n-1}}{\dt} 
		\right)\,.
	\end{multline}
	Using~\eqref{eq:cmf-a} and \eqref{eq:cmf-b}, we can make the following substitutions
	\begin{equation}
		\HH \PSI_R^n
		= - \hbar \volSec \dfrac{\PSI_I^{n+\frac12}-\PSI_I^{n-\frac12}}{\dt}
		+ \HH_{\bot} [\nabla \PSI_R]_{\bot}^n
		\,, 
		\quad \forall n = 0, 1, \hdots, n_t-1\,,
	\end{equation}
	\begin{equation}
		\HH \PSI_R^{n+1}
		+ \hbar \volSec \dfrac{\PSI_I^{n+\frac32}-\PSI_I^{n+\frac12}}{\dt}  
		= 
		\HH_{\bot} [\nabla \PSI_R]_{\bot}^{n+1}
		\,, 
		\quad \forall n = -1, 1, \hdots, n_t-2\,,
	\end{equation}
	\begin{equation}
		\HH \PSI_I^{n+\frac12}
		= 
		\hbar \volSec \dfrac{\PSI_R^{n+1}- \PSI_R^n}{\dt} 
		+ \HH_{\bot} [\nabla \PSI_I]_{\bot}^{n+\frac12}
		\,, 
		\quad \forall n = 0, 1, \hdots, n_t-1\,,
	\end{equation} 
	\begin{equation}
		\HH \PSI_I^{n-\frac12}
		-\hbar \volSec \dfrac{\PSI_R^{n}- \PSI_R^{n-1}}{\dt} 
		= 
		\HH_{\bot} [\nabla \PSI_I]_{\bot}^{n-\frac12}\,, 
		\quad \forall n = 1, \hdots, n_t\,,
	\end{equation}
	and write the left hand side of \eqref{eq:energy-balance} as
	\begin{multline}
		\dfrac{
			\mathcal H^{n+1} - \mathcal H^n
		}{\dt}
		=
		\dfrac{(\PSI_R^{n+1}-\PSI_R^n)^T}{\dt} \left(
		- \hbar \volSec \dfrac{\PSI_I^{n+\frac12}-\PSI_I^{n-\frac12}}{\dt} 
		+ \HH_{\bot} [\nabla \PSI_R]_{\bot}^n
		+ \HH_{\bot} [\nabla \PSI_R]_{\bot}^{n+1}
		\right)\\
		+ 
		\dfrac{(\PSI_I^{n+\frac12} - \PSI_I^{n-\frac12})^T}{\dt}
		\left(
		\hbar \volSec \dfrac{\PSI_R^{n+1}- \PSI_R^n}{\dt} 
		+ \HH_{\bot} [\nabla \PSI_I]_{\bot}^{n+\frac12}
		+ \HH_{\bot} [\nabla \PSI_I]_{\bot}^{n-\frac12}
		\right)\\
		=
		\dfrac{(\PSI_R^{n+1}-\PSI_R^n)^T}{\dt} \left(
		\HH_{\bot} [\nabla \PSI_R]_{\bot}^n
		+ \HH_{\bot} [\nabla \PSI_R]_{\bot}^{n+1}
		\right)
		+ 
		\dfrac{(\PSI_I^{n+\frac12} - \PSI_I^{n-\frac12})^T}{\dt}
		\left(
		\HH_{\bot} [\nabla \PSI_I]_{\bot}^{n+\frac12}
		+ \HH_{\bot} [\nabla \PSI_I]_{\bot}^{n-\frac12}
		\right)\,,
	\end{multline}
	which is equal to the right hand side of \eqref{eq:energy-balance}.
\end{proof}

\section{A new framework to create FDTD-Q Schemes with guaranteed stability}
\label{sec:stability}

In many applications it is desirable to create FDTD schemes for the Schrödinger equation where different models are coupled together. Improper coupling is a notorious cause for instabilities in FDTD-type schemes~\cite{kulas-reciprocity}. Hence, careful analysis is required to ensure that a coupled scheme is stable, which is challenging with existing methods. The theory proposed in this work provides a rigorous way to construct new stable schemes in a modular and constructive fashion. The approach is based on ensuring that the models and the coupling between them respect the principle of probability conservation.

Conservation-based approaches have been previously used for analyzing stability of FDTD schemes in electromagnetics~\cite{edelvik2004general,jnl-2018-fdtd-3d-dissipative}. The effectiveness of this type of analysis was demonstrated in FDTD for Maxwell's equations by creating a subgridding scheme~\cite{jnl-2018-fdtd-3d-dissipative} and embedding reduced models with extended CFL limit~\cite{jnl-tap-2018-fdtd-mor}.
Here, as a proof of concept, we show how the conservation approach can be used for analyzing stability of FDTD schemes for the Schrödinger equation using two examples: a region isolated to the flow of probability current and two regions that are coupled via boundary conditions. Application to more advanced scenarios is left for the future work.

\subsection{Region with no probability current through the boundary}

\begin{lemma}
	\label{lemma:stability}
	Consider a region described by FDTD-Q equations \eqref{eq:cmf-a}--\eqref{eq:cmf-b}. Assume that $\dt < \dt_{\text{CFL,gen}}$ in \eqref{eq:cfl-gen} and that 
	\begin{equation}
		\label{eq:stability-IP=0}
		\mathcal I_{P}^{n+\frac12} = 0 \,, \quad \forall n = 0, 2, \hdots, n_t-1\,.
	\end{equation}
	Then
	\begin{equation}
		||\PSIbig^{n}||_2 \le \sqrt{\kappa(\Pbig)} ||\PSIbig^{0}||_2, \quad \forall n = 0, 1, \hdots, n_t\,,
	\end{equation}
	where $\kappa(\Pbig)$ is the condition number of $\Pbig$.
\end{lemma}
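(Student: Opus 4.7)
The plan is to combine probability conservation (Theorem~\ref{thm:probability-conserv}) with a Rayleigh quotient bound for the positive-definite matrix $\Pbig$. First, since $\dt < \dt_{\text{CFL,gen}}$, Lemma~\ref{lemma:Ppd<=>dt<CFLgen} guarantees that $\Pbig \succ 0$, so $\lambda_{\min}(\Pbig) > 0$ and $\kappa(\Pbig) = \lambda_{\max}(\Pbig)/\lambda_{\min}(\Pbig)$ is well defined.

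Next, I would invoke Theorem~\ref{thm:probability-conserv} to write $(\mathcal P^{n+1} - \mathcal P^n)/\dt = -\mathcal I_P^{n+\frac12}$. Under the hypothesis \eqref{eq:stability-IP=0} the right-hand side vanishes for every admissible $n$, so telescoping (equivalently, an induction on $n$) yields $\mathcal P^n = \mathcal P^0$ for all $n = 0, 1, \dots, n_t$. Unwrapping Definition~\ref{def:probability}, this identity reads $(\PSIbig^n)^T \Pbig \PSIbig^n = (\PSIbig^0)^T \Pbig \PSIbig^0$.

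Finally, since $\Pbig$ is a real symmetric positive-definite matrix, the standard Rayleigh quotient bounds give
\begin{equation*}
\lambda_{\min}(\Pbig)\,\|\PSIbig^n\|_2^2 \;\le\; (\PSIbig^n)^T \Pbig \PSIbig^n \;=\; (\PSIbig^0)^T \Pbig \PSIbig^0 \;\le\; \lambda_{\max}(\Pbig)\,\|\PSIbig^0\|_2^2 .
\end{equation*}
Dividing through by $\lambda_{\min}(\Pbig)$ and taking a square root produces $\|\PSIbig^n\|_2 \le \sqrt{\kappa(\Pbig)}\,\|\PSIbig^0\|_2$, as claimed. I do not anticipate a genuine obstacle, since Theorem~\ref{thm:probability-conserv} and Lemma~\ref{lemma:Ppd<=>dt<CFLgen} already do the heavy lifting; the only point worth double-checking is that the conserved quantity produced by the former is exactly the quadratic form in $\Pbig$ that the latter controls, which is precisely why Definition~\ref{def:probability} was set up with $\Pbig$ rather than with the naive $\mathcal P_{\text{simple}}^n$ of~\eqref{eq:Pnaive}.
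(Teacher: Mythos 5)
Your proposal is correct and follows essentially the same route as the paper's own proof: invoke Lemma~\ref{lemma:Ppd<=>dt<CFLgen} for positive definiteness of $\Pbig$, use Theorem~\ref{thm:probability-conserv} with $\mathcal I_P^{n+\frac12}=0$ to conclude $(\PSIbig^n)^T\Pbig\PSIbig^n=(\PSIbig^0)^T\Pbig\PSIbig^0$, and finish with the Rayleigh quotient bounds and the identification of $\lambda_{\max}(\Pbig)/\lambda_{\min}(\Pbig)$ with $\kappa(\Pbig)$. No gaps.
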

Examples of boundary conditions where no net probability current exists include zero Dirichlet ($\psi=0$) and zero Neumann $[\nabla \psi]_{\bot}=0$ boundary conditions.
\begin{proof}[Proof of Lemma~\ref{lemma:stability}]
	Assume $\dt < \dt_{\text{CFL,gen}}$. From Theorem~\ref{thm:probability-conserv} and from \eqref{eq:stability-IP=0}, $\mathcal P^n$ stays constant over the course of the simulation
	\begin{equation}
		(\PSIbig^n)^T \Pbig \PSIbig^n = (\PSIbig^0)^T \Pbig \PSIbig^0 \quad \forall n = 0, 1, \hdots , n_t\,.
	\end{equation}
	Moreover, 
	\begin{equation}
		\lambda_{\min}(\Pbig) ||\PSIbig^n||_2^2\le (\PSIbig^n)^T \Pbig \PSIbig^n
	\end{equation}
	and
	\begin{equation}
		(\PSIbig^0)^T \Pbig \PSIbig^0 \le \lambda_{\max}(\Pbig) ||\PSIbig^0||_2^2\,,
	\end{equation}
	giving
	\begin{equation}
		\lambda_{\min}(\Pbig) ||\PSIbig^n||_2^2\le \lambda_{\max}(\Pbig) ||\PSIbig^0||_2^2\,.
	\end{equation}
	Since, from Lemma~\ref{lemma:Ppd<=>dt<CFLgen}, $\lambda_{\max}(\Pbig)$ is strictly positive,
	\begin{equation}
		||\PSIbig^n||_2^2
		\le
		\dfrac{\lambda_{\max}(\Pbig)}{\lambda_{\min}(\Pbig)} ||\PSIbig^0||_2^2\,.
	\end{equation}
	The ratio of eigenvalues for a symmetric positive definite matrix is the condition number~\cite{golub-matrix-comput-4ed}, which proves the statement of the lemma.
\end{proof}
Hence, the 2-norm of the vector $\PSIbig^n$, which contains the samples of the real and imaginary parts of the wavefunction, has a bound that is known prior to running the simulation. The existence of such bound guarantees stability.
From Theorem~\ref{thm:cfl<=cflgen}, the conventional CFL limit $\dt_{\text{CFL}}$ in \eqref{eq:cfl} can be used in place of $\dt_{\text{CFL, gen}}$ in Lemma~\ref{lemma:stability}. Hence, Lemma~\ref{lemma:stability} in conjunction with Theorem~\ref{thm:cfl<=cflgen} provide an alternative proof for the previously known result~\cite{soriano-analysis-2004,dai-stability-2005} on FDTD-Q stability under the CFL limit \eqref{eq:cfl}. 
Generalized stability limits that could be relaxed to simpler conditions such as \eqref{eq:cfl} have been derived in the past using the iteration matrix approach for a method similar to FDTD-Q~\cite{decleer-2021-jcam} and for FDTD for the Maxwell equations~\cite{remis-2000-jcp,gedney-2011}. 

\subsection{Connection of FDTD-Q regions}
\label{sec:stability-joint}

\newcommand{\psiRA}{\psi_{R}^{A}}
\newcommand{\psiIA}{\psi_{I}^{A}}
\newcommand{\psiRB}{\psi_{R}^{B}}
\newcommand{\psiIB}{\psi_{I}^{B}}
\newcommand{\nyA}{n_y^{A}}

In this section, we discuss how the proposed theory can be used to couple FDTD-Q models in a stable manner. We consider a simple but representative scenario in Fig.~\ref{fig:joint-regions}, involving two adjacent regions discretized using FDTD-Q with the same grid size and time step. The two regions need to be appropriately coupled at the interface to maintain probability conservation and hence stability of the scheme. The north face of the first region (Region~A) is adjacent to the south face of the second region (Region~B). To couple the two regions, we equate the samples of the wavefunction and the hanging variables at the interface between Region~A and Region~B
\begin{subequations}
	\begin{align}
		\psiRA|_{i,\nyA+1,k}^n = \psiRB|_{i,1,k}^n, &
		\quad 2 \le i \le n_x,\ 2 \le k \le n_z,\ 0 \le n \le n_t\,,
		\label{eq:connection-ab-first}
		\\
		\psiIA\big|_{i,\nyA+1,k}^{n-\frac12} = \psiIB\big|_{i,1,k}^{n-\frac12}, &
		\quad 2 \le i \le n_x,\ 2 \le k \le n_z,\ 0 \le n \le n_t\,,
		\label{eq:connection-ab-second}
		\\
		[\partial_y \psiRA]_{i,\nyA+1,k}^n = [\partial_y \psiRB]_{i,1,k}^n,&
		\quad 2 \le i \le n_x,\ 2 \le k \le n_z,\ 0 \le n \le n_t-1\,,
		\label{eq:connection-ab-third}
		\\
		[\partial_y\psiIA]_{i,\nyA+1,k}^{n+\frac12} = [\partial_y \psiIB]_{i,1,k}^{n+\frac12},&
		\quad 2 \le i \le n_x,\ 2 \le k \le n_z ,\ 0 \le n \le n_t-1\,,
		\label{eq:connection-ab-last}
	\end{align}
\end{subequations}
where ``A'' and ``B'' denote the region a quantity corresponds to. At all other nodes on the boundary of the two regions, zero Dirichlet boundary condition is imposed. 
Next, we derive the update equations resulting from \eqref{eq:connection-ab-first}--\eqref{eq:connection-ab-last} and show that the coupled scheme is stable if the generalized CFL limit is satisfied in each of the regions.

\begin{figure}
	\centering
	\includegraphics[scale=0.5]{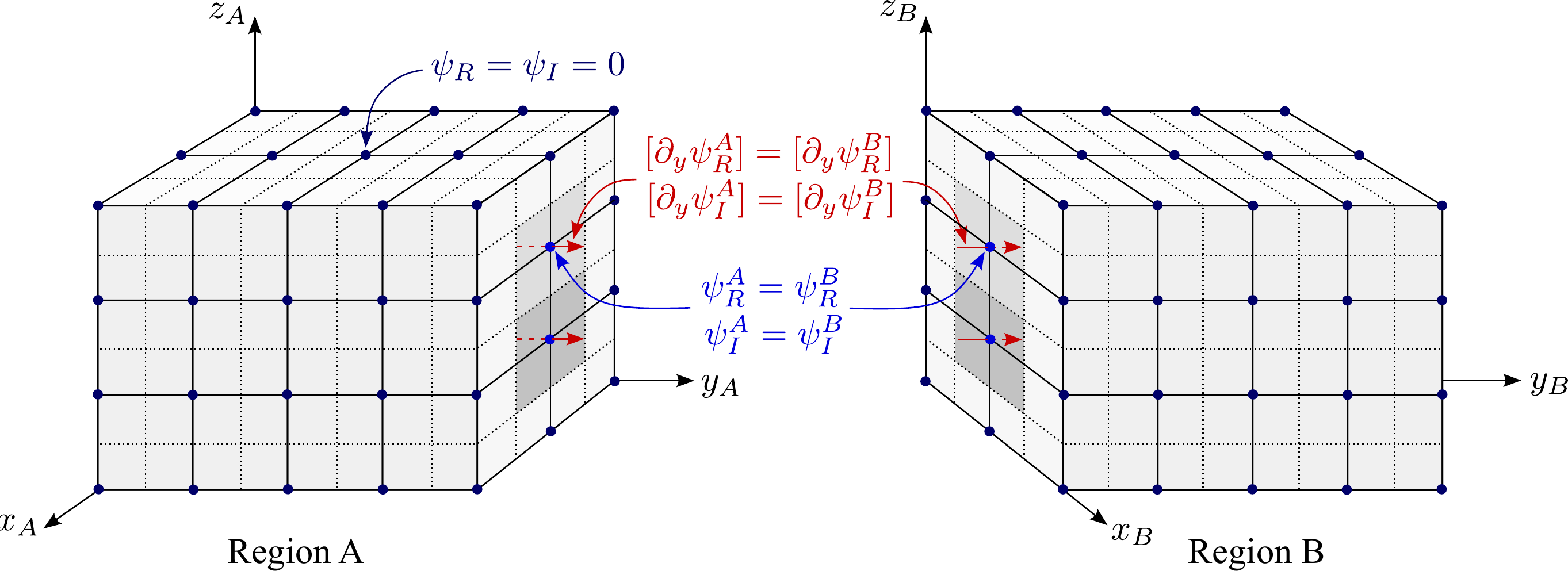}	
	\caption{Two FDTD-Q regions joined together by equating the quantities on the adjacent boundary as described in \eqref{eq:connection-ab-first}--\eqref{eq:connection-ab-last}.}
	\label{fig:joint-regions}
\end{figure}

\subsubsection{Update equations at the interface}
From Section~\ref{sec:equations-rgn}, the equation discretizing \eqref{eq:cont-a} for the nodes on the north face of Region~A is given by
\begin{multline}
	\label{eq:A-north}
	\hbar \ \dx \dfrac{\dy}2 \dz \dfrac{\psi_R^A|_{i,n_y^A+1,k}^{n+1} - \psi_R^A|_{i,n_y^A+1,k}^{n}}{\dt} 
	= -\frac{\hbar^2}{2m} 
	\Bigg[
	\dfrac{\dy}2 \dz\dfrac{ 
		\psi_I^A|_{i+1,n_y^A+1,k}^{n+\frac12} 
		- \psi_I^A|_{i,n_y^A+1,k}^{n+\frac12}
	}{\Delta x}\\	
	- 
	\dfrac{\dy}2 \dz\dfrac{ 
		\psi_I^A|_{i,n_y^A+1,k}^{n+\frac12} 
		- \psi_I^A|_{i-1,n_y^A+1,k}^{n+\frac12}
	}{\Delta x}	
	+ 
	\dx \dz [\partial_y\psi_I^A]_{i,n_y^A+1,k}^{n+\frac12}
	- 
	\dx \dz\dfrac{ 
		\psi_I^A|_{i,n_y^A+1,k}^{n+\frac12} 
		- \psi_I^A|_{i,n_y^A,k}^{n+\frac12}
	}{\Delta y}\\
	+ 
	\dx\dfrac{\dy}2\dfrac{ 
		\psi_I^A|_{i,n_y^A+1,k+1}^{n+\frac12} 
		- \psi_I^A|_{i,n_y^A+1,k}^{n+\frac12} 
	}{\Delta z}
	-
	\dx\dfrac{\dy}2\dfrac{ 
		\psi_I^A|_{i,n_y^A+1,k}^{n+\frac12} 
		- \psi_I^A|_{i,n_y^A+1,k-1}^{n+\frac12}
	}{\Delta z}
	\Bigg]
	+ \dx\dfrac{\dy}2 \dz \ U^A|_{i,n_y^A+1,k} \psi_I^A|_{i,n_y^A+1,k}^{n+\frac12}\,.
\end{multline}
The corresponding equation for the south face of Region~B is
\begin{multline}
	\label{eq:B-south}	\hbar \ \dx \dfrac{\dy}2 \dz \dfrac{\psi_R^B|_{i,1,k}^{n+1} - \psi_R^B|_{i,1,k}^{n}}{\dt} 
	= -\frac{\hbar^2}{2m} 
	\Bigg[
	\dfrac{\dy}2 \dz\dfrac{ 
		\psi_I^B|_{i+1,1,k}^{n+\frac12} 
		- \psi_I^B|_{i,1,k}^{n+\frac12}
	}{\Delta x}
	- 
	\dfrac{\dy}2 \dz\dfrac{ 
		\psi_I^B|_{i,1,k}^{n+\frac12} 
		- \psi_I^B|_{i-1,1,k}^{n+\frac12}
	}{\Delta x}	\\
	+ 
	\dx \dz\dfrac{ 
		\psi_I^B|_{i,2,k}^{n+\frac12} 
		- \psi_I^B|_{i,1,k}^{n+\frac12}
	}{\Delta y}	
	- 
	\dx \dz [\partial_y\psi_I^B]_{i,1,k}^{n+\frac12}
	+ 
	\dx\dfrac{\dy}2\dfrac{ 
		\psi_I^B|_{i,1,k+1}^{n+\frac12} 
		- \psi_I^B|_{i,1,k}^{n+\frac12} 
	}{\Delta z}\\
	-
	\dx\dfrac{\dy}2\dfrac{ 
		\psi_I^B|_{i,1,k}^{n+\frac12} 
		- \psi_I^B|_{i,1,k-1}^{n+\frac12}
	}{\Delta z}
	\Bigg]
	+ \dx\dfrac{\dy}2 \dz \ U^B|_{i,1,k} \psi_I^B|_{i,1,k}^{n+\frac12}\,.
\end{multline}
Adding \eqref{eq:A-north} and \eqref{eq:B-south} and using \eqref{eq:connection-ab-first}, \eqref{eq:connection-ab-second}, and \eqref{eq:connection-ab-last}, 
\begin{multline}
	\label{eq:AB}
	\hbar \ \dx \dy \dz \dfrac{\psi_R^B|_{i,1,k}^{n+1} - \psi_R^B|_{i,1,k}^{n}}{\dt} 
	= -\frac{\hbar^2}{2m} 
	\Bigg[
	\dy \dz\dfrac{ 
		\psi_I^B|_{i+1,1,k}^{n+\frac12} 
		- \psi_I^B|_{i,1,k}^{n+\frac12}
	}{\Delta x}
	- 
	\dy \dz\dfrac{ 
		\psi_I^B|_{i,1,k}^{n+\frac12} 
		- \psi_I^B|_{i-1,1,k}^{n+\frac12}
	}{\Delta x}	\\
	+ 
	\dx \dz\dfrac{ 
		\psi_I^B|_{i,2,k}^{n+\frac12} 
		- \psi_I^B|_{i,1,k}^{n+\frac12}
	}{\Delta y}
	- 
	\dx \dz\dfrac{ 
		\psi_I^B|_{i,1,k}^{n+\frac12} 
		- \psi_I^A|_{i,n_y^A,k}^{n+\frac12}
	}{\Delta y}
	+ 
	\dx\dy\dfrac{ 
		\psi_I^B|_{i,1,k+1}^{n+\frac12} 
		- \psi_I^B|_{i,1,k}^{n+\frac12} 
	}{\Delta z}\\
	-
	\dx\dy\dfrac{ 
		\psi_I^B|_{i,1,k}^{n+\frac12} 
		- \psi_I^B|_{i,1,k-1}^{n+\frac12}
	}{\Delta z}
	\Bigg]
	+ \dx\dy \dz \ \dfrac{U^A|_{i,n_y^A+1,k}+U^B|_{i,1,k}}{2} \psi_I^B|_{i,1,k}^{n+\frac12}\,,
\end{multline}
which has the exact same form as the FDTD equation for internal nodes~\eqref{eq:scalar-r-internal}, with the potential taken as the average of the two adjacent nodes\footnote{The derivation of \eqref{eq:AB} is analogous to the treatment of material interfaces in FDTD for electromagnetics~\cite{gedney-2011}. We also remark that the similarity between \eqref{eq:AB} at the interface and \eqref{eq:scalar-r-internal} for internal nodes in FDTD-Q makes existing results on FDTD-Q stability~\cite{soriano-analysis-2004,dai-stability-2005} applicable to this particular scenario and the specific boundary conditions \eqref{eq:connection-ab-first}--\eqref{eq:connection-ab-last} selected at the interface. However, as we discuss in Section~\ref{sec:joint-regs-stability-analysis} the proposed approach could be applied to developing other schemes and this example serves as an illustration.}. The same can be shown for the equations corresponding to \eqref{eq:cont-b}.

\subsubsection{Stability analysis}
\label{sec:joint-regs-stability-analysis}

Assume that the time step satisfies the generalized CFL limit \eqref{eq:cfl-gen} for each region
\begin{equation}
	\label{eq:minCFLgen-AB}
	\dt < \min\left(\dt_{\text{CFL, gen}}^A, \dt_{\text{CFL, gen}}^B\right)\,.
\end{equation}
From Theorem~\ref{thm:probability-conserv}, the probability in each region evolves according to
\begin{align}
	\dfrac{
		\mathcal P_{A}^{n+1} - \mathcal P_{A}^n
	}{\dt} &= -\mathcal I_{P,A}^{n+\frac12} \,,\label{eq:joint-regions-stability-proof-balance-A}
	\\
	\dfrac{
		\mathcal P_{B}^{n+1} - \mathcal P_{B}^n
	}{\dt} &= -\mathcal I_{P,B}^{n+\frac12}\label{eq:joint-regions-stability-proof-balance-B}\,.
\end{align}
Adding \eqref{eq:joint-regions-stability-proof-balance-A} and \eqref{eq:joint-regions-stability-proof-balance-B},
\begin{equation}
	\dfrac{
		\left(\mathcal P_{A}^{n+1} + \mathcal P_{B}^{n+1} \right) 
		- 
		\left(\mathcal P_{B}^n+\mathcal P_{B}^n\right)
	}{\dt} = 
	-\mathcal I_{P,A}|^{n+\frac12}
	-\mathcal I_{P,B}|^{n+\frac12} \,.\label{eq:joint-regions-stability-proof-balance-tot}
\end{equation}
As can be seen from \eqref{eq:current-expanded-0}, \eqref{eq:current-expanded-sides}, and \eqref{eq:current-expanded-w}, the conditions \eqref{eq:connection-ab-first}--\eqref{eq:connection-ab-last} equating the wavefunction and the hanging variables on the adjacent boundaries of the two regions ensure that the probability currents on the right hand side of \eqref{eq:joint-regions-stability-proof-balance-tot} cancel out. Hence, 
\begin{equation}\label{eq:joint-regions-tot-prob}
	\mathcal P_{A}^{n} + \mathcal P_{B}^{n}
	= 
	\mathcal P_{A}^{0} + \mathcal P_{B}^{0}\,.
\end{equation}
From Theorem~\ref{thm:probability-conserv}, under the CFL limit $\mathcal P_{A}^{n}$, $\mathcal P_{B}^{n}$ are both non-negative. Hence, from \eqref{eq:joint-regions-tot-prob}, $\mathcal P_{A}^{n}$, and $\mathcal P_{B}^{n}$ are each at most $\mathcal P_{A}^{0} + \mathcal P_{B}^{0}$. Repeating the reasoning in the proof of Lemma~\ref{lemma:stability}, 
\begin{subequations}
	\begin{align}
		||\PSIbig_A^n||_2^2\le \dfrac{\mathcal P_A^0 + \mathcal P_B^0}{\lambda_{\min}(\Pbig)}\,,\\
		||\PSIbig_B^n||_2^2\le \dfrac{\mathcal P_A^0 + \mathcal P_B^0}{\lambda_{\min}(\Pbig)}\,.
	\end{align}
\end{subequations}
This means that the system is stable, as the values of  the wavefunction samples cannot grow without bound. 
In Section~\ref{sec:ne-tunneling} we investigate the consequences of taking the time step beyond \eqref{eq:minCFLgen-AB}. In particular, we show that violating the generalized CFL limit in a region allows that region to provide infinite probability to the surrounding space and thus destabilize the simulation. Lastly, we remark that under condition \eqref{eq:minCFLgen-AB}, the coupled scheme can be shown to also conserve energy using similar arguments.

This example, although simple, shows how with the proposed theory one can create composite FDTD schemes obtained by coupling different models discretizing the Schrödinger equation. If each of the models satisfies the conservation of probability and the models are coupled in the probability-conserving manner, the resulting scheme will by construction satisfy the probability conservation. The resulting scheme will be stable under the most restrictive value of the generalized CFL limit \eqref{eq:minCFLgen-AB}, which is also known by construction. The choice of coupling between the models, such as the boundary conditions \eqref{eq:connection-ab-first}--\eqref{eq:connection-ab-last}, is essential for ensuring the probability conservation and stability. In general, a different coupling scheme is not guaranteed to be probability-conserving. 

The proposed approach could be applied to developing more advanced schemes in the future. 
For example, subgridding scenarios~\cite{salehi-2020-jce-polar,okoniewski-3d-subgridding-1997} could be analyzed as a connection of grids of different resolution~\cite{jnl-2018-fdtd-3d-dissipative}, where one needs to ensure that the grids exchange probability in a conserving manner to achieve the cancellation on the right hand side of \eqref{eq:joint-regions-stability-proof-balance-tot}. Another approach that could be used to analyze general schemes is the iteration matrix method~\cite{remis-2000-jcp,gedney-2011,decleer-2021-jcam}. The proposed conservation-based approach provides an intuitive physical interpretation that the root cause of instability is the generation of spurious energy or probability. Moreover, the proposed approach provides a very natural way to determine whether an FDTD-Q region or any other part of the setup is capable of introducing instability, prior to knowing anything about the overall setup. In general, such modular stability analysis is not trivial, since stability is a property of the entire system and coupling of equations corresponding to stable schemes does not automatically achieve stability. The use of concepts of conservation provides a systematic, modular, and constructive strategy to create stable composite FDTD schemes for the Schrödinger equation with a guarantee of probability and energy conservation.

\section{Numerical examples}
\label{sec:ne}

In order to investigate the validity of the results in Section~\ref{sec:probability-conserv} and Section~\ref{sec:energy-conserv}, the proposed method was implemented in Matlab and the time evolution of numerical probability and energy was investigated for different simulation scenarios.

\subsection{Infinite well}
\label{sec:ne-infinite-well}

First, we demonstrate the validity of the proposed theory for an isolated region. In particular, we consider an electron trapped in a potential well with the potential $U$ equal to zero inside the region and tending to infinity on the boundary of the region and outside the region. The infinite potential well results in zero Dirichlet boundary conditions~\cite{miller-2008}. The region has a side length of $a = 30$~nm in each dimension. 

The region was discretized into $n_x = n_y = n_z = 30$ primary cells. The CFL limit $\dt_{\text{CFL}}$ and the generalized CFL limit $\dt_{\text{CFL,gen}}$ were both equal to 2.879~fs, with the relative difference (1.370$\times$10$^{-15}$) comparable to machine precision. The time step $\dt$ was taken as 0.999~$\dt_{\text{CFL}}$.
The initial conditions $\psi_R|^0$ and $\psi_I|^{-0.5}$ were set by sampling the following particular solution of the Schrödinger equation~\cite{miller-2008}
\begin{equation}\label{eq:ne-infinite-well-analytical}
	\psi(x,y,z,t) = A \sin\left(k_x x\right) 
	\sin\left(k_y y\right) 
	\sin\left(k_z z\right) \exp\left(-\mathrm{i} \left(\dfrac{E_1}{\hbar} t+\frac{\pi}3\right)\right)
\end{equation}
at the primary nodes. In \eqref{eq:ne-infinite-well-analytical}, 
\begin{equation}
	\mathrm{i} = \sqrt{-1}\,,
\end{equation}
\begin{equation}
	k_x = k_y = k_z = \dfrac{\pi}{a} \,,
\end{equation}
\begin{equation}\label{eq:infinite-well-E1}
	E_1 = \dfrac{\hbar^2}{2m}(k_x^2 + k_y^2 + k_z^2)\,,
\end{equation}
and $A$ is the real positive normalization constant ensuring that $\mathcal P^0$ in \eqref{eq:probability} is equal to 1. This choice of normalization will be discussed shortly.

\begin{figure}
	\centering
	\includegraphics[scale=1]{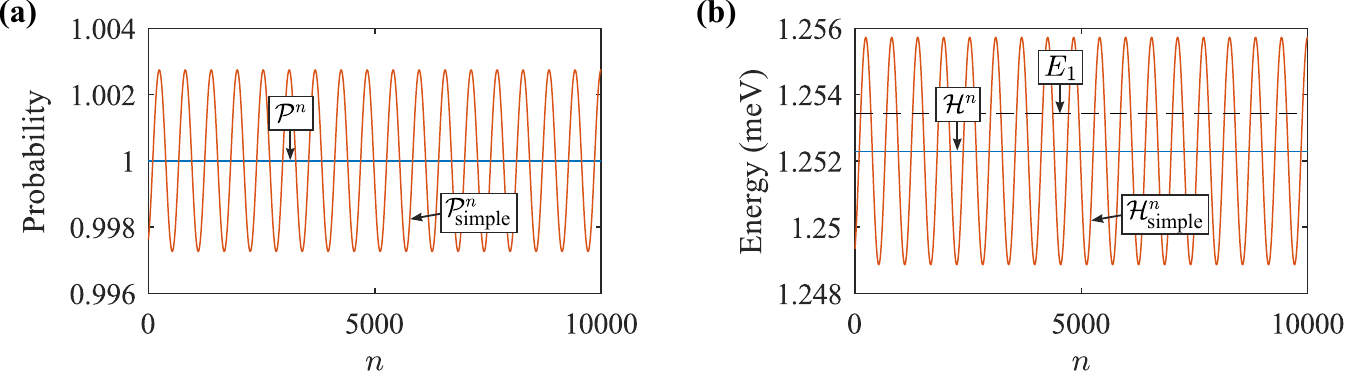}
	\caption{Infinite well example in Section~\ref{sec:ne-infinite-well} for ${n_x=n_y=n_z=30}$: (a) the total probability ${\mathcal P^n}$ computed with the proposed formula \eqref{eq:probability} and ${\mathcal P_{\text{simple}}^n}$ computed with the simpler expression \eqref{eq:Pnaive} (b) the total energy ${\mathcal H^n}$ in \eqref{eq:energy} and ${\mathcal H_{\text{simple}}^n}$ in \eqref{eq:H-naive}. The dashed line shows the analytic value of energy \eqref{eq:infinite-well-E1}.}
	\label{fig:ne-infinite-well}
\end{figure} 

\begin{table}
	\centering
	\caption{Error of energy expressions vs grid refinement in the example in Section~\ref{sec:ne-infinite-well} (${a=}$~30~nm, ${\dt}$ = 0.999${\dt}_{\text{CFL}}$, ${n_t\Delta t}$ = 28.76 ps).}
	\label{tab:ne-infinite-well}
	\begin{tabular}{|c|cc|cc|}
		\hline
		$n_x=n_y=n_z$ 
		& $\max_n|\mathcal P^n-1|$ & $\max_n|\mathcal P_{\text{simple}}^n-1|$
		& $\min_n|\mathcal H^n-E_1|/E_1$ & $\max_n|\mathcal H_{\text{simple}}^n-E_1|/E_1$
		\\\hline\hline
		10&	8.88$\times$10$^{-16}$&	2.51$\times$10$^{-2}$&	8.20$\times$10$^{-3}$ & 3.19$\times$10$^{-2}$\\\hline
		20&	5.55$\times$10$^{-16}$&	6.19$\times$10$^{-3}$&	2.05$\times$10$^{-3}$ & 8.15$\times$10$^{-3}$\\\hline
		30&	
		2.22$\times$10$^{-15}$&	2.74$\times$10$^{-3}$& 9.14$\times$10$^{-4}$ & 3.64$\times$10$^{-3}$\\\hline
		40&	1.55$\times$10$^{-15}$&	1.54$\times$10$^{-3}$& 5.14$\times$10$^{-4}$ & 2.05$\times$10$^{-3}$ \\\hline
		50&	3.44$\times$10$^{-15}$&	9.87$\times$10$^{-4}$& 3.29$\times$10$^{-4}$ & 1.31$\times$10$^{-3}$ \\\hline
	\end{tabular}
\end{table}

Since the region is isolated to the flow of power and probability current, the energy and probability contained inside the region are constant in the continuous domain. From the blue curves in Fig.~\ref{fig:ne-infinite-well}, it is evident that the proposed expressions for probability and energy respect this property in the discrete domain, in accordance with Theorem~\ref{thm:probability-conserv} and Theorem~\ref{thm:energy-conserv}. Indeed, the range of values of $\mathcal P^n$ and $\mathcal H^n$ was 3.997$\times$10$^{-15}$ and 3.228~aeV, which could be explained by finite machine precision. In contrast, the values of $\mathcal P_{\text{simple}}^n$ and $\mathcal H_{\text{simple}}^n$ shown in red in the figure exhibit some fluctuations. Moreover, the values of $\mathcal P_{\text{simple}}^n$ exceed 1 in some of the time intervals, which is inconsistent with physics.  

The continuous expression for energy \eqref{eq:hcont} can be shown to equal $E_1=$~1.2534~meV, which is plotted in Fig.~\ref{fig:ne-infinite-well}(b) with the dashed line. The value of the proposed expression, $\mathcal H^n = $~1.2523~meV, is in good agreement with $E_1$, corresponding to a relative error of 9.14$\times$10$^{-4}$.
Table~\ref{tab:ne-infinite-well} shows deviations of the values of different probability and energy expressions from the analytic solution. The values were obtained for different grid resolutions, while maintaining the infinite well geometry and keeping the constant ratio $\dt/\dt_{\text{CFL}}$.
As evident from the table, the discrepancy between $\mathcal H^n$ and $E_1$ can be reduced by refining the cell size and the corresponding time step. The error in $\mathcal P_{\text{simple}}^n$ and $\mathcal H_{\text{simple}}^n$ also reduces with improved spatial resolution. However, both $\mathcal P_{\text{simple}^n}^n$ and $\mathcal H_{\text{simple}}^n$ exhibit larger errors than the proposed expressions. 

Hence the proposed expressions of probability and energy provide a more accurate approximation of the analytical quantities and respect the principle of probability and energy conservation, in contrast to $\mathcal P_{\text{simple}}^n$ and $\mathcal H_{\text{simple}}^n$. The computational cost associated with evaluating $\mathcal P^n$ is increased compared to $\mathcal P_{\text{simple}}^n$ due to the off-diagonal blocks in $\Pbig$ in \eqref{eq:mtx-P}. However, the cost of the added computations is on par with evaluating the terms associated with the diagonal blocks in $\Pbig$, even if one evaluates the additional terms directly. Similarly, the overhead in computing the additional terms in $\mathcal H^n$ is comparable to the cost of computing the terms that are in common with $\mathcal H^n_{\text{simple}}$. Lastly, the proposed expression of the total probability is very convenient for computing the normalization constant $A$ in \eqref{eq:ne-infinite-well-analytical}, since the value of $\mathcal P^n$ stays constant in an isolated region. In contrast, the value $\mathcal P_{\text{simple}}^n$ varies from time step to time step. As evident from Fig.~\ref{fig:ne-infinite-well}(a), if the initial values of the wavefunction were normalized such that $\mathcal P_{\text{simple}}^0$ equaled 1, both $\mathcal P^n$ and $\mathcal P_{\text{simple}}^n$ would have been centered around a value that greater than 1.

\subsection{Reflection from a potential barrier}
\label{sec:ne-reflection}

We consider the scenario in Fig.~\ref{fig:ne-reflection-illustration}, where a Gaussian wavepacket impinges on a potential barrier. The expression for the incident pulse is~\cite{miller-2008}
\begin{equation}
	\psi_{\text{inc}}(x,t) = 
	\sum_{p} A_p e^{
		-\mathrm{i}(\omega_p t - k_p(x-x_0))}\,,
\end{equation}
where $p$ is an integer index, $x_0$ is the center of the wavepacket at $t = 0$, $k_p$ are evenly spaced real scalars, $\omega_p$ are the corresponding angular frequencies given by $\hbar k_p^2/(2m)$, and the coefficients $A_p$ are given by
\begin{equation}
	A_p =  \exp\left(-\dfrac{1}{4}\left(\dfrac{k_p-\bar{k}}{\sigma}\right)^2\right)\,,
\end{equation}
where $\bar k$ determines the center of the wavepacket in the $k$-space and $\sigma$ determines its width. The wavepacket impinges on a barrier with potential given by
\begin{equation}
	U(x) = \begin{cases}
		0 \,,& \quad x < a\\
		\dfrac{U_0}2 \,, & \quad x = a\\
		U_0 \,,& \quad x > a
	\end{cases}\,.
\end{equation}

\begin{figure}
	\centering
	\includegraphics[width=5.8in]{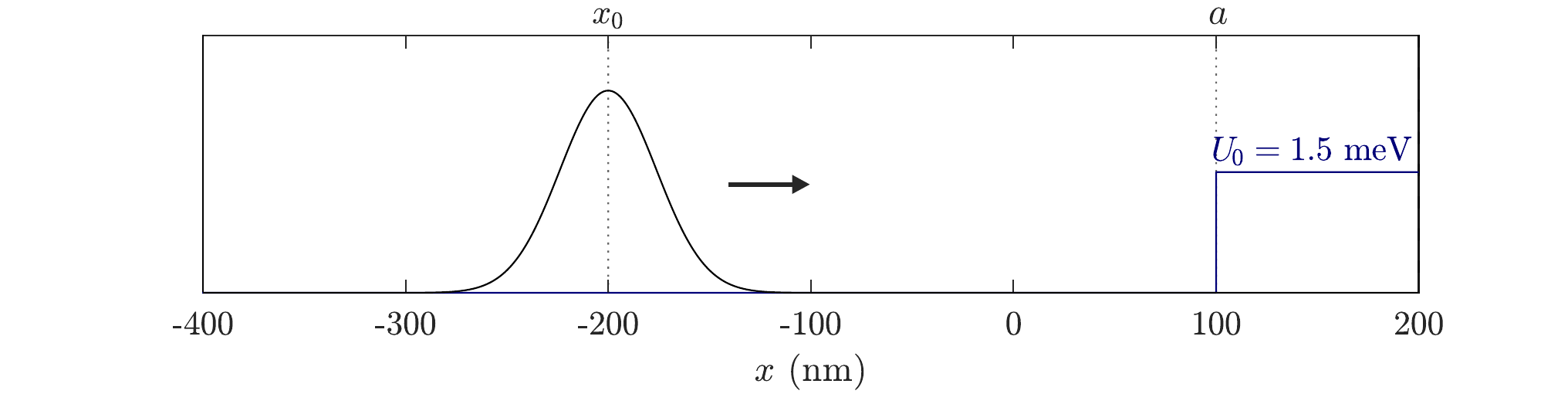}	
	\caption{Scenario considered in Section~\ref{sec:ne-reflection}, where a Gaussian wavepacket impinges on a potential barrier.}
	\label{fig:ne-reflection-illustration}
\end{figure}

The solution can be found in quantum mechanics textbooks~\cite{miller-2008} and is given by
\begin{equation}\label{eq:reflection-analyt-sol}
	\psi(x,t) = \begin{cases}
		\psi_{\text{inc}}(x,t) + \psi_{\text{ref}}(x,t) \,, &  x < a\\
		\psi_{\text{tran}}(x,t) & x > a
	\end{cases}\,,
\end{equation}
where $\psi_{\text{ref}}$ and $\psi_{\text{tran}}$ are reflected and transmitted waves given by
\begin{equation}
	\psi_{\text{ref}}(x,t) = \sum_{p} A_p R_p e^{-\mathrm{i}\left(\omega_p t - k_p(a-x_0 + a-x)\right)}\,,
\end{equation}
\begin{equation}
	\psi_{\text{tran}}(x,t) = \sum_p A_p T_p e^{-\mathrm{i}\left(\omega_p t - k_p(a-x_0) - K_p (x-a)\right)}\,,
\end{equation}
where
$K_p = \sqrt{2m(\hbar \omega_p - V_0)}/\hbar$, which can be real or imaginary, depending on the sign of $\hbar \omega_p - V_0$. When $K_p$ is real, the corresponding wave in $\psi_{\text{tran}}$ propagates forward in the $x>a$ region. When $K_p$ is imaginary, the wave decays with $x$ and hence cannot propagate. The reflection and transmission coefficients are given by
\begin{equation}
	R_p = \dfrac{k_p - K_p}{k_p + K_p}\,,
\end{equation}
\begin{equation}
	T_p = \dfrac{2 k_p}{k_p + K_p}\,.
\end{equation}
In this test, $m$ is the mass of an electron, $x_0 = -200$~nm, $\bar k = 2\pi/\bar{\lambda}$, where $\bar{\lambda}$ is 30~nm, and $\sigma=\bar k/10$.
The values of $k_p$ range between $k_{\min} = \bar k-10 \sigma$ and $k_{\max} = \bar k + 10 \sigma$, with a spacing of $\Delta k = \sigma/100$. The height of the potential barrier is $U_0 =$~1.5~meV and $a =$~100~nm. 

We select the space from $x=0$ to $x = $~200~nm to be modeled with the proposed method. The dimensions of the region were selected as 2~nm in the $y$ and $z$ dimensions. The region was discretized into cells of size $\dx = \dy = \dz=$~1~nm. The initial conditions in the region were set by sampling the analytical solution \eqref{eq:reflection-analyt-sol}. The update equations on the boundary were the modified FDTD-Q equations described in Section~\ref{sec:scalar-equations}, such as \eqref{eq:scalar-r-bottom}, \eqref{eq:scalar-r-bottom-east}, and \eqref{eq:scalar-r-bottom-south-east}, which involved the hanging variables. 
In order to obtain the values of the hanging variables, expressions were found for the normal component of the gradient of the analytical solution \eqref{eq:reflection-analyt-sol}. The expressions were evaluated at integer time points $n \dt$ for the real part and at $(n+0.5)\dt$ for the imaginary part. 
The values of the hanging variables were zero on all faces of the boundary except for the east and west, where their values were uniform in the $y$ and $z$ direction.
This, in essence, rendered the problem one-dimensional, which was the reason why choosing the $y$ and $z$ dimensions to be small (2~nm) was possible.
The generalized CFL limit in \eqref{eq:cfl-gen} was $\dt_{\text{CFL,gen}}$~=~2.869968~fs. The CFL limit $\dt_{\text{CFL}}$ in \eqref{eq:cfl} was slightly lower (2.869915~fs), which is in agreement with the theory. The simulation time step was set to 0.999$\dt_{\text{CFL}}$.

\begin{figure}[t!]
	\centering
	\includegraphics[width=6.1in]{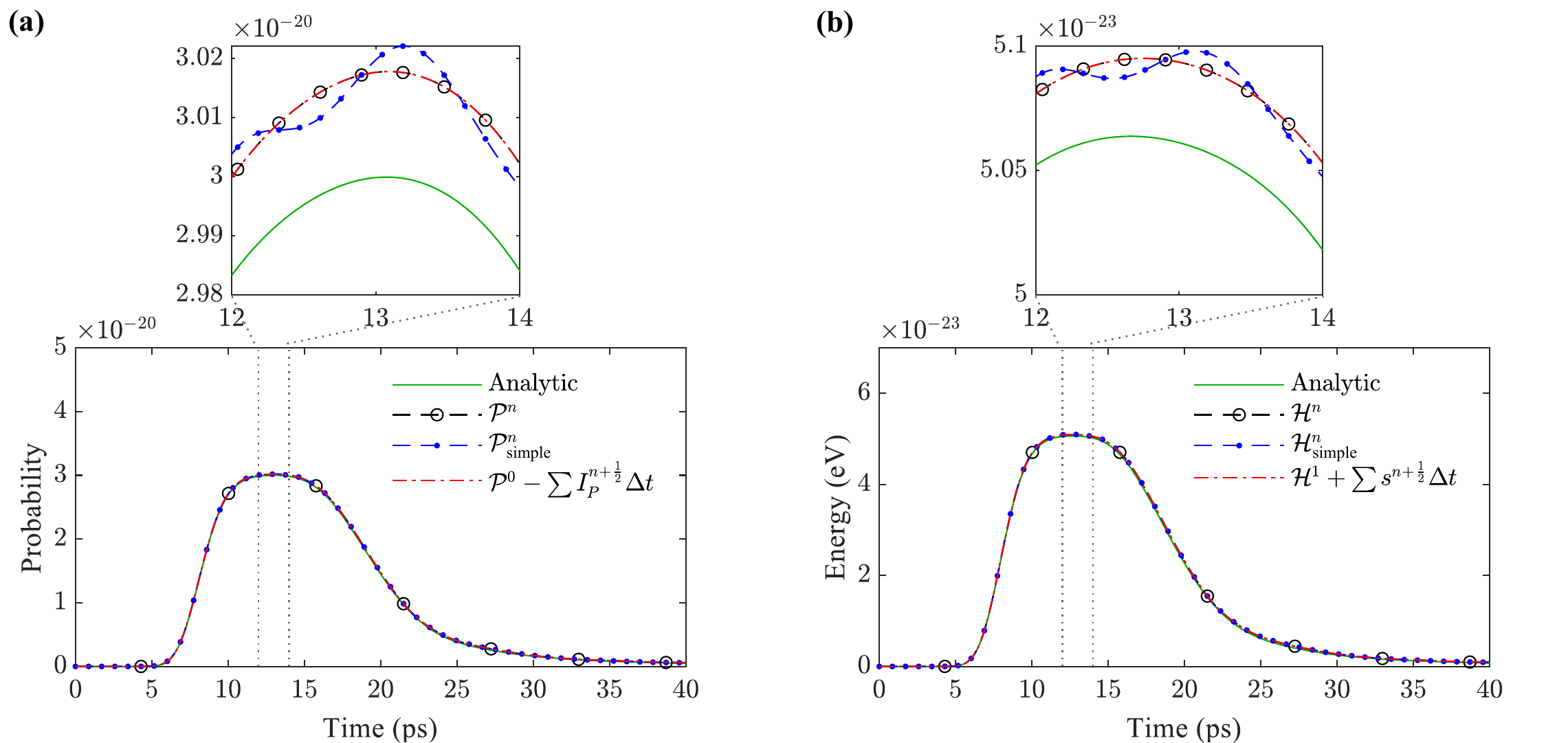}	
	\caption{Results of the test in Section~\ref{sec:ne-reflection}: (a) total probability in the region (b) total energy associated with the region.}
	\label{fig:ne-reflection-results}
\end{figure}
\begin{figure}
	\centering
	\includegraphics[width=6.7in]{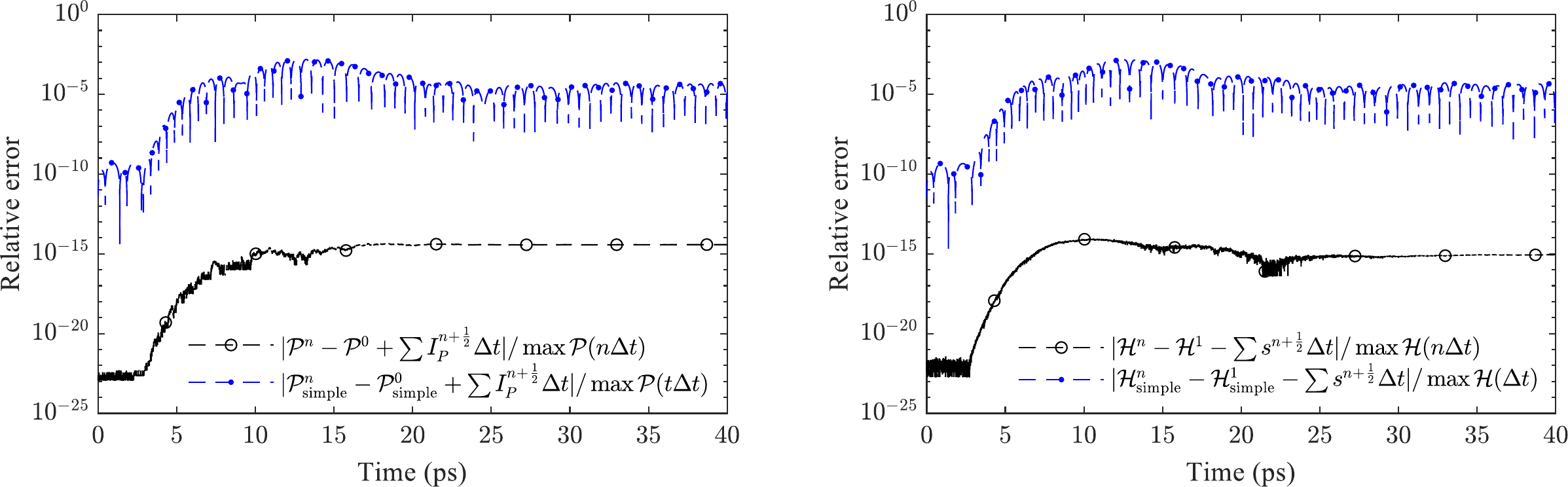}	
	\caption{Results of the test in Section~\ref{sec:ne-reflection}: (a) relative error in \eqref{eq:ne-reflection-prob-conserv} (b) relative error in \eqref{eq:ne-reflection-ener-conserv}.}
	\label{fig:ne-reflection-error-in-conserv}
\end{figure}

Fig.~\ref{fig:ne-reflection-results}(a) shows the probability of finding the particle in the region obtained via analytical computation, via the proposed expression $\mathcal P^n$ in \eqref{eq:probability}, and the simple expression $\mathcal P^n_{\text{simple}}$ in \eqref{eq:Pnaive}. The analytic values were obtained from \eqref{eq:probability-continuous}, applied to the analytical solution \eqref{eq:reflection-analyt-sol}. The integration in \eqref{eq:probability-continuous} was approximated via a Riemann sum using the midpoint rule with 1000 intervals between $x=$~0~nm and $x=$~200~nm. When using the analytical solution as a reference, the accuracy of $\mathcal P^n$ and $\mathcal P_{\text{simple}}^n$ was comparable. The maximum relative errors were 8.827$\times$10$^{-3}$ for $\mathcal P^n$, 8.978$\times$10$^{-3}$ for $\mathcal P_{\text{simple}}^n$. Fig.~\ref{fig:ne-reflection-results}(b) shows the energy stored in the region computed using $\mathcal H^n$ and $\mathcal H_{\text{simple}}^n$, as well as the analytic values approximated via the Riemann sum. The accuracy of $\mathcal H^n$ and $\mathcal H_{\text{simple}}^n$ was also comparable, with a maximum relative error associated with $\mathcal H^n$ of 1.188$\times$10$^{-2}$ and the error associated with $\mathcal H_{\text{simple}}^{n}$ of 1.197$\times$10$^{-2}$.

As predicted by Theorem~\ref{thm:probability-conserv}, the total probability $\mathcal P^n$ was non-negative, with the smallest value of 3.184$\times$10$^{-27}$. 
The smallest value of energy $\mathcal H^n$ was 4.812$\times$10$^{-30}$~eV, which was higher than the bound of $-$6.193$\times$10$^{-17}$~eV predicted by Theorem~\ref{thm:energy-conserv} when taking $\mathcal P_{\max}$ in \eqref{eq:H-lower-bound} to be the highest value of $\mathcal P^n$ over the course of the simulation. 

Based on Theorem~\ref{thm:probability-conserv}, the values of the proposed expression $\mathcal P^n$ must satisfy
\begin{equation}
	\label{eq:ne-reflection-prob-conserv}
	\mathcal P^n = \mathcal P^0 - \sum_{n'=0}^{n-1} \mathcal I_P^{n'+\frac12} \dt \,, 
	\quad \forall n = 1, \hdots, n_t \,,
\end{equation}
where the right hand side is the sum of the initial probability and the probability that has entered the region due to the flow of the probability current. The equality in \eqref{eq:ne-reflection-prob-conserv} is evident from Fig.~\ref{fig:ne-reflection-results}(a), showing both the left and right hand sides of \eqref{eq:ne-reflection-prob-conserv}.
The relative error between the two sides of \eqref{eq:ne-reflection-prob-conserv} is plotted in Fig.~\ref{fig:ne-reflection-error-in-conserv}(a). The largest relative error was 4.514$\times$10$^{-15}$, which is comparable to the machine precision and hence corroborates the prediction. The relative error in Fig.~\ref{fig:ne-reflection-error-in-conserv}(a) was normalized to the largest value of the theoretical probability over all time steps $\max \mathcal \{\mathcal P(n\Delta t)\}=$~3.000$\times$10$^{-20}$.
In contrast to the proposed expression for the total probability, when using $\mathcal P_{\text{simple}}^n$ in place of $\mathcal P^n$ in \eqref{eq:ne-reflection-prob-conserv}, the left and right hand sides of the relation are no longer equal, as can be seen from Fig.~\ref{fig:ne-reflection-error-in-conserv}(a). The maximum value of the relative error was 1.519$\times$10$^{-3}$, which is much larger than the error in \eqref{eq:ne-reflection-prob-conserv} associated with $\mathcal P^n$. Hence, at least when $\mathcal I_P^{n+0.5}$ is used as the representation of the probability current, $\mathcal P_{\text{simple}}^n$ does not possess the conservation properties exhibited by $\mathcal P^n$.
These errors in conservation when using the simpler expression $\mathcal P_{\text{simple}}^n$ also manifested themselves in the small fluctuations visible in the inset of Fig.~\ref{fig:ne-reflection-results}(a). 

Similarly, from Fig.~\ref{fig:ne-reflection-results}(b) and Fig.~\ref{fig:ne-reflection-error-in-conserv}(b) one can observe that the following relation is satisfied by the proposed expressions for the total energy and supplied power:
\begin{equation}
	\label{eq:ne-reflection-ener-conserv}
	\mathcal H^n = \mathcal H^1 + \sum_{n'=1}^{n-1} s^{n'+\frac12} \dt \,,
	\quad \forall n = 2, \hdots, n_t-1\,,
\end{equation}
with the largest relative error of 9.111$\times$10$^{-15}$, which is comparable to machine precision.
In contrast, replacing $\mathcal H^n$ with $\mathcal H_{\text{simple}}^n$ results in the relative error of 1.428$\times$10$^{-3}$ between the left and right hand sides of \eqref{eq:ne-reflection-ener-conserv}, which can no longer be explained by the finite machine precision. The normalization constant for the relative error in Fig.~\ref{fig:ne-reflection-error-in-conserv}(b) was $\max \{\mathcal H(n\Delta t)\}=$~5.064$\times$10$^{-23}$~eV. Small fluctuations that were exhibited by $\mathcal P_{\text{simple}}^n$ can also be observed for $\mathcal H_{\text{simple}}^n$ in the inset of Fig.~\ref{fig:ne-reflection-results}(b).
Hence, the results confirm that both $\mathcal P^n$ and $\mathcal H^n$ satisfy \eqref{eq:ne-reflection-prob-conserv} and \eqref{eq:ne-reflection-ener-conserv} and thus \eqref{eq:probability-balance} and \eqref{eq:energy-balance}, respectively. This property makes the proposed expressions preferable to $\mathcal P_{\text{simple}}^n$ and $\mathcal H_{\text{simple}}^n$.

\subsection{Proton tunneling}
\label{sec:ne-tunneling}

In this section we consider the scenario illustrated in Fig.~\ref{fig:ne-tunneling-setup}, which was taken from~\cite{cabonell-kostin-1973-ijqc-tunneling} and can be used as a simplified model for hydrogen transfer reactions. The problem consists of three regions: reactant (r), barrier (b), and product (p), and a proton that can transfer between the reactant and product regions via tunneling. The size of the reactant, barrier, and product regions is $l_x^{\text r}\times l_y \times l_z$, $l_x^{\text b}\times l_y \times l_z$, and $l_x^{\text p}\times l_y \times l_z$, respectively. The barrier region has a potential of $U_0$ and the other two regions have the potential equal to zero. Dirichlet zero boundary conditions are imposed on the external boundaries of the regions. On the interfaces, the continuity of the wavefunction and the normal component of its gradient is imposed. Using the separation of variables, the analytical solution in each region can be found as~\cite{cabonell-kostin-1973-ijqc-tunneling}
\begin{equation}\label{eq:ne-tunneling-analytic}
	\psi^{\text r, \text b, \text p}(x,y,z,t) = \sum_{m=1}^{N} M_m f_m^{\text r, \text b, \text p}(x) g_m(y) h_m(z) \exp\left(-\mathrm{i} \dfrac{E_m}{\hbar} t\right)\,, 
	\quad 0 \le x \le l_{x}^{\text r, \text b, \text p}\,,
\end{equation}
where $N$ is the number of eigenfunctions considered. The expression for $f_m^{\text r, \text b, \text p}(x)$ in each region is
\begin{subequations}
	\begin{align}
		&f_m^{\text r}(x) = A_m \sin\left(k_{xm}x\right) \,,\\ 
		&f_m^{\text b}(x) = 
		B_m \cosh\left(K_{xm}\left(x-\frac{l_x^{\text b}}{2}\right)\right)
		+C_m \sinh\left(K_{xm}\left(x-\frac{l_x^{\text b}}{2}\right)\right)\,,\\
		&f_m^{\text p}(x) = D_m \sin\left(k_{xm}\left(x-l_x^{\text p}\right)\right)\,,
	\end{align}
\end{subequations}
where the choice between $B_m = 0$ and $C_m=0$ determines the symmetry of the eigenfunction.
The coefficients $A_m$, $B_m$, $C_m$, and $D_m$ are chosen to ensure that the wavefunction is continuous across the region interfaces and that
\begin{equation}
	\int_{\text r} |f_m^{\text r}(x)|^2 dx + \int_{\text{b}} |f_m^{\text b}(x)|^2 dx + \int_{\text{p}} |f_m^{\text p}(x)|^2 dx = 1\,.
\end{equation}
The expressions for $g_m(y)$ and $h_m(z)$ are given by
\begin{equation}
	g_m(y) = \sqrt{\frac{2}{l_y}} \sin\left(k_{ym}y\right) \,,
\end{equation}
\begin{equation}
	h_m(z) = \sqrt{\frac{2}{l_z}} \sin\left(k_{zm}z\right)\,.
\end{equation}
The values of $k_{ym}$ and $k_{zm}$ are such that the zero Dirichlet boundary conditions are satisfied for $y=l_y$ and $z=l_z$. The values of $k_{xm}$ and $K_{xm}$ are related through 
\begin{equation}
	E_{xm} = \dfrac{\hbar^2 k_{xm}^2}{2m_P} =  V_0-\dfrac{\hbar^2 K_{xm}^2}{2m_P}\,,
\end{equation}
where $m_P$~=~1~dalton~$\approx$~1.661$\times$10$^{-27}$~kg. 
The value of $E_{x,m}$ can be obtained by imposing the continuity of the $x$ derivative of the wavefunction across the region interfaces.
The energy $E_m$ in \eqref{eq:ne-tunneling-analytic} corresponding to each eigenfunction is given by
\begin{equation}
	E_m = E_{xm} + E_{ym} + E_{zm}\,,
\end{equation}
where
\begin{align}
	&E_{ym} = \dfrac{\hbar^2 k_{ym}^2}{2m_P}\,,\\
	&E_{zm} = \dfrac{\hbar^2 k_{zm}^2}{2m_P}\,.
\end{align}

Following~\cite{cabonell-kostin-1973-ijqc-tunneling}, the coefficients $M_m$ in \eqref{eq:ne-tunneling-analytic} are assumed to have the following dependence on the temperature $T$:
\begin{equation}\label{eq:Mm}
	M_m = M M'_m\,,
\end{equation}
where
\begin{equation}
	M'_m = \exp\left(-\frac{E_m}{2Tk_{B}}+\mathrm{i}\delta_m\right)\,,
\end{equation}
where $k_{B}$ is the Boltzmann constant, and $\delta_m$ are phases that can be chosen arbitrarily. Scalar $M$ in \eqref{eq:Mm} is a normalization constant given by
\begin{equation}
	M = \dfrac{1}{\sqrt{\sum_{m=1}^{N} |M'_m|^2}}\,.
\end{equation}

The temperature and the number of eigenfunctions were taken from~\cite{cabonell-kostin-1973-ijqc-tunneling} as $T =$~298~K and $N=$~8, respectively. The dimensions of the regions, also from \cite{cabonell-kostin-1973-ijqc-tunneling}, are shown in Fig.~\ref{fig:ne-tunneling-setup}. Unlike \cite{cabonell-kostin-1973-ijqc-tunneling}, we only consider a particular solution with the phases $\delta_m$ in Table~\ref{tab:ne-tunneling-setup}, as opposed to an ensemble of tunneling systems with many sets of randomized phases $\delta_m$. In \cite{cabonell-kostin-1973-ijqc-tunneling}, the model was further extended to include the effect of thermal vibrations. 

\begin{figure}
	\centering
	\includegraphics[width=3.8in]{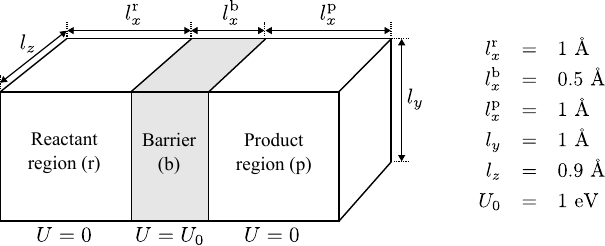}	
	\caption{The scenario from~\cite{cabonell-kostin-1973-ijqc-tunneling} considered in Section~\ref{sec:ne-tunneling}, modeling proton tunneling through a barrier with ${U = U_0}$.}
	\label{fig:ne-tunneling-setup}
\end{figure}

\begin{table}
	\centering
	\caption{Description of the solution modes considered (see \cite{cabonell-kostin-1973-ijqc-tunneling} for details).}\label{tab:ne-tunneling-setup}
	\begin{tabular}{|c|c|c|c|c|c|}\hline
		$m$ & $\delta_m$ & $E_{xm}$ (meV)& Symmetry of $f_m^{\text b}(x)$ &  $k_{ym}$ & $k_{zm}$\\\hline\hline
		1 & 0.15$\pi$	&18.858713602402805 & $B_m \ne 0$, $C_m = 0$ & $\pi/l_y$ & $\pi/l_z$\\
		2 & 0.95$\pi$ 	&18.858842481348997 & $B_m = 0$, $C_m \ne 0$ & $\pi/l_y$& $\pi/l_z$\\
		3 & 0.25$\pi$ 	&75.369931622707597 & $B_m \ne 0$, $C_m = 0$ & $\pi/l_y$& $\pi/l_z$\\
		4 & 1.1$\pi$ 	&75.370616971460279 & $B_m = 0$, $C_m \ne 0$ & $\pi/l_y$& $\pi/l_z$\\
		5 & 0 			&$E_{x1}$ & $B_m \ne 0$, $C_m = 0$ & $2\pi/l_y$& $\pi/l_z$\\
		6 & 1.3$\pi$ 	&$E_{x2}$ & $B_m = 0$, $C_m \ne 0$ & $2\pi/l_y$& $\pi/l_z$\\
		7 & 0 			&$E_{x1}$ & $B_m \ne 0$, $C_m = 0$ & $\pi/l_y$ & $2\pi/l_z$\\
		8 & 0.7$\pi$	&$E_{x2}$ & $B_m = 0$, $C_m \ne 0$ & $\pi/l_y$& $2\pi/l_z$\\\hline
	\end{tabular}
\end{table}

We use this scenario to examine the conservation of probability and energy when multiple FDTD-Q models are connected and to study the transfer of these quantities from one region's model to another.
Three FDTD-Q models were defined: one discretizing the reactant region, another discretizing the barrier, and the third discretizing the product region. The theoretical treatment of this scenario is described in Section~\ref{sec:stability-joint} for two regions and an extension to tree regions is analogous. The cell dimensions in each region were $\dx = \dy = \dz = (1/30)$~\AA. The three FDTD-Q models were coupled by equating the wavefunction values and hanging variables on the interfaces, as described in Section~\ref{sec:stability-joint}. 
The CFL limits and the generalized CFL limits in each region are shown in Table~\ref{tab:ne-tunneling-cfl}. Their values were the same, apart from round-off error. The simulation time step was taken as 0.999 of the CFL limit of the barrier region ($\dt \approx$~55.79~as), which automatically satisfied the CFL limit condition in the reactant and product regions. This time step ensures stability, according to~\cite{dai-stability-2005} or based on the discussion in Section~\ref{sec:stability-joint}. The initial conditions were set by sampling the analytical solution~\eqref{eq:ne-tunneling-analytic} and normalizing the result to achieve the total probability in the three regions $\mathcal P_{\text r}^0 + \mathcal P_{\text b}^0 + \mathcal P_{\text p}^0 = 1$.  

\begin{table}\centering
	\caption{CFL limit and generalized CFL limit in each region in the test of Section~\ref{sec:ne-tunneling}}
	\label{tab:ne-tunneling-cfl}
	\begin{tabular}{|c|c|c|c|}\hline
		Region 	   & Reactant & Barrier & Product \\\hline \hline
		$\dt_{\text{CFL,gen}}$ (as) & 58.318879645754564 & 55.844895610996367  & 58.318879645754564\\\hline 
		$\dt_{\text{CFL}}$ (as) & 58.318879645754819 & 55.844895610996282  &58.318879645754819\\\hline
	\end{tabular}
\end{table}

\begin{figure}
	\centering
	\includegraphics[scale=0.95]{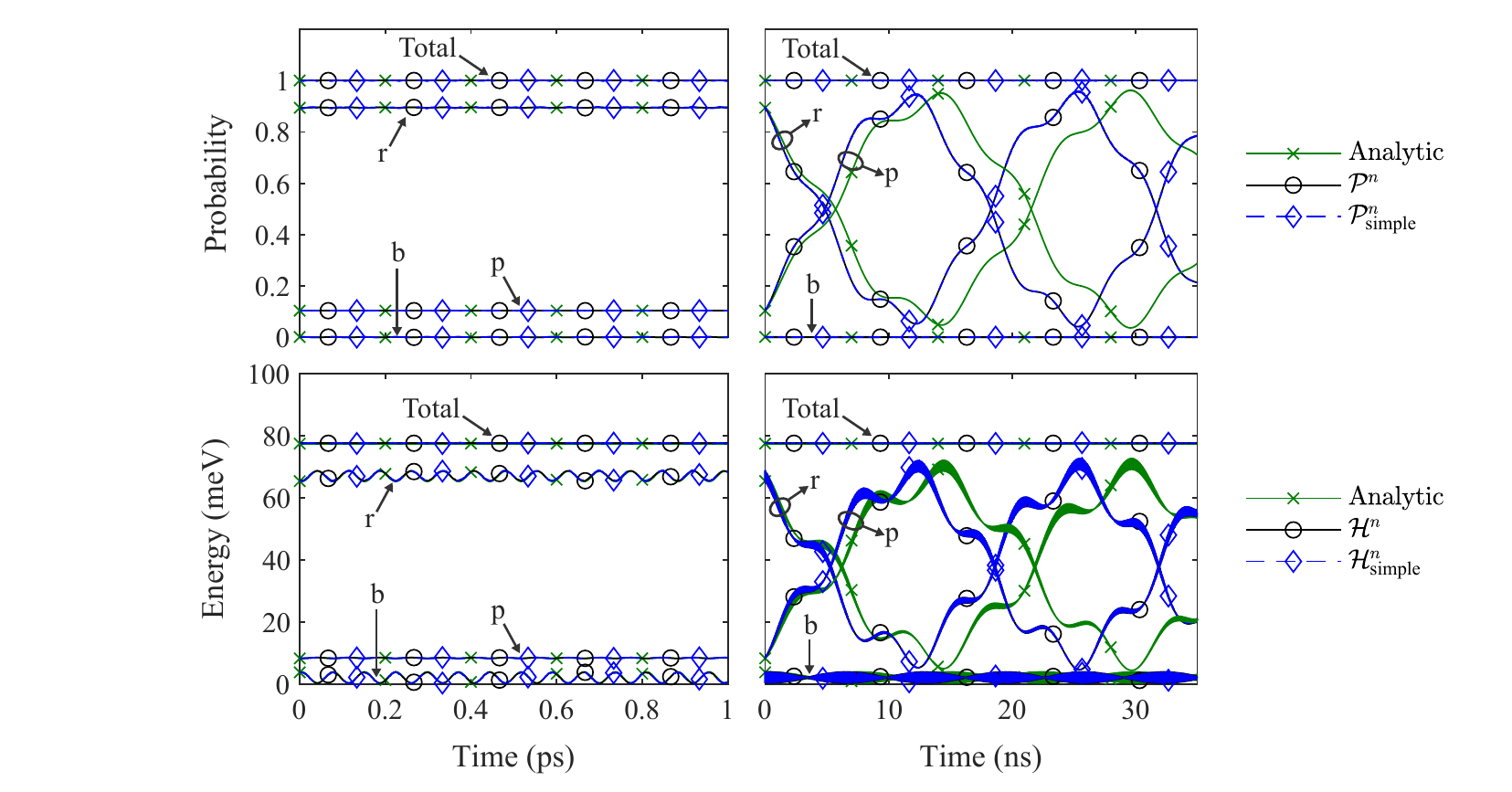}
	\caption{Energy and probability computed in the test of Section~\ref{sec:ne-tunneling}. Left panels: 1~ps simulation. Right panels: 35~ns simulation.}
	\label{fig:ne-tunneling-results}
\end{figure}

One simulation run was performed to study the first 1~ps of the particle's evolution and a longer simulation was done to observe the behavior of the system over 35~ns. In the longer simulation, probability and energy were computed once in ten time steps ($10\dt=$~0.5579~fs). This allowed reducing the memory usage while providing sufficient information, considering that the shortest period of a mode in \eqref{eq:ne-tunneling-analytic} was 29.26~fs. The results are shown in Fig.~\ref{fig:ne-tunneling-results}. During the first 1~ps, the simulated results match well with the analytical prediction. For the 35~ns simulation, the results deviate from the analytical solution, which is expected due to dispersion errors caused by the finite discretization of the Schrödinger equation. Nevertheless, the simulation was able to correctly model the overall trend of the solution. The accuracy of $\mathcal P_{\text{simple}}^n$ and $\mathcal H_{\text{simple}}^n$ for each region was comparable to that of $\mathcal P^n$ and $\mathcal H^n$.

\begin{figure}
	\centering
	\includegraphics[scale=0.7]{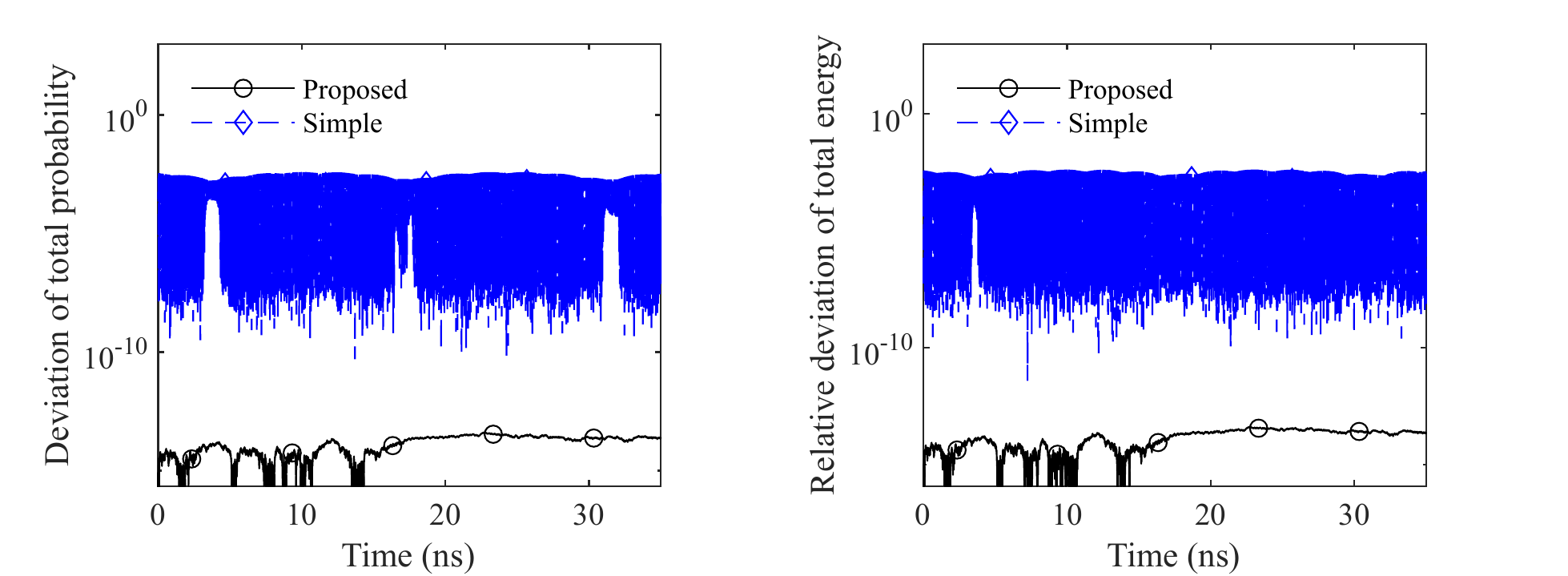}
	\caption{Relative deviation of the total probability and energy from their initial values (a) deviation of probability computed from \eqref{eq:ne-tunneling-rel-dev-from-init-prob-proposed} and \eqref{eq:ne-tunneling-rel-dev-from-init-prob-simple} (b) normalized deviation of energy computed from \eqref{eq:ne-tunneling-rel-dev-from-init-ener-proposed} and \eqref{eq:ne-tunneling-rel-dev-from-init-ener-simple}.}
	\label{fig:ne-tunneling-rel-dev-from-init}
\end{figure}

From Theorem~\ref{thm:energy-conserv}, using 1 as a bound on probability, the energy $\mathcal H^n$ cannot take on values below -1.608~keV for the reactant or product regions and below -54.10~keV for the barrier region. The values of energy in each region were positive, which is in agreement with these lower bounds. Probability in each region was also positive, which is consistent with Theorem~\ref{thm:probability-conserv}.

Based on the discussion in Section~\ref{sec:stability-joint}, the sum of the total probabilities over the three regions satisfies
\begin{equation}
	\label{eq:ne-tunneling-roc-prob}
	\dfrac{
		\left(\mathcal P_{\text r}^{n+1} + \mathcal P_{\text b}^{n+1} + \mathcal P_{\text p}^{n+1} \right) 
		- 
		\left(\mathcal P_{\text r}^{n} + \mathcal P_{\text b}^{n} + \mathcal P_{\text p}^{n}\right)
	}{\dt} = 
	-\mathcal I_{P,\text r}|^{n+\frac12}
	-\mathcal I_{P,\text b}|^{n+\frac12}
	-\mathcal I_{P,\text p}|^{n+\frac12}
	= 0
\end{equation}
and hence the total probability must stay constant throughout the simulation.
Similarly, one can show an analogous result for the sum of the total energies
\begin{equation}
	\label{eq:ne-tunneling-roc-ener}
	\dfrac{
		\left(\mathcal H_{\text r}^{n+1} + \mathcal H_{\text b}^{n+1} + \mathcal H_{\text p}^{n+1}\right) 
		- 
		\left(\mathcal H_{\text r}^{n} + \mathcal H_{\text b}^{n} + \mathcal H_{\text p}^{n}\right) 
	}{\dt} 
	= s_{\text r}^{n+\frac12} + s_{\text b}^{n+\frac12} + s_{\text p}^{n+\frac12} 
	= 0\,.
\end{equation}
Thus, the total energy must likewise stay constant.
In order to assess these predictions, we compute the relative deviation of the probability and energy as
\begin{equation}
	\label{eq:ne-tunneling-rel-dev-from-init-prob-proposed}
	\left| \left(\mathcal P_{\text r}^{n} + \mathcal P_{\text b}^{n} + \mathcal P_{\text p}^{n} \right) 
	- 
	\left(\mathcal P_{\text r}^{0} + \mathcal P_{\text b}^{0} + \mathcal P_{\text p}^{0}\right) \right|\,, \quad n = 0, 10, 20, \hdots \,,
\end{equation} 
\begin{equation}
	\label{eq:ne-tunneling-rel-dev-from-init-ener-proposed}
	\dfrac{1}{\mathcal H(n\dt)}\left| \left(\mathcal H_{\text r}^{n} + \mathcal H_{\text b}^{n} + \mathcal H_{\text p}^{n} \right) 
	- 
	\left(\mathcal H_{\text r}^{1} + \mathcal H_{\text b}^{1} + \mathcal H_{\text p}^{1}\right) \right|\,, \quad n = 1, 11, 21, \hdots \,,
\end{equation} 
where $\mathcal H(t)$ is the energy associated with the analytical solution and is equal to 77.5~meV. The values of \eqref{eq:ne-tunneling-rel-dev-from-init-prob-proposed} and \eqref{eq:ne-tunneling-rel-dev-from-init-ener-proposed} are shown in Fig.~\ref{fig:ne-tunneling-rel-dev-from-init}(a) and Fig.~\ref{fig:ne-tunneling-rel-dev-from-init}(b). The largest values of \eqref{eq:ne-tunneling-rel-dev-from-init-prob-proposed} and \eqref{eq:ne-tunneling-rel-dev-from-init-ener-proposed} were 4.13$\times$10$^{-14}$ and 4.16$\times$10$^{-14}$, respectively. As expected, these fluctuations are extremely small and can be attributed to the finite machine precision. 
For comparison, we also compute 
\begin{equation}
	\label{eq:ne-tunneling-rel-dev-from-init-prob-simple}
	\left| \left(
	\mathcal P_{\text r, \text{simple}}^{n} 
	+ \mathcal P_{\text b, \text{simple}}^{n} 
	+ \mathcal P_{\text p, \text{simple}}^{n} 
	\right) 
	- 
	\left(
	\mathcal P_{\text r, \text{simple}}^{0} 
	+ \mathcal P_{\text b, \text{simple}}^{0} 
	+ \mathcal P_{\text p, \text{simple}}^{0}
	\right) \right|\,, \quad n = 0, 10, 20, \hdots \,,
\end{equation} 
\begin{equation}
	\label{eq:ne-tunneling-rel-dev-from-init-ener-simple}
	\dfrac{1}{\mathcal H(n\dt)}\left| \left(
	\mathcal H_{\text r, \text{simple}}^{n} 
	+ \mathcal H_{\text b, \text{simple}}^{n} 
	+ \mathcal H_{\text p, \text{simple}}^{n} 
	\right) 
	- 
	\left(
	\mathcal H_{\text r, \text{simple}}^{1} 
	+ \mathcal H_{\text b, \text{simple}}^{1} 
	+ \mathcal H_{\text p, \text{simple}}^{1}
	\right) \right|\,, \quad n = 1, 11, 21, \hdots \,,
\end{equation} 
which are also shown in Fig.~\ref{fig:ne-tunneling-rel-dev-from-init}(a) and Fig.~\ref{fig:ne-tunneling-rel-dev-from-init}(b). 
The largest values of \eqref{eq:ne-tunneling-rel-dev-from-init-prob-simple} and \eqref{eq:ne-tunneling-rel-dev-from-init-ener-simple} are, respectively, 3.92$\times$10$^{-3}$ and 4.21$\times$10$^{-3}$, meaning that the simple expressions do not abide the conservation of probability and energy exactly. 



\begin{figure}
	\centering
	\includegraphics[scale=0.75]{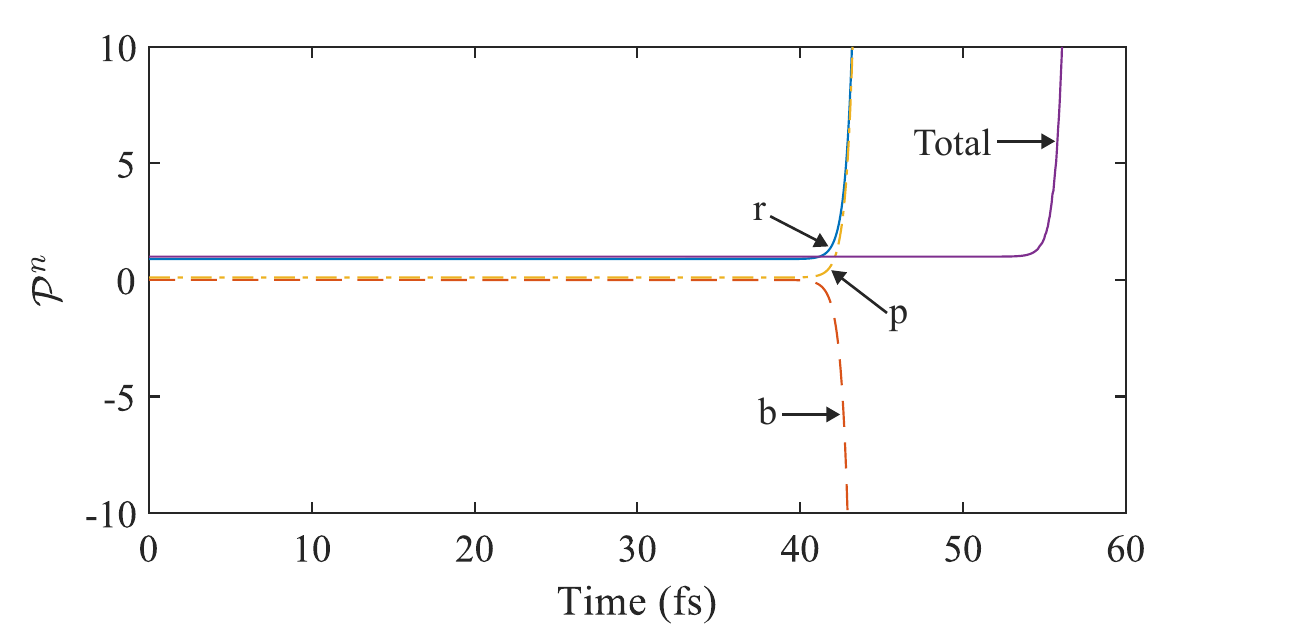}
	\caption{Results of the test in Section~\ref{sec:ne-tunneling} for the time step exceeding the CFL limit ${\dt_{\text{CFL}}^{\text b}}$ and the generalized CFL limit ${\dt_{\text{CFL, gen}}^{\text b}}$ in the barrier region.}
	\label{fig:ne-tunneling-results-unstable}
\end{figure}

In order to illustrate a possible mechanism for the breakdown of conservation properties beyond the CFL limit, we also run a simulation with a time step that violates the CFL limit condition in the barrier region ($\dt = 1.005 \dt_{\text{CFL}}^{\text b} = 0.9624 \dt_{\text{CFL}}^{\text r} = 0.9624 \dt_{\text{CFL}}^{\text p}$). The resulting probability is shown in Fig.~\ref{fig:ne-tunneling-results-unstable}. Beyond the CFL limit, the inequality \eqref{eq:bound-on-P} can be violated in the barrier region and the probability associated with that region is permitted to indefinitely grow negative. This allows the barrier region to provide an infinite amount of probability to the reactant and product regions, causing their probabilities to indefinitely grow positive. The total probability remains unchanged, until the exponential growth of the wavefunction values due to instability eventually causes the the total probability to deviate from 1, as a result of the finite machine precision.

\section{Conclusions}
\label{sec:conslusions}

In this work, we investigated the probability and energy conservation properties of the finite-difference time-domain scheme for solving the Schrödinger equation. Existing works on the conservation properties of numerical schemes for the Schrödinger equation consider a region that is terminated with periodic or zero Dirichlet boundary conditions, which do not allow any net exchange of probability and energy with the surrounding space. In contrast, we considered a general scenario where a region can either be terminated using boundary conditions or form a portion of a larger domain, where probability and energy may enter or leave the region through the boundary. 
We introduced modified equations on the boundary of the region in order to write a self-contained model that allows analyzing the properties of the region without making assumptions on the nature of the discretization beyond the boundary of the region.
We proposed expressions for the total probability and energy contained in a region discretized using FDTD-Q, as well as expressions for the probability current and supplied power. 
Using these expressions, we showed that the FDTD-Q method conserves probability under the CFL limit that has been traditionally used for ensuring stability. We provided an illustration of the mechanism in which a violation of the CFL condition can result in violation of the conservation of probability for an example involving three connected FDTD regions. Furthermore, we have shown that the CFL condition ensures that the energy is conserved, under an assumption that the region is coupled to other probability-conserving models.

The proposed expressions for computing probability and energy were compared to a more straightforward approach and several advantages were found. First, the proposed expressions respect the probability and energy conservation exactly. 
The exact conservation properties avoid spurious fluctuations exhibited by the values of the simpler expressions, which are especially evident when a system is isolated. Second, the proposed expressions for probability and energy tend to be slightly more accurate than the simple expressions. Lastly, in isolated systems, the value of the proposed probability is convenient for normalizing initial values, since it is guaranteed to stay constant.

The proposed theory sheds light on the energy and probability conservation in situations where the FDTD-Q model of the region can exchange these quantities with the surrounding space. This insight can serve as a basis for a stability analysis and enforcement framework in scenarios where the region is coupled to other models that can exchange energy and/or probability with the region. As a proof of concept, we considered the case of multiple regions with the same FDTD-Q discretization that are coupled to each other. We envision other possible applications, such as creation of stable subgridding schemes~\cite{salehi-2020-jce-polar,okoniewski-3d-subgridding-1997}, stable incorporation of reduced order models~\cite{kulas-macromodels-2004,jnl-tap-2018-fdtd-mor},
and stability analysis of advanced boundary conditions~\cite{sullivan-2005-jap-efncs-nanoscale,nagel-2009-aces,zhidong-2009-js-abc,decleer-2021-jcam}. Moreover, multi-physics simulations~\cite{lopata-2009-jcp-multiscale-ms,yao-2015-cma-leapfrog-ms-metamaterials,chen-2017-jcp-canonical,chen-2017-cpc-unified,xie-2021-universal-vec-scal} can also be considered as a type of scenario where an exchange occurs between the energy associated with the quantum particle and the energy stored in another form, such as the energy of electromagnetic field. 
Hence, the approach proposed in this paper could be very convenient if extended to such scenarios, allowing to prove stability by ensuring that each part of the system conserves the quantities of interest.

\appendix
\appendixpage

\section{Indexing convention and expressions for the matrices in Section~\ref{sec:cmf}}
\label{appendix:indexing-and-cmf-expressions}

\begin{table}\centering
	\caption{Indexing convention for vectors and matrices in Section~\ref{sec:cmf}.}\label{tab:indexing-convention}
	\begin{tabular}{|c|c|c|c|}\hline
		Sample description & Scalar index & Vector index & Example(s)\\\hline\hline
		Primary nodes & $(i,j,k)$ & $i + (j-1)(n_x+1) + (k-1)(n_x+1)(n_y+1)$ & $\PSI_R^n$, $\PSI_I^{n-0.5}$ in \eqref{eq:cmf-pre-a}\\\hline\hline
		\multicolumn{4}{|c|}{Primary edges} \\ \hline\hline
		$x$-directed& $(i+0.5,j,k)$ & $i+(j-1)n_x + (k-1)n_x (n_y+1)$ & $\partial_x \PSI_R^n$ in \eqref{eq:structure-gradPsiR}\\\hline
		$y$-directed & $(i,j+0.5,k)$ & $i+(j-1)(n_x+1) + (k-1)(n_x+1) n_y$ & $\partial_y \PSI_R^n$ in \eqref{eq:structure-gradPsiR}\\\hline
		$z$-directed & $(i,j,k+0.5)$ & $i+(j-1)(n_x+1) + (k-1)(n_x+1) (n_y+1)$ & $\partial_z \PSI_R^n$ in \eqref{eq:structure-gradPsiR}\\\hline\hline
		\multicolumn{4}{|c|}{Additional primary edges normal to the boundary} \\ \hline\hline
		$x$-directed, W & $(1,j,k)$ & $j + (k-1)(n_y+1)$ & $[\partial_x \PSI_I]_{\text{W}}^{n+0.5}$ in \eqref{eq:gradPsiIbot-struct}\\\hline
		$x$-directed, E & $(n_x+1,j,k)$ & $j + (k-1)(n_y+1)$ & $[\partial_x \PSI_I]_{\text{E}}^{n+0.5}$  in \eqref{eq:gradPsiIbot-struct}\\\hline
		$y$-directed, S & $(i,1,k)$ & $i + (k-1)(n_x+1)$ & $[\partial_y \PSI_I]_{\text{S}}^{n+0.5}$ in \eqref{eq:gradPsiIbot-struct}\\\hline
		$y$-directed, N & $(i,n_y+1,k)$ & $i + (k-1)(n_x+1)$ & $[\partial_y \PSI_I]_{\text{N}}^{n+0.5}$ in \eqref{eq:gradPsiIbot-struct}\\\hline
		$z$-directed, B & $(i,j,1)$ & $i + (j-1)(n_x+1)$ & $[\partial_z \PSI_I]_{\text{B}}^{n+0.5}$ in \eqref{eq:gradPsiIbot-struct}\\\hline
		$z$-directed, T & $(i,j,n_z+1)$ & $i + (j-1)(n_x+1)$ & $[\partial_z \PSI_I]_{\text{T}}^{n+0.5}$ in \eqref{eq:gradPsiIbot-struct}\\\hline
	\end{tabular}
\end{table}

This appendix provides expressions for the matrices in Section~\ref{sec:cmf}. These expressions assume a specific order of the samples in the vectors in Section~\ref{sec:cmf}. In particular, vectors of quantities sampled at primary nodes, such as $\PSI_R^{n}$ or $\PSI_I^{n-0.5}$ in \eqref{eq:cmf-pre-a} have the elements ordered based on the indexing convention in Table~\ref{tab:indexing-convention}. For example, a sample $\psi_R|_{i,j,k}^n$ would be placed in the element $i + (j-1)(n_x+1) + (k-1)(n_x+1)(n_y+1)$ in $\PSI_R^n$. With this convention, matrix $\volSec$ in \eqref{eq:cmf-pre-a} is defined as
\begin{equation}
	\volSec = \dx \dy \dz \ \tilde \I_{n_z+1} \otimes \tilde \I_{n_y+1} \otimes \tilde \I_{n_x+1}\,,
\end{equation}
where $\tilde \I_m$ is an $m \times m$ matrix given by
\begin{equation}
	\tilde \I_m = \operatorname{diag}\left(\frac12, 1, 1, \hdots, 1, \frac 12\right)\,.
\end{equation}
Diagonal matrix $\diagon{U}$ contains the samples $U|_{i,j,k}$ placed on the diagonal elements according to the indexing convention in Table~\ref{tab:indexing-convention}.

Vectors with samples on the primary edges, such as $\nabla \PSI_R^n$ in \eqref{eq:gradPsiR}, contain, in that order, the samples on the $x$-directed, $y$-directed, and $z$-directed edges. 
\begin{equation}\label{eq:structure-gradPsiR}
	\nabla \PSI_R^n = \begin{bmatrix}
		\partial_x \PSI_R^n\\
		\partial_y \PSI_R^n\\
		\partial_z \PSI_R^n
	\end{bmatrix}\,,
\end{equation}
where the ordering of samples in $\partial_x \PSI_R^n$, $\partial_y \PSI_R^n$, and $\partial_z \PSI_R^n$ is shown in Table~\ref{tab:indexing-convention}. With this, $\D$ is defined as
\begin{equation}
	\D = \begin{bmatrix}
		\D_x  & \D_y & \D_z
	\end{bmatrix}\,,
\end{equation}
where
\begin{equation}		
	\D_x = - \I_{n_z+1} \otimes \I_{n_y+1} \otimes \W_{n_x}^T\,,\quad
	\D_y = - \I_{n_z+1} \otimes \W_{n_y}^T \otimes \I_{n_x+1}\,,\quad
	\D_z = - \W_{n_z}^T \otimes \I_{n_y+1} \otimes \I_{n_x+1}\,,
\end{equation}
where $\W_m$ is an $m \times (m+1)$ matrix given by
\begin{equation}
	\W_m = 
	\begin{bmatrix}
		\mathbf 0_{m\times 1} & \I_m
	\end{bmatrix}
	-\begin{bmatrix}
		\I_m & \mathbf 0_{m\times 1}
	\end{bmatrix}.
\end{equation}
Matrices $\areaSec$ and $\lenPrim$ are given by
\begin{align}
	\areaSec &= \operatorname{diag}\left(
	\dy \dz\ \tilde \I_{n_z+1} \otimes \tilde \I_{n_y+1} \otimes \I_{n_x}, \
	\dx \dz\ \tilde \I_{n_z+1} \otimes \I_{n_y} \otimes \tilde \I_{n_x+1},\
	\dx \dy\ \I_{n_z} \otimes \tilde \I_{n_y+1} \otimes \tilde \I_{n_x+1}\right)\,,
	\\
	\lenPrim &= \operatorname{diag}\left(
	\dx\ \I_{n_z} \otimes \I_{n_y} \otimes \I_{n_z}, \ 
	\dy\ \I_{n_z} \otimes \I_{n_y} \otimes \I_{n_z}, \ 
	\dz\ \I_{n_z} \otimes \I_{n_y} \otimes \I_{n_z}
	\right)\,.
\end{align}

The hanging variables corresponding to the six faces of the boundary are ordered as follows: west, east, south, north, bottom, top. For example $[\nabla \PSI_I]_{\bot}^{n+0.5}$ in \eqref{eq:cmf-pre-a} has the following structure
\begin{equation}\label{eq:gradPsiIbot-struct}
[\nabla \PSI_I]_{\bot}^{n+\frac12} = 
\begin{bmatrix}
	[\partial_x \PSI_I]_{\text{W}}^{n+\frac12}\\
	[\partial_x \PSI_I]_{\text{E}}^{n+\frac12}\\
	[\partial_y \PSI_I]_{\text{S}}^{n+\frac12}\\
	[\partial_y \PSI_I]_{\text{N}}^{n+\frac12}\\
	[\partial_z \PSI_I]_{\text{B}}^{n+\frac12}\\
	[\partial_z \PSI_I]_{\text{T}}^{n+\frac12}
\end{bmatrix}\,,
\end{equation}
where the indexing convention of each of the vectors on the right hand side of \eqref{eq:gradPsiIbot-struct} is shown in Table~\ref{tab:indexing-convention}.
Based on this ordering of the hanging variables, $\Lbot$ is defined as
\begin{align}
	\Lbot &= \begin{bmatrix}
		\mathbf L_{\text W} & 
		\mathbf L_{\text E} & 
		\mathbf L_{\text S} & 
		\mathbf L_{\text N} & 
		\mathbf L_{\text B} & 
		\mathbf L_{\text T}
	\end{bmatrix}\,,
\end{align}
where
\begin{equation}
\begin{aligned}
	&\mathbf L_{\text W} = \I_{n_z+1} \otimes \I_{n_y+1} \otimes \mathbf e\{1,n_x+1\}\,, \quad
	&\mathbf L_{\text E} = \I_{n_z+1} \otimes \I_{n_y+1} \otimes \mathbf e\{n_x+1,n_x+1\}\,,\\
	&\mathbf L_{\text S} = \I_{n_z+1} \otimes \mathbf e\{1,n_y+1\} \otimes \I_{n_x+1} \,,\quad
	&\mathbf L_{\text N} = \I_{n_z+1} \otimes \mathbf e\{n_y+1,n_y+1\} \otimes \I_{n_x+1} \,,\\
	&\mathbf L_{\text B} = \mathbf e\{1,n_z+1\} \otimes \I_{n_y+1} \otimes \I_{n_x+1} \,,\quad
	&\mathbf L_{\text T} = \mathbf e\{n_z+1,n_z+1\} \otimes \I_{n_y+1} \otimes \I_{n_x+1}\,,
\end{aligned}
\end{equation}
where $\mathbf e\{p,m\}$ a vector of size $m \times 1$ with 1 in position $p$ and zeroes in all other positions. Similarly, matrix $\ndot$ is given by
\begin{equation}
	\ndot = \operatorname{diag}\left(
	\diagon{(\hat n \cdot)\text W},
	\diagon{(\hat n \cdot)\text E},
	\diagon{(\hat n \cdot)\text S},
	\diagon{(\hat n \cdot)\text N},
	\diagon{(\hat n \cdot)\text B},
	\diagon{(\hat n \cdot)\text T}
	\right)\,,
\end{equation}
where
\begin{equation}
\begin{aligned}
	&\diagon{(\hat n \cdot)\text W} = [\hat n_{\text W} \cdot \hat x]\ \I_{n_z+1} \otimes \I_{n_y+1}\,,
	&\diagon{(\hat n \cdot)\text E} = [\hat n_{\text E} \cdot \hat x]\ \I_{n_z+1} \otimes \I_{n_y+1}\,, \\
	&\diagon{(\hat n \cdot)\text S} = [\hat n_{\text S} \cdot \hat y]\ \I_{n_z+1} \otimes \I_{n_x+1}\,,
	&\diagon{(\hat n \cdot)\text N} = [\hat n_{\text N} \cdot \hat y]\ \I_{n_z+1} \otimes \I_{n_x+1} \,,\\
	&\diagon{(\hat n \cdot)\text B} = [\hat n_{\text B} \cdot \hat z]\ \I_{n_y+1} \otimes \I_{n_x+1}\,,
	&\diagon{(\hat n \cdot)\text T} = [\hat n_{\text T} \cdot \hat z]\ \I_{n_y+1} \otimes \I_{n_x+1} \,,
\end{aligned}
\end{equation}
where $[\hat n_{\text W} \cdot \hat x] = [\hat n_{\text S} \cdot \hat y] = [\hat n_{\text B} \cdot \hat z]= -1$ and $[\hat n_{\text E} \cdot \hat x] = [\hat n_{\text N} \cdot \hat y] = [\hat n_{\text T} \cdot \hat z] = 1$. Matrix $\areaSecBndry$ is given by
\begin{equation}
	\areaSecBndry = \operatorname{diag}\left(
	\diagon{S,b,\text W}'', 
	\diagon{S,b,\text E}'', 
	\diagon{S,b,\text S}'', 
	\diagon{S,b,\text N}'', 
	\diagon{S,b,\text B}'', 
	\diagon{S,b,\text T}''
	\right)\,,
\end{equation}
where
\begin{equation}
\begin{aligned}
	&\diagon{S,b,\text W}'' = \dy \dz \ \tilde{\I}_{n_z+1} \otimes \tilde{\I}_{n_y+1}\,,
	&\diagon{S,b,\text E}'' = \dy \dz \ \tilde{\I}_{n_z+1} \otimes \tilde{\I}_{n_y+1} \,,\\
	&\diagon{S,b,\text S}'' = \dx \dz \ \tilde{\I}_{n_z+1} \otimes \tilde{\I}_{n_x+1}\,,
	&\diagon{S,b,\text N}'' = \dx \dz \ \tilde{\I}_{n_z+1} \otimes \tilde{\I}_{n_x+1} \,,\\
	&\diagon{S,b,\text B}'' = \dx \dy \ \tilde{\I}_{n_y+1} \otimes \tilde{\I}_{n_x+1}\,,
	&\diagon{S,b,\text T}'' = \dx \dy \ \tilde{\I}_{n_y+1} \otimes \tilde{\I}_{n_x+1} \,.
\end{aligned}
\end{equation}
\section{Proof of Theorem~\ref{thm:cfl<=cflgen}}
\label{appendix:proof-CFL<=CFLgen}

The proof draws inspiration from the approach in \cite{edelvik2004general}, where positive definiteness was shown for a matrix analogous to $\Pbig$, but arising from FDTD for Maxwell's equations and associated with stored energy. In particular, in \cite{edelvik2004general}, the contribution of individual primary cells to a quadratic form analogous to \eqref{eq:probability} was considered in order to show that the conventional CFL limit for FDTD for Maxwell's equations guarantees positive definiteness of the matrix associated with the entire region.

\begin{lemma}
	\label{lemma:probability-sum-over-cells}
	Consider a region described by FDTD-Q equations \eqref{eq:cmf-a}--\eqref{eq:cmf-b} with $n_x \times n_y \times n_z$ primary cells. Let $\Pbig$ be the matrix corresponding to the entire region. Let $\PSIbig$ be given by 
	\begin{equation}
		\PSIbig = \begin{bmatrix} \PSI_R \\ \PSI_I \end{bmatrix}\,,
	\end{equation}
	where $\PSI_R, \PSI_I \in \mathbb{R}^{(n_x+1)(n_y+1)(n_z+1) \times 1}$ are arbitrary vectors with each element corresponding to a primary node in the region. Let $\Pbig_{ijk}$ be the matrix defined in the same way as $\Pbig$ but for a single-cell region formed by an individual primary cell with the bottom-south-west corner at the node $(i,j,k)$ and the top-north-east corner at the node $(i+1,j+1,k+1)$. Let $\PSIbig_{ijk}$ be a vector formed by selecting the elements of $\PSIbig$ that correspond to the nodes on the corners of that cell. Then, 
	\begin{equation}
		\label{eq:probability-sum-over-cells}
		\PSIbig^T \Pbig \PSIbig = \sum_{i=1}^{n_x}\sum_{j=1}^{n_y}\sum_{k=1}^{n_z} \PSIbig_{ijk}^T \Pbig_{ijk} \PSIbig_{ijk}\,.
	\end{equation}
\end{lemma}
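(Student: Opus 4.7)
The plan is to expand the full quadratic form $\PSIbig^T \Pbig \PSIbig$ into a sum over primary nodes and primary edges, then show that each nodal contribution splits evenly among the (at most eight) primary cells sharing that node, and each edge contribution splits evenly among the (at most four) primary cells sharing that edge. These splits will reassemble into exactly the single-cell quadratic forms $\PSIbig_{ijk}^T \Pbig_{ijk} \PSIbig_{ijk}$.

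First I would use \eqref{eq:mtx-P} and \eqref{eq:hh} to write
\begin{equation}
\PSIbig^T \Pbig \PSIbig
= \PSI_R^T \volSec \PSI_R
+ \PSI_I^T \volSec \PSI_I
- \dfrac{\dt}{2m\hbar}\,\PSI_I^T \D \areaSec (\lenPrim)^{-1}\D^T \PSI_R
- \dfrac{\dt}{\hbar}\,\PSI_I^T \volSec \diagon{U} \PSI_R \,.
\end{equation}
Using the structure of $\D$ (one $+1$ at the tail and one $-1$ at the head of each primary edge), the third term becomes a sum over primary edges of the form $(\Delta S''/\Delta \ell')\,\Delta_e \psi_R\,\Delta_e \psi_I$, where $\Delta_e$ denotes the difference of a nodal quantity along edge $e$. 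The remaining three terms are sums over primary nodes, each weighted by the associated secondary cell volume $\Delta V''|_{p,q,r}$.

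Next I would perform the bookkeeping that is the heart of the lemma. For any primary node $(p,q,r)$ let $N^{\text{node}}_{p,q,r}$ be the number of primary cells containing it as a corner; this equals $8$ in the interior, $4$ on a face, $2$ on an edge of the region, and $1$ at a corner of the region. Inspecting the definition of $\volSec$ in Fig.~\ref{fig:primary-secondary} gives the identity $\Delta V''|_{p,q,r} = N^{\text{node}}_{p,q,r}\cdot \tfrac{1}{8}\dx\dy\dz$ in every case. Analogously, for each primary edge $e$ let $N^{\text{edge}}_e$ denote the number of primary cells containing $e$; this is $4$, $2$, or $1$, and a direct check shows that the secondary face area pierced by $e$ satisfies $\Delta S''_e/\Delta \ell'_e = N^{\text{edge}}_e \cdot \tfrac{1}{4}\,(\text{transverse area})/\,(\text{edge length})$, where the ``per-cell'' weight $\tfrac{1}{4}$ is exactly the value arising in a single-cell region. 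Substituting these identities into the expanded form of $\PSIbig^T \Pbig \PSIbig$, each nodal term is split as $\sum_{\text{cells containing the node}} \tfrac{1}{8}\dx\dy\dz(\cdots)$ and each edge term as $\sum_{\text{cells containing the edge}} \tfrac{1}{4}(\cdots)$.

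Finally I would swap the order of summation so that the outer sum runs over primary cells $(i,j,k)$ and the inner sums collect the $8$ nodal contributions and $12$ edge contributions of cell $(i,j,k)$. By construction, this inner sum is precisely the expansion of $\PSIbig_{ijk}^T \Pbig_{ijk} \PSIbig_{ijk}$ one obtains by applying the same formula \eqref{eq:mtx-P}--\eqref{eq:hh} to the single-cell region spanning $(i,j,k)$ to $(i{+}1,j{+}1,k{+}1)$, whose $\volSec_{ijk}$ carries $\tfrac{1}{8}\dx\dy\dz$ on the diagonal and whose $\areaSec_{ijk}$ carries the $\tfrac{1}{4}$-weighted face areas. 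This establishes \eqref{eq:probability-sum-over-cells}.

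The main obstacle is the case analysis for the weights: one has to verify that the node-sharing and edge-sharing counts stay consistent with the halvings of $\Delta V''$ and $\Delta S''$ on faces, edges, and corners of the region simultaneously, and that the sign conventions encoded in $\D$ and in the $\Delta_e$ differences agree between the global and single-cell formulations. Once this combinatorial check is in place, the algebraic identity \eqref{eq:probability-sum-over-cells} follows by simply rearranging sums.
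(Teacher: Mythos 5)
Your proposal is correct and takes essentially the same route as the paper's own (very brief) proof: both expand the quadratic form as a sum over primary nodes and primary edges and match terms between the two sides, with your node-sharing and edge-sharing count identities for $\volSec$ and $\areaSec$ supplying exactly the bookkeeping the paper leaves implicit. One minor algebraic slip: the coefficient of the edge (kinetic) term should be $\dfrac{\dt\,\hbar}{2m}$ rather than $\dfrac{\dt}{2m\hbar}$, since $-\dfrac{\dt}{\hbar}\cdot\dfrac{\hbar^{2}}{2m}=-\dfrac{\dt\,\hbar}{2m}$; this constant is common to both sides of \eqref{eq:probability-sum-over-cells} and does not affect the argument.
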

The proof involves expanding both sides of \eqref{eq:probability-sum-over-cells} as a summation over primary nodes and primary edges and collecting terms on the right hand side associated with the same primary node or primary edge. Then, each term on the left hand side can be shown to have a corresponding term on the right hand side and vice versa.

\begin{lemma}
	\label{lemma:minCFLgenCells<=CFLgen}
	Let $\dt_{\text{CFL,gen}}$ and $\dt_{\text{CFL,gen}}^{(i,j,k)}$ be the generalized CFL limits corresponding, respectively, to the entire region and to a single-cell region formed by the primary cell with the bottom-south-west corner at the node $(i,j,k)$. Then,
	\begin{equation}
		\label{eq:minCFLgenCells<=CFLgen}
		\min_{i,j,k}\dt_{\text{CFL,gen}}^{(i,j,k)} \le \dt_{\text{CFL,gen}}\,.
	\end{equation}
\end{lemma}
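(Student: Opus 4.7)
The plan is to prove the contrapositive statement in the form: for every $\dt$ strictly less than $\min_{i,j,k}\dt_{\text{CFL,gen}}^{(i,j,k)}$ we also have $\dt<\dt_{\text{CFL,gen}}$. By the characterization in Lemma~\ref{lemma:Ppd<=>dt<CFLgen}, which applies uniformly to any FDTD-Q region (multi-cell or single-cell), this is equivalent to showing that if $\Pbig_{ijk}\succ 0$ for every primary cell, then $\Pbig\succ 0$ for the whole region. Lemma~\ref{lemma:probability-sum-over-cells} is precisely designed to bridge the two scales by providing the additive decomposition of the quadratic form.

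First I would fix an arbitrary $\dt<\min_{i,j,k}\dt_{\text{CFL,gen}}^{(i,j,k)}$. Applying Lemma~\ref{lemma:Ppd<=>dt<CFLgen} to each single-cell region, I obtain that $\Pbig_{ijk}\succ 0$ for every $(i,j,k)$ with $1\le i\le n_x$, $1\le j\le n_y$, $1\le k\le n_z$. Next, for an arbitrary vector $\PSIbig\in\mathbb R^{2(n_x+1)(n_y+1)(n_z+1)}$ I would invoke Lemma~\ref{lemma:probability-sum-over-cells} to write
\begin{equation}
\PSIbig^T\Pbig\PSIbig = \sum_{i=1}^{n_x}\sum_{j=1}^{n_y}\sum_{k=1}^{n_z}\PSIbig_{ijk}^T\Pbig_{ijk}\PSIbig_{ijk}.
\end{equation}
Since each summand on the right is nonnegative by the positive definiteness of $\Pbig_{ijk}$, we immediately get $\PSIbig^T\Pbig\PSIbig\ge 0$, i.e.\ $\Pbig\succeq 0$.

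To upgrade this to strict positive definiteness, I would argue that the sum can vanish only if $\PSIbig_{ijk}=\mathbf 0$ for every primary cell, because each $\Pbig_{ijk}$ is strictly positive definite. Every primary node in the region is a corner of at least one primary cell (this is the only geometric fact needed, and it holds by construction of the $n_x\times n_y\times n_z$ cell partition), so the vanishing of $\PSIbig_{ijk}$ for all cells forces every entry of $\PSIbig$ to be zero. Consequently $\PSIbig^T\Pbig\PSIbig=0$ implies $\PSIbig=\mathbf 0$, which gives $\Pbig\succ 0$. A final application of Lemma~\ref{lemma:Ppd<=>dt<CFLgen} to the whole region yields $\dt<\dt_{\text{CFL,gen}}$. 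Taking the supremum of such $\dt$ proves the inequality \eqref{eq:minCFLgenCells<=CFLgen}.

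I expect the proof to be conceptually straightforward once Lemma~\ref{lemma:probability-sum-over-cells} is available; the only subtlety is the strict positivity step, which requires the geometric observation that every primary node belongs to at least one primary cell so that the cell-wise subvectors $\PSIbig_{ijk}$ collectively cover all components of $\PSIbig$. The main bookkeeping effort lies in Lemma~\ref{lemma:probability-sum-over-cells} itself (establishing the additive decomposition of the quadratic form by matching terms node-by-node and edge-by-edge across cells), but that is assumed here.
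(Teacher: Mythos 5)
Your proposal is correct and follows essentially the same route as the paper: apply Lemma~\ref{lemma:Ppd<=>dt<CFLgen} to each single-cell region, use the additive decomposition of Lemma~\ref{lemma:probability-sum-over-cells} to conclude $\Pbig\succ 0$, and then apply Lemma~\ref{lemma:Ppd<=>dt<CFLgen} once more to the full region. Your explicit remark that every primary node lies in at least one primary cell (so that vanishing of all $\PSIbig_{ijk}$ forces $\PSIbig=\mathbf 0$) makes precise a step the paper leaves implicit, but it is the same argument.
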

\begin{proof}
	Consider matrices $\Pbig$ and $\Pbig_{ijk}$ and vectors $\PSIbig$ and $\PSIbig_{ijk}$ as described in Lemma~\ref{lemma:probability-sum-over-cells}.
	Consider a time step $\dt$ such that 
	\begin{equation}
		\dt < \min_{i,j,k} \dt_{\text{CFL,gen}}^{(i,j,k)}  \,.
	\end{equation}
	With this time step, by Lemma~\ref{lemma:Ppd<=>dt<CFLgen} applied to the single-cell regions, 
	\begin{equation}
		\Pbig_{ijk} \succ 0 , \quad 
		\forall i \in {1, 2, \hdots, n_x}, \
		\forall j \in {1, 2, \hdots, n_y}, \
		\forall k \in {1, 2, \hdots, n_z}\,,
	\end{equation}
	and for each $i$, $j$, and $k$, we have
	\begin{equation}
		\begin{cases}
			\PSIbig_{ijk}^T \Pbig_{ijk} \PSIbig_{ijk} > 0, & \forall \PSIbig_{ijk} \in \mathbb{R}^{16\times 1}~\text{where}~\PSIbig_{ijk}\ne \mathbf 0\\
			\PSIbig_{ijk}^T \Pbig_{ijk} \PSIbig_{ijk} = 0, & \text{if}~\PSIbig_{ijk} = \mathbf 0
		\end{cases}\,.
	\end{equation}
	By Lemma~\ref{lemma:probability-sum-over-cells}, 
	\begin{equation}
		\PSIbig^T \Pbig \PSIbig > 0 \,,
		\quad 
		\forall \PSIbig \in \mathbb{R}^{2(n_x+1)(n_y+1)(n_z+1)\times 1}~\text{with}~\PSIbig\ne \mathbf 0\,,
	\end{equation}
	which means that $\Pbig$ is positive definite. Hence, by Lemma~\ref{lemma:Ppd<=>dt<CFLgen}, $\dt < \dt_{\text{CFL,gen}}$.
	
	In conclusion, the following implication holds
	\begin{equation}
		\dt < \min_{i,j,k} \dt_{\text{CFL,gen}}^{(i,j,k)} 
		\quad \implies \quad
		\dt < \dt_{\text{CFL,gen}}\,.
	\end{equation}
	This is only possible if \eqref{eq:minCFLgenCells<=CFLgen} is true. 		
\end{proof}

\begin{lemma}
	\label{lemma:CFL<=CFLgen-single-cell}
	Let $\dt_{\text{CFL}}^{(i,j,k)}$ and $\dt_{\text{CFL,gen}}^{(i,j,k)}$ be the CFL limit and the generalized CFL limit for a single-cell region composed of a primary cell with the bottom-south-west corner at the node $(i,j,k)$.
	Then, 
	\begin{equation}
		\label{eq:CFLsingleCell<=CFLgenSingleCell}
		\dt_{\text{CFL}}^{(i,j,k)} \le \dt_{\text{CFL,gen}}^{(i,j,k)}\,.
	\end{equation}
\end{lemma}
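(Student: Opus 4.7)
The plan is to reduce the single-cell problem to two manageable pieces by splitting $\HH$ into its kinetic and potential parts, computing the spectral radius of each in closed form, and combining via the triangle inequality. First I would exploit the fact that for a single-cell region the secondary-volume matrix simplifies to $\volSec = (\dx\dy\dz/8)\I_8$, a scalar multiple of the identity. Consequently, using $\HH$ from \eqref{eq:hh}, we can write
\begin{equation*}
\frac{1}{\hbar}(\volSec)^{-\frac12}\HH(\volSec)^{-\frac12} \;=\; \mathbf{A} + \mathbf{B},
\qquad \mathbf{A} = \frac{\hbar}{2m}(\volSec)^{-\frac12}\D\areaSec(\lenPrim)^{-1}\D^T(\volSec)^{-\frac12},
\qquad \mathbf{B} = \frac{1}{\hbar}\diagon{U}.
\end{equation*}
Since $\mathbf{A}$ and $\mathbf{B}$ are real symmetric, $\rho(\cdot)=\|\cdot\|_2$ on each, and the triangle inequality gives $\rho(\mathbf{A}+\mathbf{B})\le\rho(\mathbf{A})+\rho(\mathbf{B})$. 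The potential piece is immediate: $\rho(\mathbf{B})=\max_{i',j',k'}|U|_{i',j',k'}|/\hbar$, where the maximum runs over the eight corners of the cell, which matches the potential contribution in $1/\dt_{\text{CFL}}^{(i,j,k)}$.

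The main work is evaluating $\rho(\mathbf{A})$ explicitly. I would specialize the Kronecker factorizations of $\D_x,\D_y,\D_z,\areaSec,\lenPrim$ from Appendix~\ref{appendix:indexing-and-cmf-expressions} to $n_x=n_y=n_z=1$. Each $\D_{\alpha}$ reduces to (minus) a Kronecker product of identities with $W_1^T = [-1;1]$, whence $\D_\alpha\D_\alpha^T$ is a Kronecker product of two $\I_2$'s with $\mathbf{L}=\bigl[\begin{smallmatrix}1&-1\\-1&1\end{smallmatrix}\bigr]$. After absorbing the constant diagonal entries of $\areaSec$ and $\lenPrim$ and the scalar $(\volSec)^{-1}$, one obtains
\begin{equation*}
(\volSec)^{-1}\D\areaSec(\lenPrim)^{-1}\D^T \;=\; \tfrac{2}{(\dx)^2}\I_2\!\otimes\!\I_2\!\otimes\!\mathbf{L} \;+\; \tfrac{2}{(\dy)^2}\I_2\!\otimes\!\mathbf{L}\!\otimes\!\I_2 \;+\; \tfrac{2}{(\dz)^2}\mathbf{L}\!\otimes\!\I_2\!\otimes\!\I_2,
\end{equation*}
which is similar to $(\volSec)^{-1/2}\D\areaSec(\lenPrim)^{-1}\D^T(\volSec)^{-1/2}$ and hence has the same spectrum. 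The three summands commute and are simultaneously diagonalizable; since $\mathbf{L}$ has eigenvalues $\{0,2\}$, the eigenvalues of the sum are $\tfrac{2}{(\dx)^2}\mu_x+\tfrac{2}{(\dy)^2}\mu_y+\tfrac{2}{(\dz)^2}\mu_z$ with $\mu_\alpha\in\{0,2\}$. The maximum is $4\bigl(\tfrac{1}{(\dx)^2}+\tfrac{1}{(\dy)^2}+\tfrac{1}{(\dz)^2}\bigr)$, and therefore $\rho(\mathbf{A}) = \tfrac{2\hbar}{m}\bigl(\tfrac{1}{(\dx)^2}+\tfrac{1}{(\dy)^2}+\tfrac{1}{(\dz)^2}\bigr)$.

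Assembling the two spectral-radius bounds yields $\rho\bigl(\tfrac{1}{\hbar}(\volSec)^{-1/2}\HH(\volSec)^{-1/2}\bigr)\le 2/\dt_{\text{CFL}}^{(i,j,k)}$, and inverting this through the definition~\eqref{eq:cfl-gen} gives $\dt_{\text{CFL,gen}}^{(i,j,k)}\ge \dt_{\text{CFL}}^{(i,j,k)}$, proving \eqref{eq:CFLsingleCell<=CFLgenSingleCell}. The only nontrivial step is the Kronecker-product diagonalization of the discrete Laplacian on the 8-node cube, but this is essentially a textbook computation once the Appendix~\ref{appendix:indexing-and-cmf-expressions} factorizations are invoked; the triangle-inequality bound then makes the potential term fall out for free, and no further cleverness is needed.
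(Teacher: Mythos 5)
Your proposal is correct and follows essentially the same route as the paper: split $\frac{1}{\hbar}(\volSec)^{-\frac12}\HH(\volSec)^{-\frac12}$ into the kinetic part plus $\frac{1}{\hbar}\diagon{U}$, diagonalize the kinetic part via the Kronecker structure using the eigenvalues $\{0,2\}$ of $\W_1^T\W_1$ to get the spectral radius $\frac{2\hbar}{m}\left(\frac{1}{(\dx)^2}+\frac{1}{(\dy)^2}+\frac{1}{(\dz)^2}\right)$, and bound the sum by the triangle inequality for the 2-norm of symmetric matrices. The only cosmetic difference is that the paper writes out the explicit orthonormal eigenvector matrix $\mathbf V$ rather than appealing to simultaneous diagonalizability of the commuting summands.
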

\begin{proof}
	With reference to Appendix~\ref{appendix:indexing-and-cmf-expressions}, 
	\begin{equation}
		\diagon{V,ijk}'' = \dfrac{\dx \dy \dz}{8} \I_8\,,
	\end{equation}
	\begin{equation}
		\diagon{S,ijk}'' = \operatorname{diag}\left(
		\dfrac{\dy \dz}{4} \I_4,
		\dfrac{\dx \dz}{4} \I_4,
		\dfrac{\dx \dy}{4} \I_4
		\right)\,,
	\end{equation}
	\begin{equation}
		\diagon{l,ijk}' = \operatorname{diag}\left(
		\dx\ \I_4,
		\dy\ \I_4,
		\dz\ \I_4
		\right)\,,
	\end{equation}
	\begin{equation}
		\D_{ijk} = -\begin{bmatrix}
			\I_2 \otimes \I_2 \otimes \W_1^T
			&
			\I_2 \otimes \W_1^T \otimes \I_2
			&
			\W_1^T \otimes \I_2 \otimes \I_2
		\end{bmatrix}\,,
	\end{equation}
	where subscripts $ijk$ indicate that a matrix corresponds to the single-cell region. Define matrix $\mathbf{\Sigma}_{ijk}$ as
	\begin{multline}
		\mathbf{\Sigma}_{ijk} = \dfrac{1}{\hbar} (\diagon{V,ijk}'')^{-\frac12} \HH_{ijk} (\diagon{V,ijk}'')^{-\frac12}
		= 
		\dfrac{\hbar}{2m} (\diagon{V,ijk}'')^{-\frac12} \D_{ijk} \areaSec (\diagon{l,ijk}')^{-1} \D_{ijk}^T (\volSec)^{-\frac12}
		+
		\dfrac{1}{\hbar} \diagon{U,ijk}
		\\
		=\dfrac{\hbar}{m} 
		\left[
		\dfrac{1}{(\dx)^2}(\I_2 \otimes \I_2 \otimes \W_1^T \W_1)
		+ \dfrac{1}{(\dy)^2}(\I_2 \otimes \W_1^T \W_1 \otimes \I_2) 
		+ \dfrac{1}{(\dz)^2}(\W_1^T \W_1 \otimes \I_2 \otimes \I_2)
		\right]
		+ \dfrac{1}{\hbar} \diagon{U,ijk} \,.
	\end{multline}
	Let $\mathbf V$ be the following matrix:
	\begin{equation}
		\mathbf V = 
		\begin{bmatrix}
			\mathbf v_0 & \mathbf v_x & \mathbf v_y & \mathbf v_z & \mathbf v_{yz} & \mathbf v_{xz} & \mathbf v_{xy} & \mathbf v_{xyz}
		\end{bmatrix}\,,
	\end{equation}
	where
	\begin{equation}
		\begin{aligned}
			&\mathbf v_{0} = (\sqrt 8)^{-1} |\W_1^T| \otimes |\W_1^T| \otimes |\W_1^T|\,,
			&\mathbf v_{x} = (\sqrt 8)^{-1} |\W_1^T| \otimes |\W_1^T| \otimes \W_1^T\,,\\
			&\mathbf v_{y} = (\sqrt 8)^{-1} |\W_1^T| \otimes \W_1^T \otimes |\W_1^T|\,,
			&\mathbf v_{z} = (\sqrt 8)^{-1} \W_1^T \otimes |\W_1^T| \otimes |\W_1^T|\,,\\
			&\mathbf v_{yz} = (\sqrt 8)^{-1} \W_1^T \otimes \W_1^T \otimes |\W_1^T|\,,
			&\mathbf v_{xz} = (\sqrt 8)^{-1} \W_1^T \otimes |\W_1^T| \otimes \W_1^T\,,\\
			&\mathbf v_{xy} = (\sqrt 8)^{-1} |\W_1^T| \otimes \W_1^T \otimes \W_1^T\,,
			&\mathbf v_{xyz} = (\sqrt 8)^{-1} \W_1^T \otimes \W_1^T \otimes \W_1^T\,.
		\end{aligned}
	\end{equation}	
	Vectors $|\W_1^T|$ and $\W_1^T$ are eigenvectors of $\W_1^T \W_1$ with eigenvalues 0 and 2, respectively. Using this fact, 
	\begin{multline}
		\mathbf{\Sigma}_{ijk} \mathbf V =
		\dfrac{2\hbar}{m}\mathbf V\operatorname{diag}\bigg(
		0, 
		\frac1{(\dx)^2}, \frac1{(\dy)^2}, \frac{1}{(\dz)^2}, 
		\frac1{(\dy)^2}\!+\!\frac{1}{(\dz)^2}, \frac1{(\dx)^2}\!+\!\frac{1}{(\dz)^2}, \frac1{(\dx)^2}\!+\!\frac{1}{(\dy)^2}, \\
		\frac1{(\dx)^2}\!+\!\frac1{(\dy)^2}\!+\!\frac{1}{(\dz)^2}
		\bigg) 
		+ \frac1{\hbar} \diagon{U,ijk} \mathbf V\,.
	\end{multline}
	Matrix $\mathbf V$ can be easily shown to satisfy $\mathbf V^T \mathbf V = \I = \mathbf V \mathbf V^T$.
	Thus, 
	\begin{multline}
		\mathbf{\Sigma}_{ijk} =
		\dfrac{2\hbar}{m}\mathbf V\operatorname{diag}\bigg(
		0, 
		\frac1{(\dx)^2}, \frac1{(\dy)^2}, \frac{1}{(\dz)^2}, 
		\frac1{(\dy)^2}\!+\!\frac{1}{(\dz)^2}, \frac1{(\dx)^2}\!+\!\frac{1}{(\dz)^2}, \frac1{(\dx)^2}\!+\!\frac{1}{(\dy)^2}, \\
		\frac1{(\dx)^2}\!+\!\frac1{(\dy)^2}\!+\!\frac{1}{(\dz)^2}
		\bigg)  \mathbf V^T
		+ \frac1{\hbar} \diagon{U,ijk}\,.
	\end{multline}
	Hence $\mathbf{\Sigma}_{ijk}$ is a sum of two symmetric matrices. The first matrix has the 2-norm equal to $(2\hbar/m)((\dx)^{-2}+(\dy)^{-2}+(\dz)^{-2})$. As a result~\cite{golub-matrix-comput-4ed},
	\begin{equation}
		\rho(\mathbf{\Sigma}_{ijk}) = ||\mathbf{\Sigma}_{ijk}||_2 \le 
		\dfrac{2\hbar}{m}\left(
		\frac1{(\dx)^2}+\frac1{(\dy)^2}+\frac{1}{(\dz)^2}
		\right)
		+ \frac1{\hbar} \left|\left|\diagon{U,ijk}\right|\right|_2
	\end{equation}
	and
	\begin{equation}
		\label{eq:proofDtCflSingleCell<=dtCflGenSingleCellQED}
		\dfrac{2}{\dfrac{2\hbar}{m}\left(
			\dfrac1{(\dx)^2}+\dfrac1{(\dy)^2}+\dfrac{1}{(\dz)^2}
			\right)
			+ \dfrac1{\hbar} \left|\left| \diagon{U,ijk}\right|\right|_2} \le \dfrac{2}{\rho\left(\mathbf{\Sigma}_{ijk}\right)}\,,
	\end{equation}
	where
	\begin{equation}
		\left|\left|\diagon{U,ijk}\right|\right|_2 = \max\left(
		\left| U|_{i,j,k} \right|, \left| U|_{i+1,j,k} \right|, 
		\left| U|_{i,j+1,k} \right|, \left| U|_{i+1,j+1,k} \right|,
		\left| U|_{i,j,k+1} \right|, \left| U|_{i+1,j,k+1} \right|, 
		\left| U|_{i,j+1,k+1} \right|, \left| U|_{i+1,j+1,k+1} \right|
		\right)\,.
	\end{equation}
	Inequality \eqref{eq:proofDtCflSingleCell<=dtCflGenSingleCellQED} proves \eqref{eq:CFLsingleCell<=CFLgenSingleCell} by the definition of the two time steps in \eqref{eq:CFLsingleCell<=CFLgenSingleCell}.
\end{proof}

\begin{table}\centering 
	\caption{Relative difference of the CFL limit \eqref{eq:cfl} and the generalized CFL limit \eqref{eq:cfl-gen} in the scenarios considered in Appendix~\ref{appendix:proof-CFL<=CFLgen}. }\label{tab:cfl-vs-cflgen}
	\begin{tabular}{|c|c|c|}\hline
		Potential &	$n_x = n_y = n_z = 1$ & $n_x = n_y = n_z = 10$\\\hline\hline
		$U|_{i,j,k} = \text 0$					&~2.49$\times$10$^{-16}$ 	&0 	\\\hline
		\hline
		$U|_{i,j,k} = \text{0.1 eV}$	&$-$1.84$\times$10$^{-16}$ 	&~1.47$\times$10$^{-15}$ 	\\\hline
		$U|_{i,j,k} = \text{1 eV}	$	&$-$1.81$\times$10$^{-16}$ 	&$-$1.81$\times$10$^{-16}$ 	\\\hline
		$U|_{i,j,k} = \text{10 eV}$		& ~1.91$\times$10$^{-16}$ 	&~9.56$\times$10$^{-16}$ 	\\\hline
		\hline
		$U|_{i,j,k}= - \text{0.1 eV}$	&$-$6.51$\times$10$^{-1}$ 	&$-$6.51$\times$10$^{-1}$ \\\hline
		$U|_{i,j,k}= - \text{1 eV}$		&$-$1.72$\times$10$^{-1}$ 	&$-$1.72$\times$10$^{-1}$ 	\\\hline
		$U|_{i,j,k}= - \text{10 eV}$	&$-$2.03$\times$10$^{-2}$ 	&$-$2.03$\times$10$^{-2}$ 	\\\hline
		\hline
		$0 < U|_{i,j,k} < \text{0.1 eV}$ (randomized)	&$-$5.13$\times$10$^{-2}$ 	&$-$9.18$\times$10$^{-2}$ 	\\\hline
		$0 < U|_{i,j,k} < \text{1 eV}$ (randomized)		&$-$8.05$\times$10$^{-2}$ 	&$-$4.96$\times$10$^{-2}$ 	\\\hline
		$0 < U|_{i,j,k} < \text{10 eV}$ (randomized)	&$-$1.15$\times$10$^{-2}$ 	&$-$1.01$\times$10$^{-2}$ 	\\\hline
		\hline
		$- \text{0.1 eV} < U|_{i,j,k} < 0$ (randomized)&$-$3.69$\times$10$^{-1}$ 	&$-$4.24$\times$10$^{-1}$ 	\\\hline
		$- \text{1 eV} < U|_{i,j,k} < 0$ (randomized) &$-$2.55$\times$10$^{-1}$ 	&$-$2.28$\times$10$^{-1}$ 	\\\hline
		$- \text{10 eV} < U|_{i,j,k} < 0$ (randomized)	&$-$3.52$\times$10$^{-2}$ 	&$-$3.04$\times$10$^{-2}$ 	\\\hline		
	\end{tabular}
\end{table}

\begin{corollary}
	\label{corollary:CFL<=minCFLgenCells}
	\begin{equation}
		\dt_{\text{CFL}} \le \min_{i,j,k} \dt_{\text{CFL,gen}}^{(i,j,k)}\,,
	\end{equation}
	where $\dt_{\text{CFL}}$ is the CFL limit for the entire region given by \eqref{eq:cfl}.
\end{corollary}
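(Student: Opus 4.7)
The plan is to chain together the cell-level inequality in Lemma~\ref{lemma:CFL<=CFLgen-single-cell} with a direct comparison of the classical CFL expression \eqref{eq:cfl} for the full region against the classical CFL expression for each individual single-cell region. The argument should be essentially an observation about the maxima of $|U|$, rather than any new spectral computation.

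First I would write out $\dt_{\text{CFL}}^{(i,j,k)}$ explicitly using the same formula \eqref{eq:cfl} but applied to the single-cell region whose corners are the eight primary nodes of the primary cell at $(i,j,k)$. The kinetic part of the denominator, namely $(2\hbar/m)\bigl((\dx)^{-2}+(\dy)^{-2}+(\dz)^{-2}\bigr)$, is identical to the one appearing in the full-region $\dt_{\text{CFL}}$, because all cells share the same $\dx,\dy,\dz$. The only difference lies in the potential term: for $\dt_{\text{CFL}}$ the maximum of $|U|$ is taken over all primary nodes of the whole region, whereas for $\dt_{\text{CFL}}^{(i,j,k)}$ it is taken only over the eight corners of that particular cell (this is exactly $\|\diagon{U,ijk}\|_2$ as identified in the proof of Lemma~\ref{lemma:CFL<=CFLgen-single-cell}).

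Since the eight corners of cell $(i,j,k)$ form a subset of the primary nodes of the whole region,
\begin{equation}
\bigl\|\diagon{U,ijk}\bigr\|_2 \;\le\; \max_{i',j',k'}\bigl|U|_{i',j',k'}\bigr|,
\end{equation}
and therefore the denominator in the expression for $\dt_{\text{CFL}}$ is at least as large as the denominator in the expression for $\dt_{\text{CFL}}^{(i,j,k)}$. Hence $\dt_{\text{CFL}} \le \dt_{\text{CFL}}^{(i,j,k)}$ for every $(i,j,k)$. Combining this with Lemma~\ref{lemma:CFL<=CFLgen-single-cell}, which gives $\dt_{\text{CFL}}^{(i,j,k)} \le \dt_{\text{CFL,gen}}^{(i,j,k)}$, yields $\dt_{\text{CFL}} \le \dt_{\text{CFL,gen}}^{(i,j,k)}$ for every $(i,j,k)$, and taking the minimum over $(i,j,k)$ on the right-hand side gives the claim.

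There is no serious obstacle here: all the real work was done in Lemma~\ref{lemma:CFL<=CFLgen-single-cell} via the eigen-decomposition of $\mathbf{\Sigma}_{ijk}$. The only point to double-check is the ``max over subset'' observation for the potential term, which requires noting that $\|\diagon{U,ijk}\|_2$ is the maximum absolute diagonal entry of a diagonal matrix whose diagonal entries are exactly the values of $U$ at the eight corners of the cell. Once this is stated, Theorem~\ref{thm:cfl<=cflgen} follows immediately by composing this corollary with Lemma~\ref{lemma:minCFLgenCells<=CFLgen}.
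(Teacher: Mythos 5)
Your proof is correct and follows essentially the same route as the paper: the paper's proof simply asserts the chain $\dt_{\text{CFL}} \le \dt_{\text{CFL}}^{(i,j,k)} \le \dt_{\text{CFL,gen}}^{(i,j,k)}$ for every cell and takes the minimum, with the second inequality being Lemma~\ref{lemma:CFL<=CFLgen-single-cell}. Your explicit justification of the first inequality (identical kinetic term, and $\|\diagon{U,ijk}\|_2 \le \max_{i',j',k'}|U|_{i',j',k'}|$ because the eight cell corners form a subset of the region's primary nodes) is exactly the reasoning the paper leaves implicit.
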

\begin{proof}
	\begin{equation}
		\label{eq:proof-step-CFL<=minCFLgenCells}
		\dt_{\text{CFL}} \le \dt_{\text{CFL}}^{(i,j,k)} \le \dt_{\text{CFL,gen}}^{(i,j,k)} \quad
		\forall i \in {1, 2, \hdots, n_x}, \
		\forall j \in {1, 2, \hdots, n_y}, \
		\forall k \in {1, 2, \hdots, n_z}\,.
	\end{equation}
	The statement of the corollary follows directly from \eqref{eq:proof-step-CFL<=minCFLgenCells}.
\end{proof}
\begin{proof}[Proof of Theorem~\ref{thm:cfl<=cflgen}.]
	Combining the statements of Lemma~\ref{lemma:minCFLgenCells<=CFLgen} and Corollary~\ref{corollary:CFL<=minCFLgenCells},
	\begin{equation}
		\dt_{\text{CFL}} \le \min_{i,j,k}\dt_{\text{CFL,gen}}^{(i,j,k)} \le \dt_{\text{CFL,gen}}\,,
	\end{equation}
	which proves the theorem.
\end{proof}

Table~\ref{tab:cfl-vs-cflgen} shows the relative difference between the two time steps in \eqref{eq:cfl} and \eqref{eq:cfl-gen}, computed as $(\dt_{\text{CFL}}-\dt_{\text{CFL,gen}})/\dt_{\text{CFL,gen}}$.
The cell dimensions were taken as $\dx = $~1~nm, $\dy$~=~2~nm, and $\dz = $~3~nm. The mass of the particle was that of an electron. As expected, for the regions consisting of a single cell with constant nonnegative potential $U$ at each of the eight nodes, the two time steps $\dt_{\text{CFL}}$ and $\dt_{\text{CFL, gen}}$ were the same, with small discrepancy due to machine precision.
Interestingly, the two time steps were also equal for the multi-cell regions in Table~\ref{tab:cfl-vs-cflgen} with constant nonnegative potential. However, when the potential either had a spacial variation or was negative, the two time steps deviated, although the difference tended to be small. In all regions where the CFL limit \eqref{eq:cfl} and the generalized CFL limit~\eqref{eq:cfl-gen} deviated, the CFL limit~\eqref{eq:cfl} had a lower value, confirming the statement of Theorem~\ref{thm:cfl<=cflgen}.
\section*{Acknowledgments}

The work was in part funded by the Natural Sciences and Engineering Research Council of Canada (NSERC) Discovery Grant Program [funding reference number RGPIN-2019-05060], in part by the Canada Research Chairs Program [funding reference number 950-232062], and in part by the School of Graduate Studies and The Edward S. Rogers Sr. Department of Electrical and Computer Engineering at the University of Toronto.
The authors also would like to thank Alison Okumura for performing preliminary investigations of the total probability and stability properties of FDTD-Q in one dimension.

\bibliographystyle{IEEEtran}
\bibliography{bibliography.bib}

\end{document}